\documentclass[
  aps,
  prx,
  twocolumn,
  superscriptaddress,
  nofootinbib,
  longbibliography
]{revtex4-2}

\usepackage{amsmath,amsthm,amssymb,amsfonts,bm,mathtools}
\usepackage{graphicx}
\usepackage{xcolor}
\usepackage{hyperref}
\usepackage[noend,noline,ruled,linesnumbered]{algorithm2e}

\hypersetup{colorlinks=true,linkcolor=blue,citecolor=blue,urlcolor=blue}
\DeclareMathOperator{\supp}{supp}
\DeclareMathOperator{\Tr}{Tr}

\DeclareMathOperator{\cscop}{csc}
\newcommand{\cB}{\mathcal B}

\newcommand{\cJ}{\mathcal J}
\newcommand{\cL}{\mathcal L}
\newcommand{\cP}{\mathcal P}

\newcommand{\cD}{\mathcal D}

\newcommand{\Id}{\mathsf I}

\newcommand{\E}{\mathbb E}
\newcommand{\R}{\mathbb R}

\newcommand{\cE}{\mathcal E}
\newcommand{\cF}{\mathcal F}

\newcommand{\cA}{\mathcal A}
\newcommand{\cM}{\mathcal M}
\newcommand{\norm}[1]{\left\lVert #1\right\rVert}
\newcommand{\abs}[1]{\left\lvert #1\right\rvert}
\newtheorem{lemma}{Lemma}
\newtheorem{corollary}{Corollary}
\newtheorem{proposition}{Proposition}
\newtheorem{problem}{Problem}

\newtheorem{remark}{Remark}
\newtheorem{theorem}{Theorem}

\newcommand{\pku}{Center on Frontiers of Computing Studies, School of Computer Science, Peking University, Beijing 100871, China}
\newcommand{\bnu}{School of Artificial Intelligence, Beijing Normal University, Beijing 100875, China}

\begin{document}

\title{Efficient Lindbladian Learning from Constant-Time Pauli Responses}

\author{Jiaxing Song}
\email{pkujiaxing@stu.pku.edu.cn}
\affiliation{\pku}
\author{Yukun Zhang}
\email{yukunzhang@stu.pku.edu.cn}
\affiliation{\pku}
\author{Xiao Yuan}
\email{xiaoyuan@pku.edu.cn}
\affiliation{\pku}
\author{Yusen Wu}
\email{yusen.wu@bnu.edu.cn}
\affiliation{\bnu}

\begin{abstract}
Learning the generator of an open many-body system is more challenging than Hamiltonian learning: local responses, which can directly reveal coherent interaction terms in closed-system dynamics, may also contain dissipative contributions in open-system dynamics. In this paper, we address this challenge by developing an efficient Lindbladian learning framework for a known local candidate generator dictionary with bounded dissipative support and either bounded dual-interaction-graph degree or bounded unweighted local strength. The framework resolves the coherent-dissipative ambiguity by treating local Pauli responses as a linear system over both types of generator terms. Inverting this response system separates their contributions and makes the individual Lindbladian coefficients accessible from local response data in a fixed short-time window. Within this framework, we develop two efficient learning algorithms: Chebyshev--Lobatto response interpolation, which uses logarithmically many short evolution times and has a post-mean cost linear in \(M\), with the stated dependence on \(\epsilon\), and Single-time projected response contraction, which uses a single fixed evolution time and globally inverts a truncated response function. Both procedures estimate \(M\) candidate coefficients to entrywise accuracy \(\epsilon\) using \(\widetilde{\mathcal{O}}(M/\epsilon^2)\) sample and classical post-processing complexity. Our theoretical results establish local response inversion as a scalable paradigm for learning, calibrating, and diagnosing complex quantum systems from experimentally accessible short-time data.
\end{abstract}

\maketitle

\begin{figure*}[t!]
    \centering
    \includegraphics[width=\textwidth]{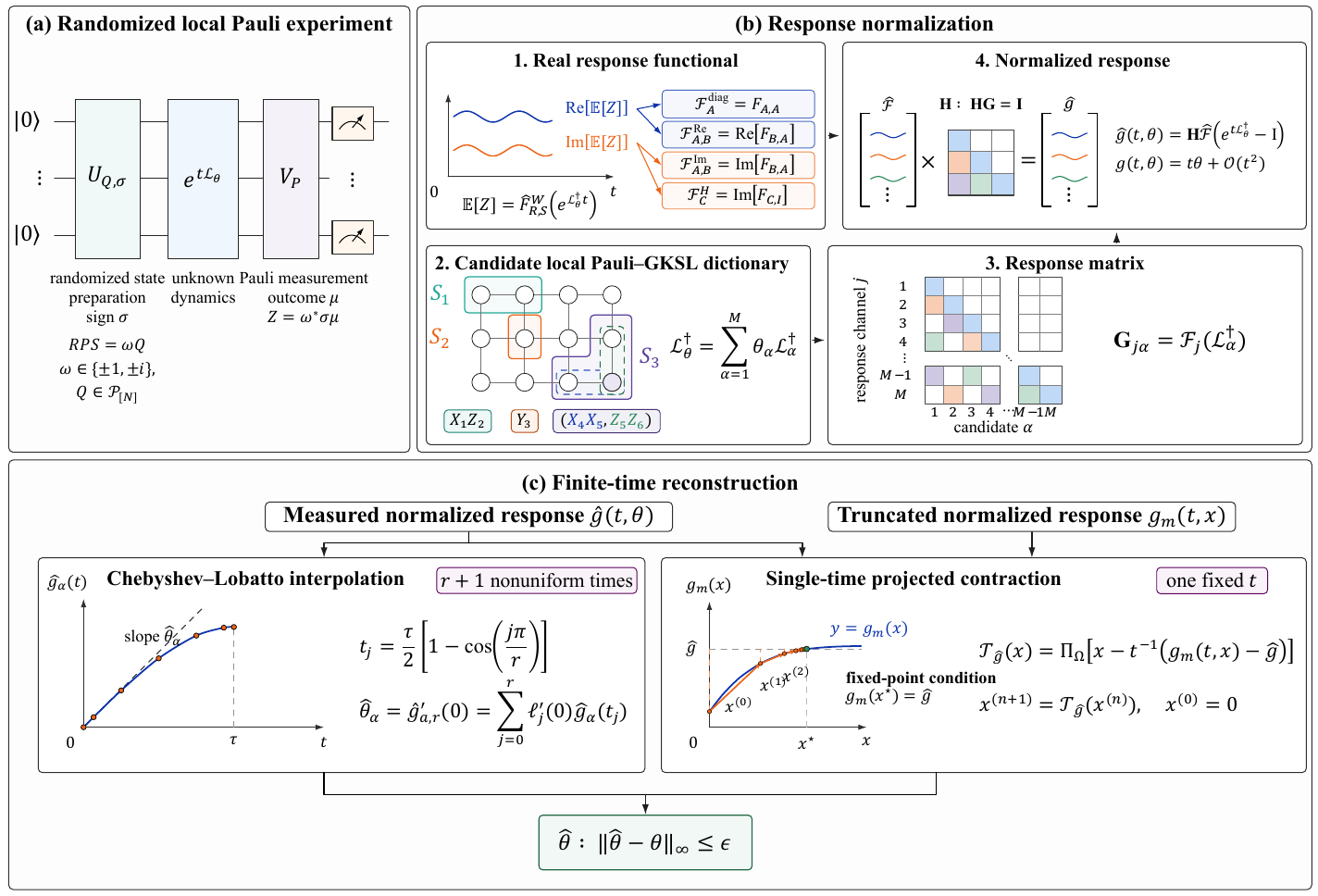}
    \caption{
Overview of the Lindbladian learning framework. (a) Randomized product-Pauli experiments estimate finite-time raw Pauli response
functionals. (b) Response normalization removes the intrinsic mixing between coherent and dissipative generator terms. The measured real response functionals are constructed from raw responses, while the candidate Pauli--GKSL dictionary defines the generator coordinates. At first order in the evolution time, the response matrix
\(\mathbf G\) maps Lindbladian coefficients to the slopes of these raw
responses at \(t=0\). Applying
\(\mathbf H=\mathbf G^{-1}\) removes coherent--dissipative and
common-extension mixing and produces normalized responses whose derivatives
at the origin equal individual generator coefficients. 
(c) Figure~\ref{fig:learning-workflow} summarizes the three logically distinct
stages of the framework: experimental estimation of raw Pauli responses,
dictionary-dependent response normalization through
\(\mathbf H=\mathbf G^{-1}\), and finite-time coefficient reconstruction by
either Chebyshev--Lobatto differentiation or single-time projected inversion.
}
    \label{fig:learning-workflow}
\end{figure*}

\section{Introduction}
Quantum computers are entering regimes in which their behaviors cannot be
characterized and verified by classical computation. Coherent control over large numbers of physical qubits has now been demonstrated across superconducting circuits, trapped ions, and neutral-atom arrays. As these platforms continue to scale, progress in quantum error correction, error mitigation, and device calibration increasingly depends on precise and scalable tools for characterizing many-body quantum dynamics~\citep{arute2019quantum,morvan2024phase,zhong2020quantum,smith2016many,evered2023high,bravyi2024high,acharya2024quantum,kimScalableErrorMitigation2023,o2023purification,kimEvidenceUtilityQuantum2023}. Standard quantum process tomography gives a complete description of a channel, but it requires a number of experimental settings and classical parameters that grows exponentially with the number of qubits~\citep{banaszekFocusQuantumTomography2013,blume-kohoutOptimalReliableEstimation2010,eisertQuantumCertificationBenchmarking2020,grossQuantumStateTomography2010,hradilQuantumstateEstimation1997,maurodarianoQuantumTomography2003,chuangPrescriptionExperimentalDetermination1997,darianoQuantumTomographyMeasuring2001,mohseniQuantumprocessTomographyResource2008}. Scalable quantum process learning must therefore avoid reconstructing the full channel where the underlying generator has a clear structure.

Hamiltonian learning provides a structured alternative to full process tomography when the generator admits a compact parametrization~\citep{gebhartLearningQuantumSystems2023,haahLearningQuantumHamiltonians2024}. Suppose the Hamiltonian governing the quantum dynamics is given by \(H=\sum_a\theta_a C_a\), and the goal is to learn the real-valued coefficients \(\theta_a\) given access to \(e^{-iHt}\) at tunable evolution times \(t\). A common approach in related works~\citep{HTFS2023,BakshiLiuMoitraTang2024,HuEtAl2025,PradenneCotlerHuang2026} is to construct local response functions \(\operatorname{Tr}[\rho P(t)]\) by carefully choosing the input state \(\rho\) and the local observable \(P\), and thereby recover the target coefficient vector \(\theta=(\theta_1,\theta_2,\ldots)^{\mathsf T}\). For each candidate term \(C_a\), one may choose an input state \(\rho_a\) and a local Pauli observable \(P_a\) such that \(\operatorname{Tr}[\rho_a P_a]=0\) and \(\{C_a,P_a\}=0\), yielding the local response function \(\operatorname{Tr}[\rho_a P_a(t)]=t\theta_a+\mathcal{O}(t^2)\). Locality then allows the higher-order corrections to be controlled using cluster-expansion bounds, so the coefficient can be estimated as \(\theta_a\approx \operatorname{Tr}[\rho_a P_a(t)]/t\). For Lindbladian dynamics, however, this coefficient-by-coefficient
identification generally fails. The term linear in \(t\) can receive
contributions from both the Hamiltonian and dissipative parts of the generator.
Moreover, even dissipative terms with different supports can produce the same
local response. Consequently, a local response generally reveals a linear
combination of Lindbladian coefficients rather than a single coefficient.

The preceding discussion suggests that Pauli responses should be analyzed
jointly rather than one at a time. We index the measured responses by \(j\) and
the candidate Lindbladian terms by \(\beta\), and define \(\mathbf G_{j\beta}\) as the
contribution of the \(\beta\)-th candidate term to the part of the \(j\)-th
response that is linear in \(t\). Let
\(\theta=(\theta_1,\ldots,\theta_M)^{\mathsf T}\)
collect all unknown coherent and dissipative coefficients. The measured vector of Pauli
responses from experiments then may take the form
\(t\mathbf G\theta+\mathcal{O}(t^2)\). In this way, both the mixing
between coherent and dissipative terms and the mixing among dissipative terms
are encoded in a known response matrix \(\mathbf G\). However, a response defined on a region \(W\) may also contain contributions from
dissipative terms supported on a larger region. This occurs when the larger
term extends both Pauli labels outside \(W\) by the same Pauli string. In
contrast, a term supported only on \(W\) cannot contribute to a response whose
Pauli labels act nontrivially outside \(W\). 
The mixing is therefore one-directional with respect to support inclusion, so
ordering the dissipative terms by support makes the corresponding block of
\(\mathbf G\) triangular.The
resulting square response matrix is invertible, with
\(\mathbf H=\mathbf G^{-1}\). Under the uniform bounded-overlap conditions
used below, the rows of \(\mathbf H\) contain only a constant number of nonzero
entries and have norms bounded independently of the system size. 
Multiplying the raw responses by \(\mathbf H\) removes both sources of mixing
and defines the normalized responses \(g_\alpha(t,\bm\theta)\). By
construction, the \(\alpha\)-th response satisfies
\(g_\alpha(t,\bm\theta)=t\theta_\alpha+\mathcal O(t^2)\). Thus, the term
linear in \(t\) in each normalized response depends on exactly one
Lindbladian coefficient. This resolves the algebraic identifiability
problem. The remaining problem is analytic: how to recover
\(\theta_\alpha\) from noisy finite-time values of
\(g_\alpha(t,\bm\theta)\).

We develop two complementary methods for extracting the Lindbladian
coefficients from noisy finite-time responses. Chebyshev--Lobatto response interpolation
uses \(\mathcal{O}(\log(1/\epsilon))\) short evolution times and combines the
resulting data through polynomial interpolation to estimate the term linear in
\(t\). Given the empirical response estimates, its classical reconstruction cost is
\(\mathcal O(Mr+r^2)\), where
\(r=\mathcal O(\log(1/\epsilon))\). In contrast, single-time projected response contraction uses data from a
single fixed evolution time and directly inverts the nonlinear finite-time
response map, trading fewer experimental time settings for a classical
reconstruction cost of
\(\widetilde{\mathcal O}(M/\epsilon)\) in the symbolic-Pauli model.
Despite these different tradeoffs, both methods estimate all \(M\) candidate
coefficients to entrywise accuracy \(\epsilon\) using
\(\widetilde{\mathcal O}(M/\epsilon^2)\) experimental shots. Including the
aggregation of individual measurement outcomes, the overall classical
response-processing cost is also
\(\widetilde{\mathcal O}(M/\epsilon^2)\). Taken together, our results extend the local-response paradigm underlying scalable Hamiltonian learning to Markovian open-system dynamics. They provide
a systematic framework for learning, calibrating, and diagnosing interacting
dissipative many-body systems using experimentally accessible Pauli
measurements.

Figure~\ref{fig:learning-workflow} summarizes the complete learning
pipeline. The procedure consists of three stages. First, randomized local
Pauli experiments provide unbiased estimators of experimentally accessible
raw responses. Second, these responses are transformed through a
dictionary-dependent response matrix inversion, which removes the mixing
between coherent and dissipative Lindbladian terms and constructs normalized
responses whose linear coefficients correspond directly to individual
generator parameters. Finally, the remaining nonlinear finite-time dependence
is resolved by either Chebyshev--Lobatto interpolation or single-time projected
response contraction.

\section{Learning Algorithm Outline}
We consider time-independent Markovian dynamics on \(N\) qubits, with the underlying Lindbladian operator given by
\begin{equation}
 \cL_{\bm\theta}^\dagger
 =\sum_{\alpha=1}^{M}\theta_\alpha\cL_\alpha^\dagger,
 \qquad
 \bm\theta\in\Theta_{\mathrm{GKSL}}\subseteq[-1,1]^M .
 \label{eq:model}
\end{equation}
We assume a known linearly independent, nonredundant real Pauli--GKSL
candidate dictionary $\mathcal A=\{\mathcal L_\alpha\}_{\alpha=1}^M$,
with the identity excluded from the dissipative Pauli basis as in
Appendix~\ref{app:param}.  We require the learning algorithm to follow a
simple ``prepare--evolve--measure'' paradigm using only the \(N\) system qubits, without ancillary qubits, and focus on recovering the unknown
parameters \(\bm\theta\) once the dictionary is known.

We assume that every dissipative candidate direction has support size at most
a constant \(S_D\), and write
\(s:=\max_{\alpha\in\mathcal A}
|\operatorname{supp}(\mathcal L_\alpha^\dagger)|\)
for the maximum support size over the full dictionary. The dual interaction
graph has one vertex for each candidate direction and joins two distinct
vertices exactly when their supports overlap; let \(d\) denote its
maximum degree.  We assume either bounded dual-interaction-graph degree or,
alternatively, bounded unweighted local strength, as defined in
Appendix~\ref{app:taylor}.  The latter is an alternative sufficient locality
characterization.  Hamiltonian candidates with support larger than \(S_D\)
are handled by the direct responses of Appendix~\ref{app:responses}. 

\begin{problem}[Lindbladian learning]
\label{problem}
    Given such a nonredundant candidate dictionary $\mathcal A$ for a
    Lindbladian operator \(\mathcal L_{\bm\theta}\)
    (Eq.~\eqref{eq:model}), the task is to estimate the intrinsic coupling
    coefficients \(\bm\theta\) within $\epsilon$ additive error, given access
    to the Lindbladian evolution \(e^{\mathcal{L}_{\bm\theta}t}\) for tunable
    evolution times \(t=\mathcal{O}(1)\).
\end{problem}

The central algebraic obstruction is that the coefficient of \(t\) in a single 
Pauli response need not equal a single generator coordinate. Coherent and 
dissipative candidates can contribute to the same response, and dissipative 
candidates related by a common external Pauli extension can also mix. We 
therefore introduce raw response functionals that expose this mixing as a known 
linear system. For a qubit subset \(W\subseteq[N]\), let \(\cP_W\) denote the 
phase-free
Pauli strings supported inside \(W\), tensored with identities outside \(W\)
when viewed as \(N\)-qubit operators. For \(R,S\in\cP_W\) and a
linear map $\Phi$ acting on $N$ qubits, define the \emph{raw Pauli response}
\begin{equation}
 F_{R,S}^{W}(\Phi)
 =\frac{1}{4^{|W|}}
 \sum_{P\in\cP_W}
 \frac{1}{2^N}\Tr\!\left[(RPS)^\dagger\Phi(P)\right].
 \label{eq:raw-response}
\end{equation}
Suppose the linear map $\Phi$ has a left--right Pauli expansion
\begin{equation}
 \Phi(O)=\sum_{\widetilde R,\widetilde S\in\cP_{[N]}}
 a_{\widetilde R,\widetilde S}\,
 \widetilde R O\widetilde S,
\end{equation}
then Pauli orthogonality gives
\begin{equation}
 F_{R,S}^{W}(\Phi)
 =\sum_{T\in\cP_{W^c}}
 a_{R\otimes T,S\otimes T}.
 \label{eq:common-extension}
\end{equation}

Equation~\eqref{eq:common-extension} shows that a response on \(W\) contains the coefficient associated with \(W\) together with contributions from its larger extensions. This suggests learning the coefficients jointly rather than extracting them one at a time. A candidate with maximal support has no larger extension in the dictionary, so its associated response has no larger-support contamination. Once the maximal-support coefficients are determined, their contributions can be subtracted from responses on smaller supports, and the procedure can be continued recursively. Ordering the candidates from larger to smaller support expresses this elimination as a triangular linear system. To implement the entire elimination as a single matrix inversion, we now construct a real response vector with one entry for each independent real generator coefficient.
Let \(M\) be the number of real candidate coordinates. The raw responses
\(F_{R,S}^W\) are generally complex and occur in conjugate pairs, whereas the
Lindbladian is parameterized by \(M\) independent real coefficients. We
therefore form one real response functional \(\mathcal F_j\) for each real
coordinate. For a Hamiltonian coordinate associated with \(C\), we use
\(
    \mathcal F_j
    =
    (
        F_{C,I}^{\operatorname{supp}(C)}
        -
        F_{I,C}^{\operatorname{supp}(C)}
    )/(2i)
\)
when \(|\operatorname{supp}(C)|\le S_D\), and the direct response
\(\mathcal R_C\) of Appendix~\ref{app:responses} when
\(|\operatorname{supp}(C)|>S_D\). For a diagonal dissipative coordinate
associated with \(A\), we take \(\mathcal F_j=F_{A,A}^W\); and for the real
and imaginary coordinates of an off-diagonal Kossakowski pair \(A<B\), we take
\(
\mathcal F_j=\frac12\left(F_{B,A}^W+F_{A,B}^W\right),
\qquad
\mathcal F_{j'}=-\frac{i}{2}\left(F_{B,A}^W-F_{A,B}^W\right),
\)
respectively. These combinations place the measured responses in one-to-one correspondence with the real Lindbladian coordinates. They do not remove the larger-support contributions; instead, they provide the \(M\) real equations in which those contributions can be represented by the response matrix \(\mathbf G\) and removed by inversion.

The resulting \(M\) coupled equations are summarized by the real response matrix
\begin{equation}
 \mathbf G_{j\beta}=\cF_j(\cL_\beta^\dagger).
 \label{eq:Gdef}
\end{equation}
Writing \(\boldsymbol{\mathcal F}(\cdot)=\{\mathcal F_j(\cdot)\}_{j=1}^M\), the linearity of the generator gives $\boldsymbol{\mathcal F}
\!\left(\mathcal L_{\boldsymbol\theta}^\dagger\right)
=
\mathbf G\boldsymbol\theta$. The diagonal entries of \(\mathbf G\) describe the intended response–coordinate pairings, while its off-diagonal entries record contributions from other candidate directions. Under the support ordering described above, the dissipative block is triangular with nonzero diagonal. The Hamiltonian terms form an identity block, with any dissipative contributions appearing as known off-diagonal entries. Hence the full response matrix is invertible. We define
\begin{equation}
 \mathbf{H}=\mathbf{G}^{-1},
 \qquad
 \mathsf C_\alpha(\cdot)
 =\sum_j\mathbf{H}_{\alpha j}\cF_j(\cdot),
 \label{eq:normalized-response}
\end{equation}
which gives rise to $ \mathsf C_\alpha(\cL_\beta^\dagger)=\delta_{\alpha\beta}$. Thus, each row of \(\mathbf H\) specifies the linear combination of raw responses that subtracts the contributions from the other candidate directions and isolates the coefficient \(\theta_\alpha\). This resolves the generator-level mixing, and the remaining task is to extract these coefficients from responses measured at nonzero evolution times.

Finally, we define the finite-time response function by
\begin{equation}
 g_\alpha(t,\bm\theta)
 :=\mathsf C_\alpha\!\left(e^{t\cL_{\bm\theta}^\dagger}-\Id\right).
 \label{eq:gdef}
\end{equation}
For $\Lambda t<1$, this response admits a controlled multivariate polynomial
approximation in the full vector \(\bm\theta\), whose linear term is
\(t\theta_\alpha\). Here \(\Lambda=2(d+1)\) in the bounded-degree
setting and is the corresponding unified locality scale under the alternative
local-strength assumption. Rigorous bounds are given in
Appendix~\ref{app:response-taylor}. The learning procedure therefore has three stages:
\begin{enumerate}
    \item compute \(\mathbf G\) and \(\mathbf H=\mathbf G^{-1}\) once from
    the known candidate dictionary;
    \item estimate the raw response means and apply \(\mathbf H\) to form the
    normalized finite-time responses \(g_\alpha(t,\bm\theta)\);
    \item recover \(\bm\theta\) using either Chebyshev--Lobatto endpoint
    differentiation or single-time projected response inversion.
\end{enumerate}

\section{Experimental Response Estimation}
Each real raw response admits a bounded unbiased estimator obtained from a
randomized product-Pauli preparation, short-time evolution, and Pauli
measurement. Concretely, sample \(P\in\mathcal P_W\) uniformly and write
\(RPS=\omega(P)Q(P)\), where
\(\omega(P)\in\{\pm1,\pm i\}\). When \(Q(P)\neq I\), the required average
input state \((I+\sigma Q(P))/2^N\), with
\(\sigma\in\{\pm1\}\), is implemented by sampling product Pauli eigenstates
from an appropriate ensemble. The \(Q(P)=I\) branch is implemented by a product-state ensemble with
maximally mixed average. After evolution under
\(e^{t\mathcal L_{\bm\theta}}\), one measures the Pauli observable \(P\) and
records the corresponding phase-weighted outcome \(Z\).

The complete sampling rule is given in the Appendix \ref{app:measurement}. In either branch,
\(|Z|\le1\) and
\(\mathbb E[Z]
=F_{R,S}^{W}(e^{t\mathcal L_{\bm\theta}^\dagger})\). Taking the appropriate
real or imaginary component therefore gives a bounded unbiased estimator of
each real raw response \(\mathcal F_j\). Applying \(\mathbf H\) to the
resulting empirical means produces estimates of the normalized responses.

We define
\(L_C:=\max_\alpha\sum_j|\mathbf H_{\alpha j}|\), which controls the
amplification of statistical errors when the raw responses are combined using
\(\mathbf H\). Under the uniform bounded-overlap conditions
\(S_D,\Lambda=\mathcal O(1)\),
Theorem~\ref{thm:uniform-response-inverse} gives
\(L_C=\mathcal O(1)\). Since \(|Z|\le1\), estimating all \(M\) normalized
response coordinates at a fixed evolution time to coordinatewise error at
most \(\varepsilon_g\), with probability at least \(1-\delta\), requires
\(\mathcal O(M\log(M/\delta)/\varepsilon_g^2)\) independent experimental
shots.
\section{Main Results}

In the informal results below, we work in the uniform bounded-overlap regime
\(S_D,\Lambda,L_C=\mathcal O(1)\). Throughout, classical post-processing
includes aggregating the measurement outcomes into empirical response
estimates and reconstructing \(\widehat{\bm\theta}\) from those estimates. Reconstruction costs are evaluated in the
unit-cost symbolic-Pauli model of
Appendix~\ref{subsec:classical-coefficient-construction}.

\emph{Chebyshev method.---}Since
$g_\alpha(t,\bm\theta)=t\theta_\alpha+\mathcal{O}(t^2)$, the quotient
$g_\alpha(t,\bm\theta)/t$ has $\theta_\alpha$ as its leading term. However,
using a single evolution time leads to a bias--variance tradeoff: a longer
time increases the higher-order bias, whereas a shorter time amplifies the
statistical error by $1/t$. To avoid choosing a single compromise time, we
evaluate the response at the shifted Chebyshev--Lobatto nodes
\begin{equation}
 t_j=\frac{\tau}{2}\left(1-\cos\frac{j\pi}{r}\right),
 \qquad j=0,\ldots,r,
 \qquad \tau=\frac{1}{2\Lambda}.
 \label{eq:nodes-main}
\end{equation}
Let $\widehat g_{\alpha,r}(t)$ be the degree-$r$ interpolant of the measured
response values. We estimate the coefficient as
$\widehat\theta_\alpha:=\widehat g_{\alpha,r}'(0)$.

\begin{theorem}[Informal]
\label{thm:cheb-main}
Consider the \(N\)-qubit Lindbladian learning problem defined in
Problem~\ref{problem}. For \(0<\epsilon,\delta<1\), choose the interpolation
parameters according to the prescription in the Appendix \ref{app:chebyshev}. The resulting
 degree of interpolation satisfies
\(r=\mathcal{O}\!\left(\log(L_C\Lambda/\epsilon)\right)\), corresponding to
the \(r\) nonzero Chebyshev--Lobatto evolution times. The method uses
\(\widetilde{\mathcal{O}}(M/\epsilon^2)\) experimental shots and outputs an
estimate \(\widehat{\bm\theta}\) satisfying
\(\|\widehat{\bm\theta}-\bm\theta\|_\infty\le\epsilon\) with probability at
least \(1-\delta\). Given the empirical response estimates, reconstructing all
coefficients requires \(\mathcal{O}(Mr+r^2)\) classical operations. Including
the aggregation of measurement outcomes, the overall classical
post-processing complexity is
\(\widetilde{\mathcal{O}}(M/\epsilon^2)\).
\end{theorem}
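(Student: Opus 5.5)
The argument splits the error of \(\widehat\theta_\alpha=\widehat g_{\alpha,r}'(0)\) into a deterministic interpolation bias and a stochastic term. Let \(g_{\alpha,r}\) denote the degree-\(r\) interpolant of the exact values \(g_\alpha(t_j,\bm\theta)\) at the nodes \eqref{eq:nodes-main}. By linearity of interpolation,
\[
\widehat\theta_\alpha-\theta_\alpha=\bigl(g_{\alpha,r}'(0)-\partial_t g_\alpha(0,\bm\theta)\bigr)+\sum_{j=0}^{r}\ell_j'(0)\,\bigl(\widehat g_\alpha(t_j)-g_\alpha(t_j,\bm\theta)\bigr),
\]
where \(\ell_j\) are the Lagrange basis polynomials. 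We use that \(\partial_t g_\alpha(0,\bm\theta)=\mathsf C_\alpha(\cL_{\bm\theta}^\dagger)=\theta_\alpha\), which follows from \(\mathsf C_\alpha(\cL_\beta^\dagger)=\delta_{\alpha\beta}\). Since \(g_\alpha(0)=0\) exactly, the \(j=0\) node need not be measured, so only the \(r\) nonzero times are used.

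\emph{Bias.} We take from Appendix~\ref{app:response-taylor} that \(t\mapsto g_\alpha(t,\bm\theta)\) extends analytically to the disk \(|t|<1/\Lambda\). Its Taylor coefficients obey \(|c_k|\le L_C\,C_0\Lambda^k\) for a constant \(C_0\): this is the cluster-expansion bound on the raw responses, amplified by at most \(L_C\) through \(\mathbf H\). With the time variable rescaled so that \([0,\tau]=[0,1/(2\Lambda)]\) maps to \([-1,1]\), \(g_\alpha\) is analytic inside a Bernstein ellipse \(E_\rho\) with a fixed \(\rho>1\). The singularity distance is \(1/\Lambda\) while the half-length of the interval is \(1/(4\Lambda)\), so \(\rho\) is an absolute constant and \(|g_\alpha|=\mathcal O(L_C)\) on \(E_\rho\). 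The standard Chebyshev-interpolation estimate, followed by Markov's or Bernstein's inequality for the derivative at the endpoint, then gives
\[
\bigl|g_{\alpha,r}'(0)-\theta_\alpha\bigr|\lesssim \frac{r^2}{\tau}\,L_C\,\rho^{-r}=\mathcal O\!\left(r^2\Lambda L_C\rho^{-r}\right).
\]
Alternatively one can differentiate the Chebyshev series tail directly, using \(T_k'(\pm1)=\pm k^2\). Choosing \(r=\mathcal O(\log(L_C\Lambda/\epsilon))\) makes this at most \(\epsilon/2\).

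\emph{Variance.} The noise amplification is \(\sum_j|\ell_j'(0)|\). For Chebyshev--Lobatto nodes the endpoint differentiation weights are the first row of the Chebyshev differentiation matrix, whose absolute sum is \(\mathcal O(r^2)\) on \([-1,1]\), hence \(\mathcal O(r^2/\tau)=\mathcal O(r^2\Lambda)\). It therefore suffices to estimate every normalized response at every node to accuracy \(\varepsilon_g=\epsilon/(C r^2\Lambda)\). By the shot count stated in the experimental section, combined with a union bound over \(r\) times and \(M\) coordinates, each node needs \(\mathcal O(M\log(Mr/\delta)/\varepsilon_g^2)\) shots. The total is \(\mathcal O(M r^5\Lambda^2\log(Mr/\delta)/\epsilon^2)=\widetilde{\mathcal O}(M/\epsilon^2)\), because \(L_C,\Lambda=\mathcal O(1)\) by Theorem~\ref{thm:uniform-response-inverse}.

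\emph{Classical cost.} The weights \(\ell_j'(0)\) come in closed form from the Chebyshev differentiation matrix, so computing them costs \(\mathcal O(r^2)\). Each of the \(M\) coordinates then needs an \(r\)-term dot product, for \(\mathcal O(Mr+r^2)\) in total. Applying \(\mathbf H\) costs \(\mathcal O(M)\) per node, since its rows have \(\mathcal O(1)\) nonzeros; this amounts to \(\mathcal O(Mr)\) and is in any case absorbed into aggregating the \(\widetilde{\mathcal O}(M/\epsilon^2)\) shot outcomes.

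The main obstacle is the bias step: one needs a uniform analytic bound on \(g_\alpha\) over a complex neighbourhood of \([0,\tau]\) with constants independent of \(N\). I would obtain it by summing the Appendix~\ref{app:response-taylor} Taylor bounds at complex \(t\) with \(|t|\le 3/(4\Lambda)\), since the cluster-expansion bounds depend only on \(|t|\), and then checking that this disk contains an ellipse \(E_\rho\) with constant \(\rho\).
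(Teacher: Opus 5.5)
Your proof is correct. The variance step, the shot count, and the post-mean cost accounting coincide with the paper's; the paper proves the weight bound explicitly as $W_r=\sum_j|\ell_j'(0)|\le 3r^2/\tau$, using the first row of the Chebyshev differentiation matrix and a cosecant-sum identity.

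The genuine difference is the bias step. The paper never leaves the real interval. Degree-$r$ interpolation reproduces the degree-$r$ Taylor polynomial $T_r$ of $g_\alpha$ exactly, and $T_r'(0)=g_\alpha'(0)$. Hence the bias equals $\sum_j w_j h(t_j)$ with $h=g_\alpha-T_r$. The coefficient bound $|a_k|\le L_C\Lambda^k$ gives $|h(t_j)|\le L_C(\Lambda\tau)^{r+1}/(1-\Lambda\tau)$ on $[0,\tau]$. Multiplying by the same $W_r$ used for the noise gives the explicit bias $6L_C\Lambda r^2/2^r$ at $\tau=1/(2\Lambda)$, which is exactly what the prescription $r_\star$ is built to beat. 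So the paper needs only the real Taylor bound and one weight estimate for both error sources.

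Your Bernstein-ellipse route instead imports the analytic Chebyshev interpolation theorem plus an endpoint-derivative estimate. It buys a faster geometric rate: $\rho^{-r}$ for any fixed $\rho<3+2\sqrt2$ in the rescaled variable, versus $2^{-r}$. The cost is less explicit constants, and the degree it certifies is of the same order as $r_\star$ rather than literally $r_\star$.

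Two small remarks. First, the step you flag as the main obstacle is immediate. At finite $N$, $g_\alpha$ is entire in $t$, and $|\partial_t^k g_\alpha(0)|\le L_C\Lambda^k k!$ gives $|g_\alpha(t)|\le 3L_C$ for complex $|t|\le 3/(4\Lambda)$. In the variable $x=1-4\Lambda t$ this disk is $|x-1|\le 3$. It contains $E_\rho$ for $\rho=2+\sqrt3$, since the farthest point of $E_\rho$ from the focus $1$ is at distance $1+(\rho+\rho^{-1})/2$.

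Second, Markov's inequality cannot be applied to $g_\alpha-p_r$ directly, because it is not a polynomial. You must either split off the Chebyshev truncation and apply Markov only to the polynomial difference, or use the aliasing identity together with $T_k'(1)=k^2$, as your alternative does.
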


Once the empirical means have been collected, the Chebyshev procedure estimates each coefficient using a precomputed weighted sum of the responses at the sampled times and requires no iterative solver. It does, however, require a logarithmic number of prescribed evolution times fixed by the interpolation rule. This motivates a complementary method that uses a single, well-calibrated evolution time.

\emph{Single-time projected response contraction.---}
The key idea is to recast learning as solving the finite-time response
equation
\(g(t,\bm\theta)=\widehat{\bm g}\), where
\(\widehat{\bm g}\) is obtained from measurements at a single fixed
evolution time.  We take the canonical evolution time
\begin{equation}
    t=t_\star
    :=
    \frac{1}{4\Lambda(1+6L_C\Lambda)}.
    \label{eq:main-canonical-projected-time}
\end{equation}
Under the uniform bounded-overlap assumptions,
\(t_\star=\Theta(\Lambda^{-2})\).  Although the exact response function cannot
generally be computed classically, for any trial vector
\(x\in[-1,1]^M\), the known local generator dictionary allows us to evaluate
a degree-\(m\) approximation \(g_m(t,x)\).  This evaluation uses the truncated local Taylor expansion and sparse symbolic
enumeration developed in the Appendix.\ref{app:single-time}, and its raw components are combined through
\(\mathbf H\) to produce an approximation to the normalized response.

This classical forward map allows us to solve the response equation
through the projected iteration starting from \(x^{(0)}=0\):
\begin{align}
x^{(n+1)}
=
\Pi_{[-1,1]^M}\!\left[
x^{(n)}
-t^{-1}\bigl(
g_m(t,x^{(n)})-\widehat{\bm g}
\bigr)
\right].
\end{align}
The key reason for the efficiency of the iteration process is that the normalized response map is close to the linear map \(x\mapsto tx\).
More precisely, either locality assumption implies that, at the canonical
time \(t=t_\star\), for any \(x,z\in[-1,1]^M\),
\begin{align}
\left\|
t^{-1}\!\left[
g_m(t,x)-g_m(t,z)
\right]
-
(x-z)
\right\|_\infty
\le
\frac{1}{2}\|x-z\|_\infty,
\end{align}
uniformly over the parameter domain.  Thus the scaled response mismatch
\(t^{-1}(g_m(t,x^{(n)})-\widehat{\bm g})\) estimates the current
parameter error to a relative accuracy $1/2$.  Subtracting it in the
projected update removes the leading error, and, in the absence of
truncation and statistical errors, gives
\begin{align}
\|x^{(n+1)}-\bm\theta\|_\infty
\le
\frac{1}{2}\|x^{(n)}-\bm\theta\|_\infty .
\end{align}
Consequently,
the parameter error decreases geometrically and only $\mathcal{O}(\log (1/\epsilon))$ iterations are required.

\begin{theorem}[Informal]
\label{thm:projected-main}
Consider the \(N\)-qubit Lindbladian learning problem defined in
Problem~\ref{problem}. Use the canonical evolution time \(t=t_\star\) defined
in Eq.~\eqref{eq:main-canonical-projected-time}, with the remaining
algorithmic parameters chosen according to the prescription in the Appendix \ref{app:single-time}.
The single-time projected response method uses
\(\widetilde{\mathcal{O}}(M/\epsilon^2)\) experimental shots and outputs an
estimate \(\widehat{\bm\theta}\) satisfying
\(\|\widehat{\bm\theta}-\bm\theta\|_\infty\le\epsilon\) with probability at
least \(1-\delta\). The projected iteration converges in
\(\mathcal{O}(\log(1/\epsilon))\) iterations. Given the empirical response
estimates, reconstructing all coefficients requires
\(\widetilde{\mathcal{O}}(M/\epsilon)\) classical operations in the
symbolic-Pauli model. Including the aggregation of measurement outcomes, the
overall classical post-processing complexity is
\(\widetilde{\mathcal{O}}(M/\epsilon^2)\).
\end{theorem}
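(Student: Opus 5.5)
The proof combines three ingredients: the contraction estimate stated just before the theorem, a uniform truncation bound for the degree-\(m\) forward map \(g_m\), and a concentration bound for the empirical normalized responses.

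\emph{Error decomposition.} Write \(\widehat{\bm g}=g(t_\star,\bm\theta)+\bm\eta\), where \(\bm\eta\) is the statistical error. The update map is
\[
T(x)=\Pi_{[-1,1]^M}\bigl[x-t^{-1}(g_m(t,x)-\widehat{\bm g})\bigr].
\]
Since \(\bm\theta\in[-1,1]^M\) and projection onto the box is nonexpansive in \(\|\cdot\|_\infty\), we get
\[
\|T(x)-\bm\theta\|_\infty\le\bigl\|x-\bm\theta-t^{-1}\bigl(g_m(t,x)-g_m(t,\bm\theta)\bigr)\bigr\|_\infty+t^{-1}\|g_m(t,\bm\theta)-g(t,\bm\theta)\|_\infty+t^{-1}\|\bm\eta\|_\infty .
\]
By the displayed contraction inequality at \(t=t_\star\), the first term is at most \(\tfrac12\|x-\bm\theta\|_\infty\). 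Hence
\[
\|x^{(n+1)}-\bm\theta\|_\infty\le\tfrac12\|x^{(n)}-\bm\theta\|_\infty+t_\star^{-1}(e_m+\varepsilon_g),
\]
where \(e_m\) is a uniform truncation bound and \(\varepsilon_g\) bounds \(\|\bm\eta\|_\infty\). Starting from \(\|x^{(0)}-\bm\theta\|_\infty\le1\) and unrolling,
\[
\|x^{(n)}-\bm\theta\|_\infty\le 2^{-n}+2t_\star^{-1}(e_m+\varepsilon_g).
\]

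\emph{Parameter choices.} I would choose \(n=\lceil\log_2(3/\epsilon)\rceil\). Then pick \(\varepsilon_g=t_\star\epsilon/6\) and \(m\) so that \(e_m\le t_\star\epsilon/6\); this gives total error at most \(\epsilon\).

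\emph{Truncation bound.} Bounding \(e_m\) uses the multivariate Taylor control of Appendix~\ref{app:response-taylor}. The degree-\(k\) part of each normalized response is bounded by \(L_C(\Lambda t)^k\) up to constants. Since \(\Lambda t_\star\le1/4\), the tail is geometric, \(e_m\lesssim L_C 4^{-m}\), so \(m=O(\log(L_C\Lambda/\epsilon))\) suffices.

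\emph{Sample complexity.} The concentration statement of the experimental section with \(\varepsilon_g=\Theta(\epsilon\Lambda^{-2})\) gives \(O(M\log(M/\delta)\Lambda^4/\epsilon^2)=\widetilde O(M/\epsilon^2)\) shots. The \(1-\delta\) success probability comes only from this event; everything else is deterministic.

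\emph{Classical cost.} Each iteration evaluates \(g_m(t,x)\) for all \(\alpha\). Appendix~\ref{subsec:classical-coefficient-construction} gives a bounded-degree dual graph, so the number of connected clusters of size \(\le m\) around a term is \(e^{O(m)}\). With \(m=O(\log(1/\epsilon))\) this is \(\mathrm{poly}(1/\epsilon)\) per coordinate. The main obstacle is pinning the exponent: the stated \(\widetilde O(M/\epsilon)\) requires the per-coordinate enumeration to cost \(O(1/\epsilon)\). I would secure this by tuning the base of the geometric tail, taking \(m\approx\log_{b}(1/\epsilon)\) with cluster growth rate \(\le b\). Since \(t_\star\) contains the factor \(1/(1+6L_C\Lambda)\), the ratio \(\Lambda t_\star\) is small enough that the cluster count, roughly \((e\Lambda)^m\), is dominated by \((\Lambda t_\star)^{-m}\sim1/\epsilon\).

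Applying the sparse rows of \(\mathbf H\) costs \(O(ML_C)\). Multiplying by \(O(\log(1/\epsilon))\) iterations gives \(\widetilde O(M/\epsilon)\) for reconstruction. Aggregating the \(\widetilde O(M/\epsilon^2)\) outcomes dominates, which yields the overall \(\widetilde O(M/\epsilon^2)\) post-processing complexity.
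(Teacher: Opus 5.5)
Your accuracy, iteration-count, and shot-count arguments follow the paper's route. The one-step recursion from nonexpansiveness of box clipping, unrolled with $q=\tfrac12$, is Corollary~\ref{cor:app-projected-perturbation}. The tail bound $L_C(\Lambda t)^{m+1}/(1-\Lambda t)$, with Hoeffding and a union bound over the $M$ rows, gives $m,n_{\rm it}=\mathcal O(\log(1/\epsilon))$ and $\widetilde{\mathcal O}(M/\epsilon^2)$ shots. However, you take the displayed contraction inequality as given. The paper proves it by bounding every one-insertion product $\mathcal L_{z^{(k)}}^\dagger\cdots\mathcal L_{z^{(1)}}^\dagger(O)$ by $k!\Lambda^k$. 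This yields $\sup_x\|I_M-t^{-1}D_xg_m(t,x)\|_{\infty\to\infty}\le L_C\Lambda\sum_{k\ge2}k(\Lambda t)^{k-1}\le6L_C\Lambda^2t<\tfrac12$ at $t=t_\star$, and the paper then integrates the Jacobian along segments of the box. A complete proof needs that step.

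The genuine gap is the $\widetilde{\mathcal O}(M/\epsilon)$ reconstruction cost. Counting connected clusters does not by itself bound running time. Computing $g_m$ by expanding $(\mathcal L_x^\dagger)^k$ into ordered words gives up to $\Lambda^kk!$ words per row. With $m=\Theta(\log(1/\epsilon))$, the factor $m!$ is $(1/\epsilon)^{\Theta(\log\log(1/\epsilon))}$, which is all the paper's ordered constructions (Propositions~\ref{prop:coef-construction-finite-degree} and~\ref{prop:coef-construction-local-strength}) achieve. The missing idea is Lemma~\ref{lem:deterministic-pauli-transfer}: for a fixed input Pauli $P$ and exponent vector $\mathbf n$, every ordering yields the same phase-free label $\operatorname{pf}(P\prod_aR_a^{n_a})$. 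Hence a dynamic program keyed by (Pauli label, commutative monomial), merging equal keys immediately, has at most $2\cdot4^{S_D}(5\Delta_\star)^k$ states at degree $k$, each with at most $(k+1)\Delta_\star$ transitions (Proposition~\ref{prop:compressed-response-dp}).

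Your base comparison also needs to be made precise, because the base sits in the exponent of $1/\epsilon$. First, you must count commutative monomials with multiplicities, giving e.g.\ $(5\Delta_\star)^k$ rather than $(e\Lambda)^k$. Second, dominating $(5\Delta_\star)^m$ by $a_\star^{-m}$ with $a_\star^{-1}=4(1+6L_C\Lambda)$ requires a \emph{lower} bound on $L_C$. Without it, $a_\star^{-1}$ could be close to $4$, below the growth base once $\Lambda$ exceeds a small constant, and the cost would be $M\epsilon^{-c}$ with $c>1$. The paper gets $L_C\ge\tfrac14$ from $1=|(\mathbf H\mathbf G)_{\alpha\alpha}|\le4\|(\mathbf H)_{\alpha\cdot}\|_1$, and $\Lambda\ge1$ from the unit lower bound on term norms. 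Together these give $5\Delta_\star\le5(\Lambda+1)\le a_\star^{-1}$. Under the local-strength hypothesis, the same norm lower bound is what bounds the augmented degree by $\Delta_\star$. Finally, applying $\mathbf H$ costs $\mathcal O(\operatorname{nnz}(\mathbf H))$, not $\mathcal O(ML_C)$: $L_C$ is a row $\ell_1$ norm, and the needed row sparsity $r_0=\mathcal O(1)$ comes from the nilpotent expansion $\mathbf G_D^{-1}=\sum_{\ell\le S_D}(-U)^\ell$.
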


The guarantees above concern coordinate estimation. If a physical
Lindbladian estimate is required, run the learning procedure with target
coordinate accuracy $\epsilon/2$ and apply the $\ell_\infty$-projection of
Corollary~\ref{cor:optional-physical-projection}. The optional
semidefinite-programming cost is not included in the stated post-processing complexity.

\section{Discussion}
Hamiltonian learning provides a powerful framework for characterizing coherent interactions of closed quantum many-body systems~\citep{gebhartLearningQuantumSystems2023,haahLearningQuantumHamiltonians2024,HTFS2023}. Realistic quantum systems, however, are rarely perfectly isolated: coupling to uncontrolled degrees of freedom gives rise to decoherence, relaxation, and other noise processes, whose time-homogeneous Markovian limit is described by a Lindbladian~\citep{gorini1976completely,lindblad1976generators,breuer2002theory}.  In this setting, conventional Hamiltonian-learning responses no longer identify Hamiltonian coefficients uniquely because the same local response can contain both coherent and dissipative contributions.  Our work resolves this significant challenge at the level of experimentally accessible responses.  By treating the raw Pauli responses as a linear measurement system and
inverting their support-structured mixing, we construct response
coordinates that separately identify all coherent and dissipative
coefficients of a structured Pauli--GKSL generator. The resulting protocols
require only simple state preparation, short-time evolution, and
Pauli measurements, which are experimentally accessible.  For example, stroboscopic dynamics in which short
coherent evolution steps are interleaved with weak memoryless local noise
can approach a Lindblad evolution in a continuous-time limit~\citep{breuer2002theory}; our
framework can then learn the desired interactions and the local noise
rates simultaneously, rather than absorbing the latter into an
effective Hamiltonian. 
More broadly, these results extend scalable generator learning from closed to
Markovian open quantum systems and provide a route toward the calibration of
noisy quantum simulators, the diagnosis of local error mechanisms, and the
verification of engineered dissipative dynamics
~\citep{diehl2008quantumstates,verstraete2009dissipation}.
Extending this response-based approach to time-dependent generators and
non-Markovian environments offers a natural direction for future work
~\citep{FrancaMobusRouzeWerner2025,breuer2016colloquium}.

\emph{Note added.---}
During the completion of this manuscript, we became aware of several
concurrent works on Lindbladian learning and structure learning
~\citep{BirkeEtAl2026,IvashkovEtAl2026,HeightmanEtAl2026,
RomanovEtAl2026,AradChenGuoRebentrostYu2026,
MobusBergamaschiFrancaRouze2026,LewisTangWright2026}.
These works study related learning problems under different assumptions on
prior structural information, locality, and experimental access.
Our setting is complementary: we assume a known nonredundant
Pauli--GKSL candidate dictionary and focus on removing the
coherent--dissipative and common-extension mixing of local Pauli responses
before performing finite-time coefficient reconstruction.
A detailed comparison is provided in
Appendix~\ref{app:comparison}.


\vspace{10pt}
\emph{Acknowledgments.}
X.~Yuan acknowledges support from the National Natural Science Foundation of China Grant No.~12361161602 and NSAF Grant No.~U2330201, the Quantum Science and Technology--National Science and Technology Major Project No.~2023ZD0300200, Beijing Natural Science Foundation Z250004, and Beijing Science and Technology Planning Project Grant No.~Z25110100810000. Y.~Wu acknowledges support from NSFC Grants No.~62501060, No.~62461160263, and No.~62371050.

%

\clearpage
\widetext

\appendix

\section{Comparison with related and concurrent work}
\label{app:comparison}

Recent progress on learning open quantum dynamics has followed several
different directions, depending on whether the goal is reconstructing the
full dynamical map, identifying the interaction structure, or estimating the
parameters of a prescribed generator model.  Our work belongs to the third
category: we assume a known nonredundant local Pauli--GKSL dictionary and
develop a scalable procedure for estimating all coefficients of the generator.
Within this setting, the main difficulty is not identifying which terms are
present, but resolving the intrinsic ambiguity of local responses caused by
the coexistence of Hamiltonian and dissipative contributions.

Several recent works have investigated Lindbladian learning without assuming a
known support structure.  Shadow-based approaches provide efficient
measurement schemes for recovering local properties of unknown dynamics and,
in some cases, allow interaction structure to be inferred from randomized
measurements~\citep{BirkeEtAl2026,AradChenGuoRebentrostYu2026}.  These methods
are designed for settings where the unknown generator itself is sparse or
where the support of the interactions must be discovered.  In contrast, our
framework assumes that the candidate Pauli--GKSL dictionary is supplied as
prior information.  This allows us to focus on a different bottleneck: even
when the possible generator terms are known, a local Pauli response generally
corresponds to a linear combination of several coherent and dissipative
coordinates.  We remove this ambiguity by constructing a response matrix
whose inverse directly produces coefficient-selective normalized responses.

Another line of work studies learning from local dynamical information under
locality assumptions expressed through interaction strengths or quasi-local
structure.  In particular, bounded-strength assumptions have been used to
establish scalable learning guarantees for general local generators
~\citep{IvashkovEtAl2026,MobusBergamaschiFrancaRouze2026,LewisTangWright2026}.
These results emphasize controlling the propagation of information and the
stability of local estimation procedures.  Our locality assumptions play a
related role in controlling finite-time expansions, but the purpose is
different: we use locality not only to bound the dynamics, but also to prove
that the inverse response transformation remains local and well-conditioned.
The resulting normalized responses separate generator-level identifiability
from the subsequent finite-time reconstruction problem.

Several concurrent works consider alternative parameterizations or estimation
strategies.  Romanov \emph{et al.} introduce encoded stabilizer constructions
to suppress the influence of unknown strong interactions
~\citep{RomanovEtAl2026}, while Heightman \emph{et al.} formulate Lindbladian
identification as a data-driven differential-equation fitting problem
~\citep{HeightmanEtAl2026}.  These approaches address different experimental
constraints and do not rely on the explicit Pauli--GKSL response structure
used here.  Our protocol instead requires only product-state preparation,
short-time evolution, and Pauli measurements, and the classical reconstruction
is based on the known generator dictionary rather than a learned surrogate
model.

The closest conceptual comparison is with works that reconstruct generators
through local observables or shadow-based channel information
~\citep{BirkeEtAl2026,MobusBergamaschiFrancaRouze2026,LewisTangWright2026}.
Those approaches typically control the statistical and locality aspects of
the estimation problem.  Here we identify and resolve an additional algebraic
obstruction specific to Lindbladian learning: local responses are not naturally
aligned with individual GKSL coefficients.  By exploiting the common-extension
structure of Pauli responses, we show that the mixing matrix has a
support-ordered triangular structure and construct its inverse with
system-size-independent conditioning under bounded-overlap assumptions.
This yields normalized response coordinates satisfying $\partial_t g_\alpha(0,\theta)=\theta_\alpha$ after which standard finite-time reconstruction techniques can be applied.

Therefore, the present work is complementary to existing approaches rather
than a replacement for them.  Methods based on support discovery, shadow
tomography, or flexible generator fitting are valuable when the microscopic
model is unknown.  Our framework is advantageous when a physically motivated
candidate dictionary is available, as is common in quantum simulation,
device calibration, and engineered dissipative dynamics, where the goal is to
estimate interaction strengths and noise parameters simultaneously.

\section{Local Pauli--GKSL parameterization with a known candidate dictionary}
\label{app:param}

We use a real, nonredundant Pauli--GKSL parameterization on $N$ qubits, with
Hilbert-space dimension $D=2^N$.  For a region $W\subseteq[N]$, $\cP_W$
denotes the phase-free Pauli strings supported inside $W$, tensored with
identities outside $W$.  We write $I$ for the identity Pauli string and $\Id$
for the identity superoperator.

A time-independent Markovian generator \(\cL_\theta\) has the GKSL form
\begin{equation}
  \frac{d\rho}{dt}
  =
  \cL_\theta(\rho)
  :=
  -i[H,\rho]
  +
  \sum_{A,B\in\cB}K_{AB}
  \left(A\rho B-\frac12\{BA,\rho\}\right),
  \label{eq:app-gksl}
\end{equation}
where $\cB$ is a chosen set of traceless local Pauli strings and
$K=K^\dagger\succeq0$ is the Kossakowski matrix.  Excluding the identity from
$\cB$ removes the trivial term $\cD_{I,I}=0$ and the usual gauge
redundancy in which dissipative terms with one identity factor can be absorbed
into the Hamiltonian part.

The Hamiltonian is a real linear combination of the prescribed nonidentity
phase-free Pauli candidates:
\begin{equation}
    H=\sum_C h_C C,
    \qquad h_C\in\R,
    \qquad C\in\cP_{[N]}\setminus\{I\},
    \label{eq:app-H-param}
\end{equation}
where the sum ranges only over the prescribed Hamiltonian candidates.
For the dissipative coordinates, diagonal entries are represented by the real
coefficients $K_{AA}\ge0$.  For an ordered pair $A<B$, we write
\begin{equation}
    k^{\rm R}_{AB}=2\operatorname{Re}K_{AB},
    \qquad
    k^{\rm I}_{AB}=-2\operatorname{Im}K_{AB}.
    \label{eq:app-real-gksl-coords}
\end{equation}
Define
\begin{align}
  \cD_{A,B}^{\rm Re}(\rho)
  &=
  \frac12\left(\cD_{A,B}(\rho)+\cD_{B,A}(\rho)\right),
  \label{eq:app-D-Re}\\
  \cD_{A,B}^{\rm Im}(\rho)
  &=
  \frac1{2i}\left(\cD_{A,B}(\rho)-\cD_{B,A}(\rho)\right),
  \label{eq:app-D-Im}
\end{align}
where
\begin{equation}
    \cD_{A,B}(\rho)=A\rho B-\frac12\{BA,\rho\}.
    \label{eq:app-DAB-sch}
\end{equation}
The relation $K_{BA}=\overline{K_{AB}}$ gives
\begin{equation}
    K_{AB}\cD_{A,B}+K_{BA}\cD_{B,A}
    =
    k^{\rm R}_{AB}\cD_{A,B}^{\rm Re}
    +k^{\rm I}_{AB}\cD_{A,B}^{\rm Im}.
    \label{eq:app-real-imag-combination}
\end{equation}

In the Heisenberg picture, the elementary terms are
\begin{equation}
    \cL_C^{H,\dagger}(O)=i[C,O]=iCO-iOC,
    \label{eq:app-H-adjoint}
\end{equation}
and
\begin{equation}
    \cD_{A,B}^\dagger(O)=BOA-\frac12\{BA,O\}.
    \label{eq:app-D-adjoint}
\end{equation}
Because the Hilbert--Schmidt adjoint is conjugate-linear in scalar
coefficients, the two real-coordinate terms have adjoints
\begin{align}
  \cD_{A,B}^{\rm Re,\dagger}
  &=
  \frac12\left(\cD_{A,B}^\dagger+\cD_{B,A}^\dagger\right),
  \label{eq:app-D-Re-adjoint}\\
  \cD_{A,B}^{\rm Im,\dagger}
  &=
  \frac1{2i}\left(\cD_{B,A}^\dagger-\cD_{A,B}^\dagger\right).
  \label{eq:app-D-Im-adjoint}
\end{align}

Let $a$ index the normalized elementary terms, denoted by
$\mathcal L_a^\dagger$.

\begin{lemma}[Explicit norms of normalized Pauli--GKSL terms]
\label{lem:normalized-term-norms}
Every nonzero elementary term in the fixed Pauli--GKSL normalization obeys
\begin{equation}
    1
    \le
    \|\mathcal L_a^\dagger\|_{\infty\to\infty}
    \le
    2.
    \label{eq:normalized-term-norm-range}
\end{equation}
The Hamiltonian and diagonal dissipative terms have norm exactly $2$.
\end{lemma}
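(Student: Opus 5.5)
The plan is to treat the four families of normalized elementary terms separately: the Hamiltonian terms $\cL_C^{H,\dagger}$ of Eq.~\eqref{eq:app-H-adjoint}, the diagonal terms $\cD_{A,A}^\dagger$, and the off-diagonal terms $\cD^{\rm Re,\dagger}_{A,B}$ and $\cD^{\rm Im,\dagger}_{A,B}$ with $A<B$. For each family I will prove an upper bound by the triangle inequality and a matching lower bound by evaluating the superoperator on a single Pauli input. Two facts are used throughout. First, Pauli strings are unitary, so $\|XOY\|_\infty=\|O\|_\infty$ for Paulis $X,Y$. Second, for a Pauli $P$ (with $\|P\|_\infty=1$), any output of the form $cQ$ with $Q$ a Pauli string has norm exactly $|c|$. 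Hence $\|\mathcal L_a^\dagger\|_{\infty\to\infty}\ge |c|$ whenever $\mathcal L_a^\dagger(P)=cQ$.

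\emph{Upper bounds.} For the Hamiltonian term, $\|i(CO-OC)\|\le 2\|O\|$. For the diagonal term, $A^2=I$ gives $\cD_{A,A}^\dagger(O)=AOA-O$, so the norm is at most $2$. For the off-diagonal terms I will first simplify using the fact that $A,B$ either commute or anticommute.
\begin{itemize}
\item If $AB=BA$, then $\cD^{\rm Re,\dagger}_{A,B}(O)=\tfrac12(BOA+AOB)-\tfrac12\{AB,O\}$, and the Im term reduces to $\tfrac1{2i}(AOB-BOA)$, because its anticommutator parts cancel.
\item If $AB=-BA$, the anticommutator parts of the Re term cancel, leaving $\tfrac12(BOA+AOB)$, while the Im term becomes $\tfrac1{2i}(AOB-BOA)$ plus a multiple of $\{AB,O\}$ with coefficient of modulus $\tfrac12$.
\end{itemize}
In every case the result is a sum of at most two pieces, each of norm at most $1$, so every term has norm at most $2$.

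\emph{Lower bounds.} The key tool is nondegeneracy of the symplectic commutation form on $\cP_{[N]}$ modulo phases. Since $C\neq I$, there is a Pauli $P$ anticommuting with $C$. Since distinct phase-free nonidentity $A\neq B$ are independent, there is a Pauli $P$ with any prescribed commutation pattern relative to $A$ and to $B$. I will then do the casework.
\begin{itemize}
\item Hamiltonian: with $P$ anticommuting with $C$, $i[C,P]=2iCP$, which has norm $2$.
\item Diagonal: with $P$ anticommuting with $A$, $APA-P=-2P$, which has norm $2$. Together with the upper bound, this gives exactly $2$ for both of these families.
\item Off-diagonal: I will choose the pattern of $P$ so that the surviving terms do not cancel. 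For example, in the commuting Re case, take $P$ anticommuting with both $A$ and $B$. Then $BPA=APB=-ABP$ and $\{AB,P\}=2ABP$, so the output is $-2ABP$, of norm $2\ge1$. The remaining three cases (commuting Im, anticommuting Re, anticommuting Im) are handled the same way, by choosing $P$ to commute with one of $A,B$ and anticommute with the other, or some other suitable pattern. The aim is to make $\mathcal L_a^\dagger(P)$ a single Pauli string with coefficient of modulus at least $1$.
\end{itemize}

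The main obstacle is this off-diagonal casework. For some commutation patterns, all contributions collapse onto the same Pauli $ABP$ and cancel; the commuting Re case with $P$ anticommuting with only one of $A,B$ is an example. So the pattern must be chosen per case, and any residual phases $\pm1,\pm i$ must be tracked to confirm that the coefficient does not vanish. If a single Pauli input fails in some case, the fallback is to use $O=P$ and bound the norm from below via $|2^{-N}\Tr[Q^\dagger\mathcal L_a^\dagger(P)]|$, which picks out one Pauli component even when the output has several.
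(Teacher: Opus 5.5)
Your proposal is correct and follows essentially the same route as the paper: triangle-inequality upper bounds, then lower bounds from a single phase-free Pauli input with a prescribed commutation pattern (via nondegeneracy of the symplectic form), split by whether $A$ and $B$ commute. To finish the casework you left open: for the commuting Im term and the anticommuting Re term, take $P$ commuting with $A$ and anticommuting with $B$, giving outputs $i\,ABP$ and $-ABP$ of norm $1$; for the anticommuting Im term you must instead take $P$ anticommuting with both, giving $2i\,ABP$ of norm $2$, because every other pattern gives zero there.
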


\begin{proof}
The upper bound follows from the triangle estimates for the commutator and
Eq.~\eqref{eq:app-D-adjoint}; taking the real or imaginary half-sum does not
increase the bound $2$.  If a phase-free Pauli $O$ anticommutes with $C$ or
$A$, respectively, then
$\|i[C,O]\|_\infty=2$ and
$\|AOA-O\|_\infty=2$, proving the two exact claims.

For $A\ne B$, write $AB=\varepsilon_{AB}BA$ with
$\varepsilon_{AB}\in\{\pm1\}$.  Nondegeneracy of the Pauli symplectic form
allows a phase-free Pauli $O$ with any prescribed pair of commutation signs
with the distinct strings $A$ and $B$.  If $\varepsilon_{AB}=1$, choose $O$
anticommuting with both strings for the real term and commuting with $A$
but anticommuting with $B$ for the imaginary term; direct substitution
in Eqs.~\eqref{eq:app-D-adjoint}--\eqref{eq:app-D-Im-adjoint} gives output norm
$2$ and $1$, respectively.  If $\varepsilon_{AB}=-1$, interchanging the two
commutation-sign choices gives output norm $1$ and $2$.  Thus every nonzero
off-diagonal real or imaginary term has induced norm at least $1$.
\end{proof}

Let $\cA=\{\cL_\alpha^\dagger\}_{\alpha=1}^M$ be a known linearly independent
real Pauli--GKSL candidate dictionary; candidate coefficients may be zero.  The
unknown adjoint generator is
\begin{equation}
    \cL_\theta^\dagger
    =
    \sum_{\alpha=1}^M\theta_\alpha\cL_\alpha^\dagger,
    \qquad
    \theta\in\R^M.
    \label{eq:app-theta-adjoint}
\end{equation}
More generally, for any real coordinate vector \(x\in\R^M\), write
\begin{equation}
    \cL_x^\dagger
    :=
    \sum_{\alpha=1}^Mx_\alpha\cL_\alpha^\dagger.
    \label{eq:app-generic-coordinate-generator}
\end{equation}
Let $K(\theta)$ be the Hermitian Kossakowski matrix obtained from the
dissipative coordinates through
Eq.~\eqref{eq:app-real-gksl-coords}, with entries absent from the candidate
dictionary fixed to zero.  The physical parameter set is
$\Theta_{\rm GKSL}:=\{\theta\in[-1,1]^M:K(\theta)\succeq0\}$.  The true
parameter belongs to this set, while the algorithms estimate coordinates in
$\R^M$.  Corollary~\ref{cor:optional-physical-projection} gives an optional
projection onto $\Theta_{\rm GKSL}$.

The support of a candidate term is defined as follows:
\begin{align}
    \supp(\cL_C^{H,\dagger})
    &=\supp(C),\\
    \supp(\cD_{A,A}^\dagger)
    &=\supp(A),\\
    \supp(\cD_{A,B}^{\rm Re,\dagger})
    =\supp(\cD_{A,B}^{\rm Im,\dagger})
    &=\supp(A)\cup\supp(B).
\end{align}
Let $\cA_D\subseteq\cA$ be the set of candidate dissipative terms, and define
\begin{equation}
    S_D
    =
    \max_{a\in\cA_D}
    |\supp(\cL_a^\dagger)|,
    \label{eq:dissipative-support-SD}
\end{equation}
with $S_D=0$ when $\cA_D$ is empty.
We denote the candidate support bound by
\begin{equation}
    s=\max_{\alpha\in\cA}|\supp(\cL_\alpha^\dagger)|.
    \label{eq:app-candidate-support-bound}
\end{equation}
For the hybrid family of real raw responses defined in
Sec.~\ref{app:responses}, all prepared and measured Pauli observables have
support size at most $s$.

\section{Raw Pauli responses and the candidate response matrix}
\label{app:responses}

For $R,S\in\cP_W$, define the raw response functional
\begin{equation}
  F^W_{R,S}(\Phi)
  :=
  \frac1{4^{|W|}}\sum_{P\in\cP_W}
  \frac1D
  \Tr\left[(RPS)^\dagger\Phi(P)\right].
  \label{eq:app-raw-response}
\end{equation}
The functional is linear in $\Phi$.  Sec.~\ref{app:measurement} gives an
unbiased estimator of $F^W_{R,S}(e^{t\cL_\theta^\dagger})$.  Because the
response is local, it may also contain coefficients with identical Pauli
extensions outside $W$; Lemma~\ref{lem:app-global-response-formula} makes this
dependence explicit.

\begin{lemma}[Full-system response formula]
\label{lem:app-global-response-formula}
Let a full-system superoperator have the expansion
\begin{equation}
    \Phi(O)=\sum_{\widetilde R,\widetilde S\in\cP_{[N]}}
    a_{\widetilde R,\widetilde S}\,
    \widetilde R O \widetilde S.
    \label{eq:app-global-expansion}
\end{equation}
For a region $W$ and Pauli strings $R,S\in\cP_W$,
\begin{equation}
    F^W_{R,S}(\Phi)
    =
    \sum_{T\in\cP_{W^c}}
    a_{R\otimes T,\,S\otimes T}.
    \label{eq:app-global-response-formula}
\end{equation}
\end{lemma}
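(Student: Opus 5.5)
By linearity of $F^W_{R,S}$ in $\Phi$, it suffices to treat a single term $\Phi(O)=\widetilde R O\widetilde S$ and show that $F^W_{R,S}$ of it equals $1$ when $\widetilde R=R\otimes T$ and $\widetilde S=S\otimes T$ for some $T\in\cP_{W^c}$, and $0$ otherwise. Summing against $a_{\widetilde R,\widetilde S}$ then gives Eq.~\eqref{eq:app-global-response-formula}.

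For a single term, write $\widetilde R=R_W\otimes R_c$ and $\widetilde S=S_W\otimes S_c$ with $R_W,S_W\in\cP_W$ and $R_c,S_c\in\cP_{W^c}$. Every $P\in\cP_W$ acts as the identity on $W^c$. The trace therefore factorizes as
\[
\frac1D\Tr\!\left[(RPS)^\dagger \widetilde R P\widetilde S\right]
=\frac{1}{2^{|W|}}\Tr_W\!\left[S P R\, R_W P S_W\right]\cdot
\frac{1}{2^{N-|W|}}\Tr_{W^c}\!\left[R_c S_c\right],
\]
using that phase-free Paulis are Hermitian, so $(RPS)^\dagger=SPR$.

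The $W^c$ factor is the Pauli orthogonality relation. It equals $1$ if $R_c=S_c=:T$ and $0$ otherwise, since $R_cS_c$ is then a nonidentity Pauli up to a phase. This already enforces the common extension by the same $T$.

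For the $W$ factor, we average over $P$ using the Pauli twirl identity. For any $X,Y$ on $W$,
\[
\frac{1}{4^{|W|}}\sum_{P\in\cP_W}\frac{1}{2^{|W|}}\Tr[PXPY]
=\frac{1}{4^{|W|}}\Tr[X]\Tr[Y].
\]
This follows from $\frac{1}{4^{|W|}}\sum_P P X P=\Tr[X]\,I/2^{|W|}$, which holds because $P Q P=\pm Q$ and the signs average to zero for every nonidentity $Q$.

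We take $X=RR_W$ and $Y=S_WS$, using cyclicity to write $\Tr[SPRR_WPS_W]=\Tr[P(RR_W)P(S_WS)]$. The $W$ factor becomes $\frac{1}{4^{|W|}}\Tr[RR_W]\Tr[S_WS]$. This equals $1$ exactly when $R_W=R$ and $S_W=S$, and vanishes otherwise by Pauli orthogonality. Combining the two factors proves the single-term claim.

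The main point to handle carefully is the twirl identity together with the bookkeeping of the adjoint and the ordering inside the trace. The phases of the Paulis cancel, since $R^2=I$ for phase-free $R$. No further obstacle is expected.
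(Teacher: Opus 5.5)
Your proof is correct and follows the same approach as the paper: the uniform average over $P\in\cP_W$ (your twirl identity) forces the restrictions of $\widetilde R,\widetilde S$ to $W$ to equal $R,S$, and the normalized trace over $W^c$ forces the two outside factors to coincide. You have simply written out in full the single-term trace factorization and Pauli-orthogonality bookkeeping that the paper's short proof leaves implicit.
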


\begin{proof}
The average over Pauli strings on $W$ forces the restrictions of
$\widetilde R$ and $\widetilde S$ to $W$ to be $R$ and $S$, respectively.  The
normalized trace over $W^c$ forces the two outside restrictions to be equal.
Summing over this common outside Pauli string gives
Eq.~\eqref{eq:app-global-response-formula}.
\end{proof}

\begin{lemma}[Reality relation for averaged Pauli responses]
\label{lem:app-response-conjugacy}
If $\Phi$ is Hermiticity preserving, then
\begin{equation}
    F^W_{S,R}(\Phi)=\overline{F^W_{R,S}(\Phi)}
    \label{eq:app-response-conjugacy}
\end{equation}
for all $R,S\in\cP_W$.
\end{lemma}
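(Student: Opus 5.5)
The plan is to compute $\overline{F^W_{R,S}(\Phi)}$ directly from the definition in Eq.~\eqref{eq:app-raw-response}, using three facts: complex conjugation of a trace, $\overline{\Tr[X]}=\Tr[X^\dagger]$; Hermiticity preservation of $\Phi$, meaning $\Phi(P)^\dagger=\Phi(P^\dagger)=\Phi(P)$ for every phase-free Pauli string $P$ (these are Hermitian); and cyclicity of the trace together with the Pauli product structure, which lets us rewrite the result as the $(S,R)$ response.

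\emph{First step.} Fix $P\in\cP_W$. Then
\[
\overline{\Tr\!\left[(RPS)^\dagger\Phi(P)\right]}
=\Tr\!\left[\Phi(P)^\dagger (RPS)\right]
=\Tr\!\left[\Phi(P)\,RPS\right]
=\Tr\!\left[RPS\,\Phi(P)\right],
\]
using Hermiticity of $\Phi(P)$ and cyclicity.

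\emph{Second step.} Since $R$, $P$, $S$ are Hermitian, $RPS=(SPR)^\dagger$. Hence the summand becomes $\Tr[(SPR)^\dagger\Phi(P)]$, which is exactly the summand of $F^W_{S,R}(\Phi)$ for the same $P$.

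\emph{Third step.} The prefactor $4^{-|W|}D^{-1}$ is real. Summing the identity of the second step over $P\in\cP_W$ therefore gives $\overline{F^W_{R,S}(\Phi)}=F^W_{S,R}(\Phi)$, which is the claimed relation.

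There is no real obstacle here. The only point to check is that phase-free Pauli strings are Hermitian, so that $(RPS)^\dagger=SPR$. No reindexing of $P$ is needed. As a cross-check, one could instead argue via Lemma~\ref{lem:app-global-response-formula}: Hermiticity preservation gives $a_{\widetilde S,\widetilde R}=\overline{a_{\widetilde R,\widetilde S}}$, and summing over the common extension $T$ yields the same identity. The direct computation above is shorter, so that is the one I would present.
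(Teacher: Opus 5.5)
Your proof is correct and follows essentially the same route as the paper's proof. You conjugate each summand, use that $\Phi(P)$ is Hermitian together with cyclicity of the trace to rewrite it as $\Tr[(SPR)^\dagger\Phi(P)]$, and then average over $P\in\cP_W$ with the real prefactor.
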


\begin{proof}
For every phase-free Pauli string $P$, the operator $\Phi(P)$ is Hermitian.
Consequently,
\begin{align}
    \overline{\Tr[(RPS)^\dagger\Phi(P)]}
    &=\Tr[\Phi(P)RPS]\\
    &=\Tr[(SPR)^\dagger\Phi(P)],
\end{align}
where the second equality uses cyclicity of the trace.  Averaging over
$P\in\cP_W$ proves Eq.~\eqref{eq:app-response-conjugacy}.
\end{proof}

Let $\operatorname{HP}$ denote the real vector space of
Hermiticity-preserving maps.  All response functionals below are real-linear
on $\operatorname{HP}$.  Lemma~\ref{lem:app-global-response-formula} accounts
for larger common extensions, while
Theorem~\ref{thm:direct-long-support-H-response} treats Hamiltonian terms whose
support exceeds $S_D$.

\paragraph{Direct response for a long-support Hamiltonian candidate.}
Let $C$ be a candidate Hamiltonian Pauli string with
$|\supp(C)|>S_D$.  Choose a site $u\in\supp(C)$ and a one-site
phase-free Pauli string $P_C$ supported on $u$ such that
$\{P_C,C\}=0$.  Define the phase-free Pauli string $Q_C$, the sign
$\chi_C\in\{\pm1\}$, and the direct raw response by
\begin{align}
    iCP_C&=\chi_CQ_C,
    \label{eq:direct-response-QC}\\
    \mathcal R_C(\Phi)
    &:=
    \frac{\chi_C}{2D}\Tr[Q_C\Phi(P_C)].
    \label{eq:direct-response-RC}
\end{align}

\begin{theorem}[Biorthogonality of direct Hamiltonian responses]
\label{thm:direct-long-support-H-response}
For every candidate Hamiltonian Pauli string $C'$ and every candidate
real dissipative term $\cL_\beta^\dagger$, the direct response associated
with $C$ satisfies
\begin{equation}
    \mathcal R_C(\cL_{C'}^{H,\dagger})=\delta_{C,C'},
    \qquad
    \mathcal R_C(\cL_\beta^\dagger)=0.
    \label{eq:direct-response-duality}
\end{equation}
\end{theorem}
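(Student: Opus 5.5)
The plan is to compute both functionals directly from the definition \eqref{eq:direct-response-RC}, using only Pauli orthogonality: for phase-free Pauli strings $X,Y$, $\frac1D\Tr[XY]$ equals $1$ if $X=Y$ and $0$ otherwise. Any operator of the form (phase)$\times$(Pauli string) therefore has zero trace against $Q_C$ unless that string is exactly $Q_C$. Two preliminary facts are needed.

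\emph{First}, $Q_C$ is a phase-free Pauli string with $\supp(Q_C)=\supp(C)$. It is Hermitian because $C$ and $P_C$ anticommute, so $(iCP_C)^\dagger=-iP_CC=iCP_C$; hence $\chi_C$ is a real sign. Away from $u$, $Q_C$ agrees with $C$. At $u$ it equals a phase times $C_uP_C$. Here $C_u\neq I$ anticommutes with the single-qubit Pauli $P_C$, so $C_u\neq P_C$ and the product is a third nonidentity Pauli. Consequently $|\supp(Q_C)|=|\supp(C)|>S_D$.

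\emph{Second}, $Q_C^2=I$, so $\frac1D\Tr[Q_C\,iCP_C]=\chi_C\frac1D\Tr[Q_C^2]=\chi_C$.

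\textbf{Hamiltonian part.} By \eqref{eq:app-H-adjoint}, $\cL_{C'}^{H,\dagger}(P_C)=i[C',P_C]$. This vanishes when $C'$ commutes with $P_C$, and equals $2iC'P_C$ otherwise. In the anticommuting case, $C'P_C$ is a phase times a Pauli string, and that string coincides with the string of $CP_C\propto Q_C$ only if $C'=C$, since right multiplication by $P_C$ is injective on phase-free strings. So $\mathcal R_C(\cL_{C'}^{H,\dagger})=0$ for $C'\neq C$. For $C'=C$ we get
\[
\mathcal R_C=\frac{\chi_C}{2D}\cdot 2\Tr[Q_C\,iCP_C]=\chi_C^2=1 .
\]

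\textbf{Dissipative part.} By linearity and Eqs.~\eqref{eq:app-D-Re-adjoint}--\eqref{eq:app-D-Im-adjoint}, it suffices to show $\Tr[Q_C\,\cD_{A,B}^\dagger(P_C)]=0$ for all $A,B$ with $|\supp(A)\cup\supp(B)|\le S_D$; the diagonal case is $A=B$. I split into two cases according to whether $u\in\supp(A)\cup\supp(B)$.

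If $u\notin\supp(A)\cup\supp(B)$, then $P_C$ commutes with $A$ and $B$. In that case $BP_CA=P_CBA$ and $\frac12\{BA,P_C\}=P_CBA$, so $\cD_{A,B}^\dagger(P_C)=0$ identically.

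If $u\in\supp(A)\cup\supp(B)$, then each of the three terms $BP_CA$, $BAP_C$, $P_CBA$ is a phase times a Pauli string supported in $\supp(A)\cup\supp(B)$. That set has size at most $S_D<|\supp(Q_C)|$, so none of these strings can equal $Q_C$, and orthogonality kills every trace.

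The only delicate point is the support count. The naive bound $|\supp(A)\cup\supp(B)\cup\{u\}|\le S_D+1$ could tie with $|\supp(C)|$. The case split avoids this: it either makes the map vanish outright or absorbs $u$ into the dissipative support. It also uses the fact that $Q_C$ is nontrivial at $u$, so that $\supp(Q_C)$ is all of $\supp(C)$.
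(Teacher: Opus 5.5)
Your proposal is correct and follows essentially the same route as the paper's proof. It uses the same anticommutation computation $i[C',P_C]=2iC'P_C$ with Pauli orthogonality for the Hamiltonian case, and the same split on whether $u\in\supp(A)\cup\supp(B)$ together with $\supp(Q_C)=\supp(C)$ for the dissipative case; your extra checks that $\chi_C$ is a real sign and that $Q_C$ is nontrivial at $u$ just make explicit what the paper states briefly.
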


\begin{proof}
Since $C$ and $P_C$ anticommute,
\begin{equation}
    \cL_C^{H,\dagger}(P_C)
    =i[C,P_C]
    =2iCP_C
    =2\chi_CQ_C.
\end{equation}
Since $Q_C^2=I$, $\mathcal R_C(\cL_C^{H,\dagger})=1$.  For
$C'\ne C$, either $C'$ commutes with $P_C$, giving zero, or
$\cL_{C'}^{H,\dagger}(P_C)$ is proportional to $C'P_C$.  Its Pauli overlap
with $Q_C\propto CP_C$ can be nonzero only if $C'=C$, which proves the
Hamiltonian identity.

For an elementary dissipative term
\begin{equation}
    \cD_{A,B}^\dagger(O)
    =
    BOA-\frac12\{BA,O\},
\end{equation}
set
\(
V=\supp(A)\cup\supp(B)
\).
If \(u\notin V\), then \(P_C\) commutes with \(A\), \(B\), and \(BA\).
Consequently,
\begin{equation}
    BP_CA=P_CBA,
    \qquad
    \{BA,P_C\}=2P_CBA,
\end{equation}
and hence
\begin{equation}
    \cD_{A,B}^\dagger(P_C)=0.
\end{equation}

Now suppose that \(u\in V\).  Since \(P_C\) is supported only on \(u\),
we have \(\supp(P_C)\subseteq V\).  Every term in
\(\cD_{A,B}^\dagger(P_C)\) is therefore supported inside \(V\), and hence
\begin{equation}
    \supp\!\left(\cD_{A,B}^\dagger(P_C)\right)
    \subseteq V,
    \qquad
    |V|\le S_D.
    \label{eq:direct-response-dissipative-support}
\end{equation}
On the other hand, multiplication by the one-site Pauli \(P_C\) does not
remove any site from the support of \(C\), because \(P_C\) anticommutes
with \(C\) at \(u\).  Thus
\begin{equation}
    \supp(Q_C)=\supp(C),
    \qquad
    |\supp(Q_C)|>S_D.
\end{equation}
It follows that every Pauli string appearing in
\(\cD_{A,B}^\dagger(P_C)\) is distinct from \(Q_C\).  Pauli
orthogonality therefore gives
\begin{equation}
    \Tr\!\left[
        Q_C\cD_{A,B}^\dagger(P_C)
    \right]
    =0.
\end{equation}
The same conclusion holds for
\(\cD_{A,B}^{\rm Re,\dagger}\) and
\(\cD_{A,B}^{\rm Im,\dagger}\) by linearity.
\end{proof}

Let $\cJ_\cA:=\{1,\ldots,M\}$ be the index set for a square real raw-response
family \(\{\cF_j:j\in\cJ_\cA\}\), containing one bounded real response for
each candidate coordinate.  Its four response types are as follows.
\begin{enumerate}
\item For a Hamiltonian coefficient $h_C$ with $|\supp(C)|\le S_D$, use
\begin{equation}
    \cF_C^H(\Phi)
    :=
    \frac1{2i}\left(
        F^{\supp(C)}_{C,I}(\Phi)
        -F^{\supp(C)}_{I,C}(\Phi)
    \right).
    \label{eq:app-H-raw-response-row}
\end{equation}

\item For $|\supp(C)|>S_D$, use the direct response
\begin{equation}
    \cF_C^H:=\mathcal R_C
    \label{eq:app-H-direct-real-row}
\end{equation}
from Eq.~\eqref{eq:direct-response-RC}.

\item For a diagonal dissipative coefficient $K_{AA}$, use
\begin{equation}
    \cF_A^{\rm diag}:=F^{\supp(A)}_{A,A}.
    \label{eq:app-diag-raw-response-row}
\end{equation}

\item For an ordered off-diagonal pair $A<B$, set
\begin{equation}
    W_{AB}=\supp(A)\cup\supp(B),
    \qquad
    F_{BA}=F^{W_{AB}}_{B,A},
    \qquad
    F_{AB}=F^{W_{AB}}_{A,B}.
    \label{eq:app-offdiag-raw-response-rows}
\end{equation}
Define the real off-diagonal raw responses
\begin{align}
    \cF_{AB}^{\rm Re}(\Phi)
    &:=
    \frac12\left(F_{BA}(\Phi)+F_{AB}(\Phi)\right),
    \label{eq:app-real-offdiag-raw-Re}\\
    \cF_{AB}^{\rm Im}(\Phi)
    &:=
    -\frac{i}{2}\left(F_{BA}(\Phi)-F_{AB}(\Phi)\right).
    \label{eq:app-real-offdiag-raw-Im}
\end{align}
If both $k^{\rm R}_{AB}$ and $k^{\rm I}_{AB}$ are candidate, include both real rows.  If only one
coordinate is candidate, include only its corresponding real row.  By
Lemma~\ref{lem:app-response-conjugacy}, all of
$\cF_C^H$, $\cF_A^{\rm diag}$, $\cF_{AB}^{\rm Re}$,
$\cF_{AB}^{\rm Im}$, and $\mathcal R_C$ are real on every
Hermiticity-preserving map used in the protocol.
\end{enumerate}

The local row normalizations remain valid in the presence of larger-support
contamination.  For a Hamiltonian term,
\begin{equation}
    \cF_C^H(\cL_C^{H,\dagger})=1.
    \label{eq:app-H-local-evals}
\end{equation}
For an off-diagonal pair with both real coordinates candidate, the local real
block is
\begin{equation}
    \begin{pmatrix}
    \cF_{AB}^{\rm Re}(\cD_{A,B}^{\rm Re,\dagger}) &
    \cF_{AB}^{\rm Re}(\cD_{A,B}^{\rm Im,\dagger}) \\
    \cF_{AB}^{\rm Im}(\cD_{A,B}^{\rm Re,\dagger}) &
    \cF_{AB}^{\rm Im}(\cD_{A,B}^{\rm Im,\dagger})
    \end{pmatrix}
    =
    \begin{pmatrix}
    \frac12 & 0 \\
    0 & \frac12
    \end{pmatrix},
    \label{eq:app-offdiag-local-block}
\end{equation}
which is invertible.  The row combinations
$2\cF_{AB}^{\rm Re}$ and $2\cF_{AB}^{\rm Im}$
are unit rows for $\cD_{A,B}^{\rm Re,\dagger}$ and
$\cD_{A,B}^{\rm Im,\dagger}$, respectively.  These local normalizations remain
valid before the larger common-extension terms are removed by the global
inverse below.

In the remainder of the Supplementary Information, $\cF_j$ for
$j\in\cJ_\cA$ denotes a generic element of this hybrid family of real raw
responses.  Let $\mathbf T$ be the fixed diagonal row-normalization matrix
that multiplies the real-part and imaginary-part off-diagonal response rows by
two and leaves all other rows unchanged.  Thus
$\|\mathbf T\|_{\infty\to\infty}\le2$.

Define the candidate response matrix by
\begin{equation}
    \mathbf G_{j\beta}=\cF_j(\cL_\beta^\dagger)\in\R,
    \qquad
    j\in\cJ_\cA,
    \quad
    \beta\in\cA.
    \label{eq:app-response-G}
\end{equation}
All entries of $\mathbf G$ are known from the candidate dictionary.

\begin{proposition}[Invertibility of the candidate response matrix]
\label{prop:app-normalized-coordinates}
The square response matrix $\mathbf G\in\R^{M\times M}$ is invertible.
\end{proposition}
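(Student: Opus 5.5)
The plan is to show that, after a suitable ordering of rows and columns, $\mathbf G$ is block upper triangular with invertible diagonal blocks. I place all Hamiltonian coordinates first and then the dissipative coordinates. Within the dissipative part I order by decreasing support size $|\supp(\cL_\beta^\dagger)|$, grouping the real and imaginary coordinates of the same pair $(A,B)$ together. Each row $j$ is indexed by the same label as its column. Every entry $\mathbf G_{j\beta}$ will be read off from Lemma~\ref{lem:app-global-response-formula}. This lemma picks out the coefficients $a_{R\otimes T,S\otimes T}$ of the left--right Pauli expansion of $\cL_\beta^\dagger$.

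\textbf{Step 1: expansions of the columns.} For a Hamiltonian column, Eq.~\eqref{eq:app-H-adjoint} shows that the only nonzero coefficients are $a_{C',I}=i$ and $a_{I,C'}=-i$. For $\cD_{A',B'}^\dagger$, Eq.~\eqref{eq:app-D-adjoint} gives a coefficient $1$ on the pair $(B',A')$. The anticommutator contributes coefficients only on pairs of the form $(\pm\text{phase}\cdot B'A',I)$ and $(I,\pm\text{phase}\cdot B'A')$, that is, pairs with one identity factor. The real and imaginary combinations are linear combinations of these.

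\textbf{Step 2: the Hamiltonian block is the identity.} Consider a Hamiltonian row $\cF_C^H$ with $|\supp(C)|\le S_D$, evaluated on a Hamiltonian column $C'$. The lemma requires $C'=C\otimes T$ together with the right factor $I=I\otimes T$, which forces $T=I$ and $C'=C$. Combined with Eq.~\eqref{eq:app-H-local-evals}, this shows the block is the identity. For long-support rows, Theorem~\ref{thm:direct-long-support-H-response} gives $\delta_{C,C'}$ directly. Dissipative columns may contribute to short Hamiltonian rows, but those entries lie in the off-diagonal block and are harmless.

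\textbf{Step 3: dissipative rows vanish on Hamiltonian columns.} A dissipative row has the form $F^W_{R,S}$ with $R,S\in\{A,B\}$. Both $R$ and $S$ are nonidentity on $W$, because the strings in $\cB$ are traceless and $W$ contains their supports. A Hamiltonian column has one identity factor in each of its pairs, so it cannot match. Hence the lower-left block is zero.

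\textbf{Step 4: triangularity of the dissipative block.} By the same reasoning as in Step 3, the identity-factor anticommutator pairs never contribute to dissipative rows. Only the pair $(B',A')$ can contribute, and it does so only when $B'=R\otimes T$ and $A'=S\otimes T$ for some $T\in\cP_{W^c}$.

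If $T\neq I$, then $\supp(\cL_\beta^\dagger)\supseteq W\cup\supp(T)$. In that case the column's support is strictly larger than $W$, so the entry lies in an earlier column group. If $T=I$, then $\{A',B'\}=\{A,B\}$, so the column belongs to the same local group.

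Inside each group the diagonal entries are as follows. A diagonal coordinate has the entry $F^{\supp A}_{A,A}(\cD_{A,A}^\dagger)=1$. An off-diagonal pair with both coordinates candidate gives the block of Eq.~\eqref{eq:app-offdiag-local-block}. When only one coordinate of the pair is candidate, the group is the single entry $1/2$. Nonredundancy ensures that the pairs $(A,B)$ and $(B,A)$ are not both listed as separate candidates, so each group is exactly one of these blocks.

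Hence $\mathbf G$ is block upper triangular. Its diagonal blocks are the identity, $1$, $\tfrac12$, or $\mathrm{diag}(\tfrac12,\tfrac12)$, so $\det\mathbf G\neq0$.

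\textbf{Expected main obstacle.} The main difficulty will be the bookkeeping in Step 4. I must check that $T\neq I$ really enlarges the support of the column, and that a column with the same support but different strings $(A',B')$ cannot produce a nonzero entry. I would handle this by the restriction argument: matching on $W$ pins down the restrictions of $A'$ and $B'$ to $W$, and $T=I$ forces both strings to lie entirely inside $W$. Together these leave no freedom other than $\{A',B'\}=\{A,B\}$.
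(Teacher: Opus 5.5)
Your proposal is correct and follows essentially the paper's own argument. In both, Lemma~\ref{lem:app-global-response-formula} shows that dissipative rows vanish on Hamiltonian columns and mix only with strictly larger common extensions. The Hamiltonian block is the identity, with Theorem~\ref{thm:direct-long-support-H-response} handling the direct rows. Ordering dissipative terms by decreasing support then gives a triangular dissipative block with nonsingular diagonal blocks. The only differences are that you skip the row rescaling $\mathbf T$ and place the Hamiltonian block first.

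One cosmetic slip: with decreasing-support ordering, the off-group entries of a dissipative row sit in earlier columns. Your dissipative block is therefore block lower triangular, and $\mathbf G$ is not globally block upper triangular as you state. The paper's Hamiltonian-last ordering makes $\mathbf G$ globally block lower triangular instead. The determinant still factors as $\det I_H\cdot\det\mathbf G_D\neq0$, so your conclusion stands.
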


\begin{proof}
Set $\widetilde{\mathbf G}=\mathbf T\mathbf G$.  Order dissipative columns
before Hamiltonian columns and order dissipative terms by decreasing
support size.  Dissipative response rows vanish on Hamiltonian columns:
Hamiltonian terms contain only the one-sided left--right terms $(C,I)$
and $(I,C)$, whereas the primary dissipative rows have nonidentity Pauli
strings on both sides.  By Lemma~\ref{lem:app-global-response-formula}, a
dissipative row can therefore mix only with dissipative columns related by a
common outside extension, such as $A\mapsto AT$ or
$(A,B)\mapsto(AT,BT)$.  The chosen ordering makes the dissipative block
triangular with invertible unit diagonal blocks.

For the Hamiltonian block, $\cF_C^H$ evaluates to one on
$\cL_C^{H,\dagger}$ and to zero on every different Hamiltonian term.
The identity side excludes larger-support Hamiltonian common extensions.  A
short-support Hamiltonian row may receive known dissipative contamination,
whereas Theorem~\ref{thm:direct-long-support-H-response} shows that every
long-support direct row has none.  Consequently,
\begin{equation}
    \widetilde{\mathbf G}
    =
    \begin{pmatrix}
        \mathbf G_D & 0 \\
        * & I_H
    \end{pmatrix},
    \label{eq:app-candidate-block-triangular}
\end{equation}
where $\mathbf G_D$ is triangular with invertible diagonal blocks.  Hence
$\widetilde{\mathbf G}$, and therefore $\mathbf G$, is invertible.
\end{proof}

Define the response inverse and the normalized response functionals by
\begin{equation}
    \mathbf H:=\mathbf G^{-1}\in\R^{M\times M},
    \qquad
    \mathbf H\mathbf G=I_M,
\end{equation}
\begin{equation}
    \mathsf C_\alpha(\Phi)
    :=\sum_{j\in\cJ_\cA}(\mathbf H)_{\alpha j}\cF_j(\Phi),
    \qquad
    \mathsf C_\alpha:\operatorname{HP}\to\R.
    \label{eq:app-C-alpha-left-inverse}
\end{equation}
Write
\begin{equation}
    L_C
    :=
    \max_\alpha\|(\mathbf H)_{\alpha\cdot}\|_1
    =\|\mathbf H\|_{\infty\to\infty},
\end{equation}
and define the maximum row sparsity by
\begin{equation}
    r_0
    :=
    \max_\alpha
    \left|
    \left\{j:(\mathbf H)_{\alpha j}\ne0\right\}
    \right|.
    \label{eq:response-inverse-row-sparsity}
\end{equation}
The total number of nonzero entries is
\begin{equation}
    \operatorname{nnz}(\mathbf H)
    :=
    \left|
    \left\{(\alpha,j):(\mathbf H)_{\alpha j}\ne0\right\}
    \right|.
    \label{eq:app-response-inverse-nnz}
\end{equation}
For a long-support Hamiltonian coordinate $\alpha$ associated with the Pauli
string $C$, the direct response row is an isolated unit coordinate in the
block form Eq.~\eqref{eq:app-candidate-block-triangular}.  Consequently,
\begin{equation}
    \mathsf C_\alpha=\mathcal R_C,
    \qquad
    |\supp(C)|>S_D.
    \label{eq:app-long-support-normalized-functional}
\end{equation}

We can therefore define the normalized response coordinates uniformly over
the full candidate dictionary by
\begin{equation}
    g_\alpha(t,\theta)
    :=\mathsf C_\alpha\!\left(e^{t\cL_\theta^\dagger}-\Id\right),
    \qquad
    g(t,\theta):=(g_\alpha(t,\theta))_{\alpha\in\cA}.
    \label{eq:app-normalized-response-function}
\end{equation}
For a long-support Hamiltonian coordinate $\alpha$ associated with $C$, this
definition explicitly specializes to
\begin{equation}
    g_\alpha(t,\theta)
    =
    \mathcal R_C\!\left(e^{t\cL_\theta^\dagger}-\Id\right),
    \qquad
    |\supp(C)|>S_D.
    \label{eq:app-long-support-normalized-response}
\end{equation}

\begin{corollary}[Unified first-order expansion of the normalized response map]
\label{cor:app-normalized-response-first-order}
For every coefficient vector $\theta\in\R^M$,
\begin{equation}
    g(t,\theta)
    =
    t\theta+\mathcal{O}(t^2)
    \qquad
    (t\to0),
    \label{eq:app-normalized-response-first-order}
\end{equation}
where the remainder is understood in the vector $\ell_\infty$ norm.
\end{corollary}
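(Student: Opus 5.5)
Expanding the semigroup to first order and using linearity of the normalized functionals reduces the first-order term to $t\theta$ via $\mathbf H\mathbf G=I_M$; the only real work is controlling the remainder uniformly in the system size.

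\textbf{Step 1: first-order expansion.} Fix $\theta\in\R^M$. The adjoint generator $\cL_\theta^\dagger$ is a bounded superoperator on the finite-dimensional operator space, so the exponential series converges in the induced norm $\|\cdot\|_{\infty\to\infty}$. This gives
\begin{equation*}
    e^{t\cL_\theta^\dagger}-\Id
    =t\cL_\theta^\dagger+\mathcal E_t,
    \qquad
    \|\mathcal E_t\|_{\infty\to\infty}
    \le e^{|t|\,\|\cL_\theta^\dagger\|}-1-|t|\,\|\cL_\theta^\dagger\|
    =\mathcal O(t^2).
\end{equation*}
Both $e^{t\cL_\theta^\dagger}-\Id$ and $\cL_\theta^\dagger$ are Hermiticity preserving, because $\theta$ is real and the normalized terms are Hermiticity preserving. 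Hence $\mathcal E_t\in\operatorname{HP}$, and each $\mathsf C_\alpha$ is defined and real-linear on it.

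\textbf{Step 2: the linear term.} Apply $\mathsf C_\alpha$ using Eq.~\eqref{eq:app-C-alpha-left-inverse} and $\mathbf G_{j\beta}=\cF_j(\cL_\beta^\dagger)$:
\begin{equation*}
    \mathsf C_\alpha(t\cL_\theta^\dagger)
    =t\sum_\beta\theta_\beta\sum_j\mathbf H_{\alpha j}\mathbf G_{j\beta}
    =t\theta_\alpha .
\end{equation*}
Since $\mathbf G$ is square and invertible by Proposition~\ref{prop:app-normalized-coordinates}, $\mathbf H\mathbf G=I_M$ holds. For long-support Hamiltonian coordinates, Eq.~\eqref{eq:app-long-support-normalized-functional} together with Theorem~\ref{thm:direct-long-support-H-response} gives the same identity directly.

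\textbf{Step 3: bounding the remainder.} Each raw response is controlled in operator norm:
\begin{equation*}
    |F^W_{R,S}(\Phi)|\le\max_P\tfrac1D\|RPS\|_1\|\Phi(P)\|_\infty\le\|\Phi\|_{\infty\to\infty},
    \qquad
    |\mathcal R_C(\Phi)|\le\tfrac12\|\Phi\|_{\infty\to\infty}.
\end{equation*}
The real combinations $\cF_j$ are averages of at most two such raw responses, so $|\cF_j(\Phi)|\le\|\Phi\|_{\infty\to\infty}$. Therefore
\begin{equation*}
    |\mathsf C_\alpha(\mathcal E_t)|\le L_C\,\|\mathcal E_t\|_{\infty\to\infty}.
\end{equation*}
Taking the maximum over $\alpha$ gives $\|g(t,\theta)-t\theta\|_\infty\le L_C\,\mathcal O(t^2)$.

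\textbf{Main obstacle.} For a fixed finite system the statement is now proved, since $L_C$ and $\|\cL_\theta^\dagger\|$ are finite. The delicate point is that $\|\cL_\theta^\dagger\|$ grows with $N$. A size-independent $\mathcal O(t^2)$ constant needs a better estimate of $\mathsf C_\alpha(\mathcal E_t)$. Each $\cF_j$ probes only Pauli inputs supported near a bounded region, so the plan is to replace the global bound by a local Taylor/cluster expansion of $e^{t\cL_\theta^\dagger}(P)$ with $P$ local. Bounding nested commutator-type terms $\cL_{\alpha_1}^\dagger\cdots\cL_{\alpha_k}^\dagger(P)$ by $(2(d+1))^k$-type counts then yields a remainder $\mathcal O(L_C(\Lambda t)^2)$ for $\Lambda t<1$. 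For the corollary as stated, the finite-dimensional argument above suffices, and the uniform version can be deferred to Appendix~\ref{app:response-taylor}.
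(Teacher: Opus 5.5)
Your proposal is correct and follows the paper's own argument. You expand $e^{t\cL_\theta^\dagger}-\Id$ to first order, use linearity of $\mathsf C_\alpha$ together with $\mathbf H\mathbf G=I_M$ (handling long-support Hamiltonian rows via Eq.~\eqref{eq:app-long-support-normalized-functional}), and collect the finitely many componentwise remainders in $\ell_\infty$. Your explicit estimate $|\mathsf C_\alpha(\mathcal E_t)|\le L_C\|\mathcal E_t\|_{\infty\to\infty}$ makes the paper's fixed-$N$ remainder concrete, and you correctly note that size-independent constants come only from the local Taylor bounds of Appendix~\ref{app:response-taylor}, which the corollary does not require.
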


\begin{proof}
For each $\alpha\in\cA$, expanding
Eq.~\eqref{eq:app-normalized-response-function} at $t=0$ and using
$\cL_\theta^\dagger=\sum_{\beta\in\cA}
\theta_\beta\cL_\beta^\dagger$ gives
\begin{align}
    g_\alpha(t,\theta)
    &=
    t\,\mathsf C_\alpha(\cL_\theta^\dagger)
    +\mathcal{O}(t^2) \notag\\
    &=
    t\sum_{\beta\in\cA}\theta_\beta
    \sum_{j\in\cJ_\cA}
    (\mathbf H)_{\alpha j}
    \cF_j(\cL_\beta^\dagger)
    +\mathcal{O}(t^2) \notag\\
    &=
    t\sum_{\beta\in\cA}\theta_\beta
    (\mathbf H\mathbf G)_{\alpha\beta}
    +\mathcal{O}(t^2)
    =
    t\theta_\alpha+\mathcal{O}(t^2).
\end{align}
The long-support Hamiltonian case is included in the same calculation through
Eqs.~\eqref{eq:app-long-support-normalized-functional}
and~\eqref{eq:app-long-support-normalized-response}.  Since the dictionary is
finite, collecting the componentwise remainders in the $\ell_\infty$ norm
proves Eq.~\eqref{eq:app-normalized-response-first-order}.
\end{proof}

\begin{remark}[Role of the square construction]
\label{rem:app-no-r0}
The square response family contains exactly $M$ raw responses, so the union
bound in Sec.~\ref{app:measurement} involves $Mn_t$ response--time pairs.
This counting fact does not imply that the inverse $\mathbf{H}$ is sparse.  If
$\mathbf{H}$ is dense, the classical multiplication by $\mathbf{H}$ may cost $\mathcal{O}(M^2)$ per time
point.  Theorem~\ref{thm:uniform-response-inverse} gives sufficient
bounded-overlap conditions under which $r_0=\mathcal{O}(1)$ and hence
$\operatorname{nnz}(\mathbf{H})=\mathcal{O}(M)$, so the post-processing remains linear up
to local constants.
\end{remark}

\section{Measurement of raw and normalized responses}
\label{app:measurement}

To estimate an averaged raw response
$F^W_{R,S}(e^{t\cL_\theta^\dagger})$, use the following protocol.
\begin{enumerate}
\item Sample $P\in\cP_W$ and write
\begin{equation}
    RPS=\omega(P)Q(P),
    \qquad
    \omega(P)\in\{\pm1,\pm i\},
    \label{eq:app-RPS-decomp}
\end{equation}
where $Q(P)$ is phase-free and Hermitian.

\item If $Q(P)\ne I$, sample $\sigma\in\{\pm1\}$ uniformly and prepare an
ensemble with mean state
\begin{equation}
    \rho_{\sigma,Q(P)}=\frac{I+\sigma Q(P)}{D}.
    \label{eq:app-pauli-mixture}
\end{equation}
Evolve for time $t$, measure $P$ with outcome $\mu\in\{\pm1\}$, and record
\begin{equation}
    Z=\omega(P)^*\sigma\mu.
    \label{eq:app-Z}
\end{equation}
If $Q(P)=I$, omit the random sign, use a maximally mixed input, and record
$Z=\omega(P)^*\mu$.

\item For a direct response $\mathcal R_C$, replace $Q(P)$ by $Q_C$, sample
$\sigma\in\{\pm1\}$, and prepare
\begin{equation}
    \rho_{\sigma,Q_C}=\frac{I+\sigma Q_C}{D},
\end{equation}
Evolve for time $t$, measure $P_C$ with outcome $\mu_C\in\{\pm1\}$, and record
\begin{equation}
    Z_C=\frac{\chi_C}{2}\sigma\mu_C.
    \label{eq:app-ZC}
\end{equation}
\end{enumerate}

When $Q(P)$ has multiqubit support, realize its mixed state by sampling product
eigenstates of the nonidentity single-qubit Pauli factors, with local
eigenvalues uniform subject to product $\sigma$; sample all remaining qubits
from any product ensemble with one-qubit mean $I/2$.  Use the same realization
with $Q_C$ for a direct response.  The proof below depends only on the
ensemble-average input state.

\begin{theorem}[Unbiased raw response estimation]
\label{thm:app-unbiased}
Let $\cE_t=e^{t\cL_\theta}$ and
$\cE_t^\dagger=e^{t\cL_\theta^\dagger}$.  The random variable in
Eq.~\eqref{eq:app-Z} satisfies
\begin{equation}
    \E[Z]=F^W_{R,S}(\cE_t^\dagger).
\end{equation}
For every direct raw response in the hybrid set, the variable in
Eq.~\eqref{eq:app-ZC} satisfies
\begin{equation}
    \E[Z_C]=\mathcal R_C(\cE_t^\dagger).
    \label{eq:app-direct-unbiased}
\end{equation}
\end{theorem}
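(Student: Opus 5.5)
The argument is a direct computation of expectations in the Heisenberg picture. It relies on three facts: the measurement statistics are linear in the input state, the ensemble enters only through its average, and the Schrödinger/Heisenberg duality \(\Tr[P\,\cE_t(\rho)]=\Tr[\cE_t^\dagger(P)\rho]\) holds.

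\emph{Case \(Q(P)\neq I\).} First fix \(P\) and \(\sigma\). Measuring the Pauli observable \(P\) on \(\cE_t(\rho)\) gives \(\E[\mu\mid P,\sigma,\rho]=\Tr[P\cE_t(\rho)]\). The product-eigenstate realization has mean state exactly \(\rho_{\sigma,Q(P)}\): each nonidentity factor has local eigenvalues uniform subject to their product being \(\sigma\), and the other qubits average to \(I/2\). So we may replace \(\rho\) by \(\rho_{\sigma,Q(P)}\). This gives
\[
\E[\sigma\mu\mid P]=\tfrac12\sum_{\sigma=\pm1}\sigma\,\frac{1}{D}\Tr\!\left[\cE_t^\dagger(P)(I+\sigma Q(P))\right]=\frac1D\Tr\!\left[Q(P)\,\cE_t^\dagger(P)\right].
\]
The \(\sigma\)-averaging cancels the identity part. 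Next use \(RPS=\omega(P)Q(P)\). Since \(Q(P)\) is Hermitian and phase-free, \((RPS)^\dagger=\omega(P)^*Q(P)\). Hence \(\omega(P)^*\E[\sigma\mu\mid P]=\frac1D\Tr[(RPS)^\dagger\cE_t^\dagger(P)]\). Averaging over \(P\) uniform in \(\cP_W\) supplies the factor \(4^{-|W|}\). This reproduces Eq.~\eqref{eq:app-raw-response} with \(\Phi=\cE_t^\dagger\).

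\emph{Case \(Q(P)=I\).} The input is maximally mixed, so \(\E[\mu\mid P]=\frac1D\Tr[\cE_t^\dagger(P)]\). This equals \(\frac1D\Tr[Q(P)\cE_t^\dagger(P)]\) with \(Q(P)=I\). Multiplying by \(\omega(P)^*\) gives the same summand as before. Combining the two branches by the law of total expectation over \(P\) yields \(\E[Z]=F^W_{R,S}(\cE_t^\dagger)\).

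\emph{Direct response.} The same computation with \(Q_C\) and \(P_C\) in place of \(Q(P)\) and \(P\), and with no \(P\)-average, gives \(\E[\sigma\mu_C]=\frac1D\Tr[Q_C\cE_t^\dagger(P_C)]\). Multiplying by \(\chi_C/2\) reproduces Eq.~\eqref{eq:direct-response-RC}.

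\emph{Main point to check.} No step is substantially difficult. The step needing care is that the phase-weighted outcome picks up the correct conjugate phase \(\omega(P)^*\) rather than \(\omega(P)\). This follows because \(Q(P)\) is Hermitian, so conjugating \(RPS\) conjugates only the scalar phase. A second check is that the constrained product-eigenstate sampling has exactly the stated mean. This holds because averaging \(\prod_k \tfrac12(I+s_kQ_k)\) over sign tuples \((s_k)\) with \(\prod_k s_k=\sigma\) kills every proper sub-product of the \(Q_k\) except the identity and the full product \(\sigma Q\).
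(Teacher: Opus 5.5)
Your proposal is correct and follows the paper's proof step by step: replace the ensemble by its average input state, use the $\sigma$-average to cancel the identity component, pass to $\mathcal E_t^\dagger$ by Heisenberg duality, and use $(RPS)^\dagger=\omega(P)^*Q(P)$ before averaging over $P$. The direct response is handled the same way with $Q_C,P_C$ and the prefactor $\chi_C/2$. Your explicit check that the constrained product-eigenstate ensemble has mean $(I+\sigma Q)/D$ is also correct; it fills in a step the paper only asserts.
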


\begin{proof}
By linearity of the channel and the Born rule, averaging the product-state
realizations is equivalent to using their average density matrices in
Eq.~\eqref{eq:app-pauli-mixture}.  Condition on $P$.  Since
$RPS=\omega(P)Q(P)$,
$(RPS)^\dagger=\omega(P)^*Q(P)$.  For $Q(P)\ne I$, conditioning on
$\sigma$ gives
\begin{equation}
 \E[\mu\mid\sigma,P]
 =D^{-1}\Tr[P\cE_t(I)]
 +\sigma D^{-1}\Tr[P\cE_t(Q(P))].
\end{equation}
Multiplication by $\omega(P)^*\sigma$ and averaging over the sign removes the
first term.  The adjoint relation then gives
\begin{equation}
    \E[Z\mid P]
    =D^{-1}\Tr[(RPS)^\dagger\cE_t^\dagger(P)].
\end{equation}
The same expression holds when $Q(P)=I$.  Averaging over the uniform
$P\in\cP_W$ proves the first claim.  For the direct response, averaging over
$\sigma$ gives
\(
 \E[Z_C]=(\chi_C/2D)\Tr[Q_C\cE_t^\dagger(P_C)]
 =\mathcal R_C(\cE_t^\dagger)
\),
which proves Eq.~\eqref{eq:app-direct-unbiased}.
\end{proof}

The variable $Z$ may be complex when $\omega(P)=\pm i$.  The real raw
coordinates in Sec.~\ref{app:responses} use the following single-shot
estimators:
\begin{align}
    Z_C^H
    &:=
    \operatorname{Im}Z_{C,I},
    &
    \E[Z_C^H]
    &=
    \cF_C^H(\cE_t^\dagger),
    \label{eq:app-real-estimator-H}\\
    Z_A^{\rm diag}
    &:=
    \operatorname{Re}Z_{A,A},
    &
    \E[Z_A^{\rm diag}]
    &=
    \cF_A^{\rm diag}(\cE_t^\dagger),
    \label{eq:app-real-estimator-diag}\\
    Z_{AB}^{\rm Re}
    &:=
    \operatorname{Re}Z_{B,A},
    &
    \E[Z_{AB}^{\rm Re}]
    &=
    \cF_{AB}^{\rm Re}(\cE_t^\dagger),
    \label{eq:app-real-estimator-offdiag-Re}\\
    Z_{AB}^{\rm Im}
    &:=
    \operatorname{Im}Z_{B,A},
    &
    \E[Z_{AB}^{\rm Im}]
    &=
    \cF_{AB}^{\rm Im}(\cE_t^\dagger).
    \label{eq:app-real-estimator-offdiag-Im}
\end{align}
Here $Z_{R,S}$ denotes the variable in Eq.~\eqref{eq:app-Z} for the averaged
response $F^W_{R,S}$.  The equalities follow from
Theorem~\ref{thm:app-unbiased} and
Lemma~\ref{lem:app-response-conjugacy}.  Each estimator is real, unbiased, and
bounded by one in magnitude; the direct-response estimator is bounded by
$1/2$.

\begin{proposition}[Simultaneous response-estimation complexity]
\label{prop:app-response-samples}
Assume the square response family indexed by $\cJ_\cA$ and
$\max_\alpha\norm{(\mathbf{H})_{\alpha\cdot}}_1\le L_C$.
For \(0<\delta<1\), if all $M$ normalized responses are needed at $n_t$
evolution times with coordinatewise error at most $\varepsilon_g$, then, with
probability at least \(1-\delta\), it suffices to use
\begin{equation}
    N_{\rm resp}
    =
    \mathcal{O}\!\left(
        M n_t\,
        \frac{L_C^2}{\varepsilon_g^2}
        \log\frac{4M n_t}{\delta}
    \right)
    \label{eq:app-response-sample-complexity}
\end{equation}
state preparations and measurements.
\end{proposition}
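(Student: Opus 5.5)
The proof combines Hoeffding concentration for each raw response with a union bound over all $Mn_t$ response--time pairs, and then propagates the errors through the rows of $\mathbf H$. At each of the $n_t$ evolution times $t_k$ and for each raw response index $j\in\cJ_\cA$, we collect $n$ independent shots of the corresponding real single-shot estimator from Eqs.~\eqref{eq:app-real-estimator-H}--\eqref{eq:app-real-estimator-offdiag-Im}, or of $Z_C$ for a direct response. We let $\widehat{\cF}_j(t_k)$ be the empirical mean of these shots. By Theorem~\ref{thm:app-unbiased} and Lemma~\ref{lem:app-response-conjugacy}, each shot is a real random variable with mean $\cF_j(\cE_{t_k}^\dagger)$ and magnitude at most one. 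Different shots use independent preparations, so the shots are independent.

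Hoeffding's inequality for variables in $[-1,1]$ gives
\[
\Pr\bigl[|\widehat{\cF}_j(t_k)-\cF_j(\cE_{t_k}^\dagger)|>\eta\bigr]\le 2e^{-n\eta^2/2}.
\]
We set $\eta=\varepsilon_g/L_C$ and require the failure probability of each pair to be at most $\delta/(Mn_t)$. A union bound over the $Mn_t$ pairs then shows that
\[
n=\left\lceil \frac{2L_C^2}{\varepsilon_g^2}\log\frac{2Mn_t}{\delta}\right\rceil
\]
shots per pair suffice. With this choice, all raw estimates are simultaneously $\eta$-accurate with probability at least $1-\delta$. Because the statement's logarithm is $\log(4Mn_t/\delta)$, there is slack to split real and imaginary parts separately if the complex $Z$ is used, but this only changes constants.

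On this good event, we form $\widehat g_\alpha(t_k)=\sum_j \mathbf H_{\alpha j}\widehat{\cF}_j(t_k)$. For the identity part, $\mathsf C_\alpha(\Id)$ is known exactly from the dictionary and is subtracted deterministically. Linearity of $\mathsf C_\alpha$ gives
\[
|\widehat g_\alpha(t_k)-g_\alpha(t_k,\theta)|\le \sum_j|\mathbf H_{\alpha j}|\,\eta\le L_C\eta=\varepsilon_g,
\]
uniformly in $\alpha$ and $k$. The total number of preparations is $Mn_t\,n$, which matches Eq.~\eqref{eq:app-response-sample-complexity}.

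The main subtlety is not the concentration bound but the bookkeeping that makes the union bound involve only $Mn_t$ events. The square response family contains exactly $M$ raw functionals, as noted in Remark~\ref{rem:app-no-r0}. This holds even though each $\mathsf C_\alpha$ may combine many of them, so no extra factor of $r_0$ appears. The remaining point to check is that each single-shot estimator, including the one obtained from the real or imaginary part of $\omega(P)^*\sigma\mu$, is indeed bounded by one and unbiased for the real functional. That follows directly from the cited estimator identities.
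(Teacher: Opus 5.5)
Your proposal is correct and follows essentially the same route as the paper: estimate each of the $M$ real raw responses at each of the $n_t$ times to accuracy $\varepsilon_g/L_C$ by applying Hoeffding's inequality to the bounded unbiased single-shot estimators, take a union bound over the $Mn_t$ response--time pairs, and propagate the error through the rows of $\mathbf H$ using $\|(\mathbf H)_{\alpha\cdot}\|_1\le L_C$. Your explicit constants, which give $\log(2Mn_t/\delta)$ instead of $\log(4Mn_t/\delta)$, agree with the stated bound up to constants, and so does the deterministic subtraction of $\mathsf C_\alpha(\Id)$.
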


\begin{proof}
For every \(j\in\cJ_\cA\), estimate the real raw response \(\cF_j\) at every
required time to accuracy
$\varepsilon_g/L_C$.  Then, for every $\alpha$,
\begin{equation}
    \abs{\widehat{\mathsf C}_\alpha-\mathsf C_\alpha}
    \le
    \sum_{j\in\cJ_\cA}\abs{(\mathbf{H})_{\alpha j}}
    \abs{\widehat{\cF}_j-\cF_j}
    \le
    L_C\frac{\varepsilon_g}{L_C}
    =
    \varepsilon_g.
\end{equation}
Hoeffding's inequality applied directly to the bounded real estimators above,
followed by a union bound over $M n_t$ raw-response/time pairs, gives
Eq.~\eqref{eq:app-response-sample-complexity}.
\end{proof}

\section{Local Taylor bounds}
\label{app:taylor}

We use local support growth to control finite-time response errors.  Every
Pauli observable in the hybrid response family has support size at most $s$,
so Theorem~\ref{thm:app-local-strength} applies with $s_0\le s$.

For a state $\rho$ and observable $O$, write
$R_{\rho,O}(t,\theta)=\Tr[\rho e^{t\cL_\theta^\dagger}(O)]$.  Its Taylor
expansion is
\begin{equation}
    R_{\rho,O}(t,\theta)
    =
    \sum_{k=0}^\infty
    \frac{t^k}{k!}
    \Tr\left[\rho(\cL_\theta^\dagger)^k(O)\right].
    \label{eq:app-scalar-taylor}
\end{equation}
Trace-norm/operator-norm duality bounds its $k$th coefficient by
$\norm{\rho}_1\norm{(\cL_\theta^\dagger)^k(O)}_\infty$; for density matrices,
$\norm{\rho}_1=1$.

Every elementary candidate term has bounded $\infty\to\infty$ norm:
$\norm{i[C,O]}_\infty\le2\norm{O}_\infty$ for a Hamiltonian Pauli term,
whereas for a dissipative term
\begin{equation}
    \norm{BOA-\frac12\{BA,O\}}_\infty
    \le
    \norm{BOA}_\infty+\frac12\norm{BAO}_\infty+
    \frac12\norm{OBA}_\infty
    \le2\norm{O}_\infty.
\end{equation}
The same norm bound holds for the real and imaginary off-diagonal terms.

Every elementary Pauli--GKSL term used below has the following two support
properties.  If $X_a:=\supp(\cL_a^\dagger)$ and $Z$ is local, then
\begin{equation}
    X_a\cap\supp(Z)=\varnothing
    \quad\Longrightarrow\quad
    \cL_a^\dagger(Z)=0,
    \qquad
    \supp(\cL_a^\dagger(Z))
    \subseteq \supp(Z)\cup X_a.
    \label{eq:local-term-support-properties}
\end{equation}

\subsection{Taylor bounds under bounded dual-graph degree}
\label{app:finite-degree-bound}

Let the dual interaction graph have one vertex for each candidate dictionary
term.  Two vertices are adjacent when the corresponding supports
intersect.  Let $d$ be the maximum degree and write
$N(O)=|\{a:\supp(O)\cap\supp(\cL_a^\dagger)\ne\emptyset\}|$.

\begin{theorem}[Mixed Taylor bound from bounded dual-interaction-graph degree]
\label{thm:app-finite-degree}
Assume the support properties in
Eq.~\eqref{eq:local-term-support-properties} and
$\|\cL_a^\dagger\|_{\infty\to\infty}\le2$ for every candidate term.
For $k\ge1$, let $z^{(1)},\ldots,z^{(k)}$ satisfy
$\|z^{(\ell)}\|_\infty\le1$.  If $\|O\|_\infty\le1$ and
$N(O)\le d+1$, then
\begin{equation}
    \left\|
        \cL_{z^{(k)}}^\dagger\cdots
        \cL_{z^{(1)}}^\dagger(O)
    \right\|_\infty
    \le
    [2(d+1)]^k k!.
    \label{eq:app-correct-finite-degree-bound}
\end{equation}
\end{theorem}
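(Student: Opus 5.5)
We expand each generator $\cL_{z^{(\ell)}}^\dagger=\sum_a z^{(\ell)}_a\cL_a^\dagger$ and write the nested product as a sum over words $(a_1,\ldots,a_k)$ of dictionary terms. The plan is to bound the number of words that give a nonzero contribution and to bound each surviving contribution separately. Since $|z^{(\ell)}_a|\le1$ and $\|\cL_a^\dagger\|_{\infty\to\infty}\le2$, each nonzero word contributes at most $2^k\|O\|_\infty\le2^k$. It therefore suffices to show that the number of contributing words is at most $(d+1)^k k!$; multiplying by $2^k$ then gives Eq.~\eqref{eq:app-correct-finite-degree-bound}.

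The counting step rests on the support properties in Eq.~\eqref{eq:local-term-support-properties}. We argue by induction on the position $\ell$ in the word. After applying $\cL_{a_{\ell-1}}^\dagger\cdots\cL_{a_1}^\dagger$, the resulting operator is supported in
\[
    \supp(O)\cup X_{a_1}\cup\cdots\cup X_{a_{\ell-1}}.
\]
If $X_{a_\ell}$ is disjoint from this union, then $\cL_{a_\ell}^\dagger$ annihilates the operator and the word contributes nothing. Hence for a word to contribute, $a_\ell$ must either meet $\supp(O)$, for which there are at most $N(O)\le d+1$ choices, or meet $X_{a_i}$ for some $i<\ell$.

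Terms of the second kind are counted through the dual interaction graph. For a fixed $i$, the term $a_\ell$ must be $a_i$ itself or one of its at most $d$ neighbours, so there are at most $d+1$ choices. Taking the union over $i<\ell$ gives at most $(\ell-1)(d+1)$ choices. Combining both kinds, the number of admissible $a_\ell$ given $a_1,\ldots,a_{\ell-1}$ is at most
\[
    (d+1)+(\ell-1)(d+1)=\ell(d+1).
\]
Multiplying over $\ell=1,\ldots,k$ yields $\prod_{\ell}\ell(d+1)=(d+1)^kk!$ admissible words, which is the count required above.

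The main point requiring care is that this count is a union bound over the choices of the earlier letters rather than an exact enumeration, and that it must remain valid even when letters repeat. Repetitions cause no difficulty, since a repeated letter is counted within the $d+1$ "self or neighbour" choices attached to an earlier $i$. The only remaining check is that the per-word norm bound holds for the real and imaginary off-diagonal combinations as well. This follows from Lemma~\ref{lem:normalized-term-norms}, and since we use the triangle inequality over the words, no cancellation is needed anywhere in the argument.
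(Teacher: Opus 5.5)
Your proposal is correct and follows essentially the same route as the paper: expand into ordered words, bound each surviving word by $2^k$, and count the admissible words step by step through the dual interaction graph to get at most $(d+1)^k k!$ of them. The only difference is bookkeeping. The paper charges only $d$ new choices per earlier letter, giving $N(O)+(\ell-1)d$, because each earlier letter is already admissible. You charge $d+1$ per earlier letter, giving $N(O)+(\ell-1)(d+1)$. Both are at most $\ell(d+1)$, so both yield the same final bound.
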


\begin{proof}
Expand the mixed product into ordered words.  A nonzero word must grow through
the dual interaction graph.  Its first term has at most $N(O)$ choices,
and after $\ell$ terms have been chosen the next one has at most
$N(O)+\ell d$ choices.  The coefficient bounds and the elementary norm bound
therefore give
\begin{align}
    \left\|
        \cL_{z^{(k)}}^\dagger\cdots
        \cL_{z^{(1)}}^\dagger(O)
    \right\|_\infty
    &\le
    2^kN(O)(N(O)+d)\cdots(N(O)+(k-1)d)\\
    &\le
    [2(d+1)]^k k!.
\end{align}
\end{proof}

\subsection{Taylor bounds under bounded local strength}
\label{app:local-strength-bound}

Define the unweighted local dictionary strength and the maximum body size by
\begin{equation}
    \kappa_0
    :=
    \max_{y\in[N]}
    \sum_{a:\,y\in\supp(\cL_a^\dagger)}
    \norm{\cL_a^\dagger}_{\infty\to\infty},
    \qquad
    s
    :=
    \max_a\abs{\supp(\cL_a^\dagger)}.
    \label{eq:app-KS}
\end{equation}
The quantity $\kappa_0$ depends only on the prescribed dictionary and not on a
coefficient vector.

\begin{theorem}[Repeated-generator bounds under local strength]
\label{thm:app-local-strength}
Assume the support properties in
Eq.~\eqref{eq:local-term-support-properties}.  Let
$s_0:=|\supp(O)|$, assume $\|O\|_\infty\le1$, and let
$\|z\|_\infty\le1$.  Then, for every $k\ge1$,
\begin{equation}
    \|(\cL_z^\dagger)^k(O)\|_\infty
    \le
    \kappa_0^k
    \prod_{\ell=0}^{k-1}(s_0+\ell s).
    \label{eq:app-local-strength-taylor-general}
\end{equation}
If $s_0\le s$, this simplifies to
\begin{equation}
    \|(\cL_z^\dagger)^k(O)\|_\infty
    \le
    k!\,[s\kappa_0]^k.
    \label{eq:app-local-strength-taylor}
\end{equation}
\end{theorem}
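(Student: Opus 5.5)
We prove Theorem~\ref{thm:app-local-strength} by expanding $(\cL_z^\dagger)^k(O)$ into ordered words $\cL_{a_k}^\dagger\cdots\cL_{a_1}^\dagger(O)$, each weighted by $z_{a_k}\cdots z_{a_1}$ with $|z_{a_\ell}|\le1$. We then count the nonzero words while keeping track of their norms, in the same spirit as the proof of Theorem~\ref{thm:app-finite-degree}. The difference is that we charge each step by the norm $\|\cL_a^\dagger\|_{\infty\to\infty}$ rather than by the crude factor $2$, and we control the number of admissible terms through the sites of the current support rather than through dual-graph degree.

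\emph{Support bookkeeping.} Fix a word and set $O_\ell:=\cL_{a_\ell}^\dagger\cdots\cL_{a_1}^\dagger(O)$ with $O_0=O$. By the second property in Eq.~\eqref{eq:local-term-support-properties}, $\supp(O_\ell)\subseteq\supp(O)\cup X_{a_1}\cup\cdots\cup X_{a_\ell}$. Hence $|\supp(O_\ell)|\le s_0+\ell s$, by induction, since each $|X_a|\le s$. By the first property, the step $\ell\to\ell+1$ contributes only if $X_{a_{\ell+1}}$ meets $\supp(O_\ell)$. A conservative way to enforce this is to require that $X_{a_{\ell+1}}$ meet the set $U_\ell:=\supp(O)\cup X_{a_1}\cup\cdots\cup X_{a_\ell}$, which depends only on the word and not on the operator, and which has $|U_\ell|\le s_0+\ell s$.

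\emph{Weighted counting.} Apply the triangle inequality over words and submultiplicativity of the norms to get
\begin{equation}
\|(\cL_z^\dagger)^k(O)\|_\infty\le\sum_{a_1,\ldots,a_k}\prod_{\ell=1}^k\mathbf 1[X_{a_\ell}\cap U_{\ell-1}\ne\varnothing]\,\|\cL_{a_\ell}^\dagger\|_{\infty\to\infty}.
\end{equation}
Sum from the innermost index outward, that is, first over $a_k$ with $a_1,\ldots,a_{k-1}$ fixed, and so on. For fixed earlier choices, a union bound over the sites $y\in U_{\ell-1}$ gives
\begin{equation}
\sum_{a:\,X_a\cap U_{\ell-1}\ne\varnothing}\|\cL_a^\dagger\|\le\sum_{y\in U_{\ell-1}}\sum_{a\ni y}\|\cL_a^\dagger\|\le|U_{\ell-1}|\,\kappa_0\le(s_0+(\ell-1)s)\kappa_0 .
\end{equation}
This bound is uniform in the earlier choices, so iterating it yields the product $\kappa_0^k\prod_{\ell=0}^{k-1}(s_0+\ell s)$, which is Eq.~\eqref{eq:app-local-strength-taylor-general}.

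\emph{Simplification.} If $s_0\le s$, then $s_0+\ell s\le(\ell+1)s$, so the product is at most $s^k k!$. This gives Eq.~\eqref{eq:app-local-strength-taylor}.

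The step that needs care is the nested summation. The admissibility condition at step $\ell$ depends on all earlier indices, so we must sum in the correct order and use a bound on the inner sum that holds uniformly over the outer indices. Working with the word-determined sets $U_\ell$ instead of the true operator supports makes that uniformity automatic, which is why we use them. Replacing the true support by $U_{\ell-1}$ only enlarges the admissible set, so the bound remains valid.
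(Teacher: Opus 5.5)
Your proposal is correct and follows essentially the same route as the paper: you expand into ordered words, replace the true operator support by the prefix-determined set (the paper's $V_{\ell-1}$, your $U_{\ell-1}$), bound the per-step norm weight by $(s_0+(\ell-1)s)\kappa_0$ through a site-wise union bound, and multiply over $\ell$. Your explicit point that the inner sums must be bounded uniformly over the outer indices, which the word-determined set guarantees, makes precise a step the paper leaves implicit.
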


\begin{proof}
Expand $(\cL_z^\dagger)^k(O)$ into ordered words.  After a fixed prefix of
$\ell-1$ terms has been chosen, let $V_{\ell-1}$ be the union of
$\supp(O)$ and the supports of those terms.  By
Eq.~\eqref{eq:local-term-support-properties}, a term that produces
a nonzero next step must intersect $V_{\ell-1}$, while
$|V_{\ell-1}|\le s_0+(\ell-1)s$.  Since $\|z\|_\infty\le1$, the total norm
contribution of all possible terms at this step is bounded by
\begin{align}
    \sum_{a_\ell:\,X_{a_\ell}\cap V_{\ell-1}\ne\emptyset}
    |z_{a_\ell}|
    \|\cL_{a_\ell}^\dagger\|_{\infty\to\infty}
    &\le
    \sum_{y\in V_{\ell-1}}
    \sum_{a_\ell:\,y\in X_{a_\ell}}
    \|\cL_{a_\ell}^\dagger\|_{\infty\to\infty}
    \notag\\
    &\le
    (s_0+(\ell-1)s)\kappa_0.
    \label{eq:local-strength-power-one-step}
\end{align}
Multiplying these bounds for $\ell=1,\ldots,k$ proves
Eq.~\eqref{eq:app-local-strength-taylor-general}.  If $s_0\le s$, then
$s_0+\ell s\le(\ell+1)s$, which gives
Eq.~\eqref{eq:app-local-strength-taylor}.
\end{proof}

For results that hold under either locality assumption, define the common
growth scale
\begin{equation}
    \Lambda
    :=
    \begin{cases}
        2(d+1), & \text{bounded dual-interaction-graph degree},\\
        s\kappa_0, & \text{unweighted local strength}.
    \end{cases}
    \label{eq:unified-locality-scale}
\end{equation}
Specializing Theorem~\ref{thm:app-finite-degree} to a repeated coefficient
vector and applying Theorem~\ref{thm:app-local-strength} give
$\|(\cL_z^\dagger)^k(O)\|_\infty\le k!\Lambda^k$ for the raw-response inputs
used below.

\subsection{Uniform bounds on the response inverse}
\label{subsec:response-inverse-bounds}

The constant $L_C$ is determined by the inverse of the candidate response
matrix.  The following regime makes this inverse uniformly local.

We call it the uniform bounded-overlap regime when either locality hypothesis
in Eq.~\eqref{eq:unified-locality-scale} holds and
\begin{equation}
    S_D=\mathcal{O}(1),
    \qquad
    \Lambda=\mathcal{O}(1),
    \label{eq:uniform-response-overlap-assumption}
\end{equation}
with constants independent of the system size $N$ and dictionary size $M$.

\begin{theorem}[Uniform conditioning and sparsity of the response inverse]
\label{thm:uniform-response-inverse}
In the uniform bounded-overlap regime, the inverse
$\mathbf H=\mathbf G^{-1}$ satisfies
\begin{equation}
    L_C
    =
    \max_\alpha\|(\mathbf H)_{\alpha\cdot}\|_1
    =\mathcal{O}(1),
    \label{eq:LC-constant-conclusion}
\end{equation}
and
\begin{equation}
    r_0=\mathcal{O}(1),
    \qquad
    \operatorname{nnz}(\mathbf H)
    \le Mr_0
    =\mathcal{O}(M).
    \label{eq:response-inverse-sparsity-bound}
\end{equation}
All bounds are uniform in $N$ and $M$.
\end{theorem}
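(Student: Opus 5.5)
We propose to exploit the block-triangular structure of Proposition~\ref{prop:app-normalized-coordinates}: after the row normalization $\mathbf T$,
\[
\widetilde{\mathbf G}=\mathbf T\mathbf G=\begin{pmatrix}\mathbf G_D&0\\ \mathbf B&I_H\end{pmatrix},\qquad
\mathbf H=\widetilde{\mathbf G}^{-1}\mathbf T=\begin{pmatrix}\mathbf G_D^{-1}&0\\ -\mathbf B\mathbf G_D^{-1}&I_H\end{pmatrix}\mathbf T ,
\]
so it suffices to show that $\mathbf G_D^{-1}$ and $\mathbf B\mathbf G_D^{-1}$ have rows with $\mathcal O(1)$ nonzeros and $\mathcal O(1)$ absolute row sums. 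The factor $\|\mathbf T\|_{\infty\to\infty}\le2$ only costs a constant.

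\emph{Locality of the dissipative block.} By Lemma~\ref{lem:app-global-response-formula}, a dissipative row associated with $(A,B)$ on $W=\supp(A)\cup\supp(B)$ is nonzero on a dissipative column $(A',B')$ only if $A'=A\otimes T$ and $B'=B\otimes T$ for some $T\in\cP_{W^c}$. In particular $W\subseteq\supp(\cL_{\beta}^\dagger)$ and $|\supp(\cL_\beta^\dagger)|\le S_D$. Hence each row of $\mathbf G_D$ has at most $c_1:=4^{S_D}\cdot(\text{number of real coordinates per pair})=\mathcal O(1)$ nonzeros. Each entry is a single Kossakowski coefficient contribution of magnitude $\mathcal O(1)$ by Lemma~\ref{lem:normalized-term-norms} and the normalization of $F^W_{R,S}$.

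Write $\mathbf G_D=\mathbf D(I-\mathbf N)$, where $\mathbf D$ is the block diagonal (the unit diagonal or the $\tfrac12 I_2$ blocks of Eq.~\eqref{eq:app-offdiag-local-block}, made unit by $\mathbf T$) and $\mathbf N$ is strictly triangular with respect to support size. Any chain of nonzero entries of $\mathbf N$ strictly increases support size, which is bounded by $S_D$, so $\mathbf N^{S_D}=0$. Therefore
\[
\mathbf G_D^{-1}=\sum_{k=0}^{S_D-1}\mathbf N^k\,\mathbf D^{-1}.
\]
Each row of $\mathbf N^k$ has at most $c_1^k$ nonzeros, and its $\ell_1$ norm is at most $\|\mathbf N\|_{\infty\to\infty}^k=\mathcal O(1)^k$. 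This gives row sparsity and row $\ell_1$ norm of $\mathbf G_D^{-1}$ bounded by a constant depending only on $S_D$.

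\emph{Hamiltonian rows.} Long-support Hamiltonian rows are exact unit rows by Theorem~\ref{thm:direct-long-support-H-response}, so only short-support Hamiltonian rows of $\mathbf B$ need attention. For $C$ with $|\supp(C)|\le S_D$, the response $\cF_C^H$ picks up $a_{C\otimes T,T}$ and $a_{T,C\otimes T}$ from dissipative terms. From the form $\cD^\dagger_{A,B}(O)=BOA-\tfrac12\{BA,O\}$, such a coefficient is nonzero only for terms with $BA\propto C$ in the anticommutator part, or $\{A,B\}$ with one factor $T$. The supports of these terms overlap $\supp(C)$. Under bounded dual-graph degree there are at most $d+1$ of them. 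Under bounded local strength there are at most $|\supp(C)|\kappa_0\le S_D\kappa_0$, since every term has norm at least $1$ by Lemma~\ref{lem:normalized-term-norms}. In either case each row of $\mathbf B$ has $\mathcal O(\Lambda)$ nonzero entries of size $\mathcal O(1)$. Then $\|\mathbf B\mathbf G_D^{-1}\|_{\infty\to\infty}\le\|\mathbf B\|_{\infty\to\infty}\|\mathbf G_D^{-1}\|_{\infty\to\infty}=\mathcal O(1)$, and the row sparsity of $\mathbf B\mathbf G_D^{-1}$ is at most the product of the two row sparsities.

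Combining these bounds gives $L_C=\mathcal O(1)$ and $r_0=\mathcal O(1)$, and hence $\operatorname{nnz}(\mathbf H)\le Mr_0$.

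\emph{Main obstacle.} The delicate step is the precise counting of which dissipative columns contaminate a short Hamiltonian row. The identity side means that common extensions by $T$ can come from dissipative pairs whose support extends beyond $\supp(C)$. We must verify that each such term still intersects $\supp(C)$, which is what makes the locality hypothesis, rather than $S_D$ alone, control the count. We would first try to prove this by writing $a_{C\otimes T,T}$ explicitly from $BOA$ and the anticommutator terms using Pauli orthogonality.
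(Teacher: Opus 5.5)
Your overall architecture matches the paper's proof: the block form of $\widetilde{\mathbf G}=\mathbf T\mathbf G$, the nilpotent series for $\mathbf G_D^{-1}$, and the product bounds on row sparsity and $\|\cdot\|_{\infty\to\infty}$. There is, however, a genuine gap in how you count the nonzeros of a dissipative row. The factor $4^{S_D}$ counts Pauli strings on a \emph{fixed} region of size $S_D$. The extension $T$, by contrast, ranges over $\cP_{W^c}$, that is, over all $N-|W|$ sites outside $W$. The constraint $|\supp(\cL_\beta^\dagger)|\le S_D$ limits the size of $\supp(T)$ but not where it sits.

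Here is a concrete counterexample. Take the diagonal dissipators $\cD_{X_1,X_1}$ and $\cD_{X_1Z_k,X_1Z_k}$ for $k=2,\ldots,N$, so $S_D=2$. By Lemma~\ref{lem:app-global-response-formula}:
\begin{itemize}
\item the row $F^{\{1\}}_{X_1,X_1}$ equals $1$ on all $N$ columns;
\item each row for $X_1Z_k$ is a unit row;
\item $\mathbf G_D^{-1}$ therefore has row $e_{X_1}-\sum_{k}e_{X_1Z_k}$, so $L_C\ge N$.
\end{itemize}
So bounded $S_D$ alone gives neither $c_1=\mathcal O(1)$ nor $\|\mathbf N\|_{\infty\to\infty}=\mathcal O(1)$. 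This dictionary has $d=N-1$, so it is excluded from the regime only through $\Lambda=\mathcal O(1)$. The locality hypothesis is therefore needed already in the dissipative block, not just in the Hamiltonian rows.

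The repair is the overlap argument you already use for $\mathbf B$, applied to $\mathbf G_D$; this is what the paper does. A column $(A\otimes T,B\otimes T)$ that contributes to the row on $W$ has support containing $W\neq\varnothing$. It therefore overlaps the row's own candidate term.
\begin{itemize}
\item Under bounded dual-graph degree, there are at most $d+1$ such columns.
\item Under local strength, every term has $\|\cL_a^\dagger\|_{\infty\to\infty}\ge1$ (Lemma~\ref{lem:normalized-term-norms}), so there are at most $\kappa_0|W|\le\kappa_0S_D$ such columns.
\end{itemize}
Since the entries of $\mathbf T\mathbf G$ are bounded by $4$, this gives row sparsity at most $\lceil\Lambda\rceil$ and $\|\mathbf N\|_{\infty\to\infty}\le4\Lambda$. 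Your series and product bounds then yield the paper's constants; your nilpotency index $S_D$ is even slightly sharper than the paper's $S_D+1$.

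The Hamiltonian-row step you flagged as the main obstacle is routine. The only left--right pairs of $\cD_{A,B}^\dagger$ that can feed $\cF_C^H$ are of two kinds:
\begin{itemize}
\item $(B,A)=(C\otimes T,T)$, or its mirror $(T,C\otimes T)$;
\item the anticommutator pairs $(\operatorname{pf}(BA),I)$ or $(I,\operatorname{pf}(BA))$ with $\operatorname{pf}(BA)=C$.
\end{itemize}
In both cases the term's support contains $\supp(C)$, so it overlaps $C$.
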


\begin{proof}
We first identify the relevant block structure.  Each raw-response functional
is a linear combination of scalar Pauli matrix
elements with total absolute coefficient weight at most one.
Lemma~\ref{lem:normalized-term-norms} therefore gives
\begin{equation}
    \max_{j,\beta}
    \left|(\mathbf T\mathbf G)_{j\beta}\right|
    \le4.
    \label{eq:explicit-response-entry-bound}
\end{equation}
By the square response design, after applying the fixed local row
normalization \(\mathbf T\), the response matrix has the block triangular form
\begin{equation}
    \widetilde{\mathbf G}
    =
    \begin{pmatrix}
        \mathbf{G}_D & 0 \\
        \mathbf{G}_{HD} & I_H
    \end{pmatrix}.
    \label{eq:block-triangular-response-proof}
\end{equation}
Here \(\mathbf{G}_D\) is the dissipative common-extension block, \(I_H\) is the
Hamiltonian block, and \(\mathbf{G}_{HD}\) contains the possible dissipative
contamination of short-support Hamiltonian response rows.  Direct-response
rows have no nonzero entries in $\mathbf{G}_{HD}$, so this block does not
affect invertibility.

Write $\mathbf{G}_D=I+U$, where $U$ is strictly lower triangular after ordering dissipative terms
by decreasing support size.  Each nontrivial common extension strictly increases
the dissipative support.  Since every candidate dissipative term has support
size at most $S_D$, we have $U^{S_D+1}=0$ and therefore
\begin{equation}
    \mathbf{G}_D^{-1}
    =
    \sum_{\ell=0}^{S_D}(-U)^\ell .
    \label{eq:GD-inverse-finite-series-proof}
\end{equation}

This block form immediately yields an inverse-norm bound.  Put
$\Gamma_D=\lVert U\rVert_{\infty\to\infty}$ and
$\Gamma_{HD}=\lVert \mathbf{G}_{HD}\rVert_{\infty\to\infty}$.  Then
\begin{align}
    \lVert \mathbf{G}_D^{-1}\rVert_{\infty\to\infty}
    &\le
    \sum_{\ell=0}^{S_D}\Gamma_D^\ell,
    \label{eq:GD-inverse-row-bound-proof} \\
    \widetilde{\mathbf G}^{-1}
    &=
    \begin{pmatrix}
        \mathbf{G}_D^{-1} & 0 \\
        -\mathbf{G}_{HD}\mathbf{G}_D^{-1} & I_H
    \end{pmatrix},
    \label{eq:block-inverse-proof} \\
    \lVert \widetilde{\mathbf G}^{-1}\rVert_{\infty\to\infty}
    &\le
    \max\left\{
        \sum_{\ell=0}^{S_D}\Gamma_D^\ell,
        1+\Gamma_{HD}\sum_{\ell=0}^{S_D}\Gamma_D^\ell
    \right\}.
    \label{eq:Gtilde-inverse-row-bound-proof}
\end{align}
Since $\mathbf H=\widetilde{\mathbf G}^{-1}\mathbf T=\mathbf G^{-1}$,
$L_C\le\|\widetilde{\mathbf G}^{-1}\|_{\infty\to\infty}
\|\mathbf T\|_{\infty\to\infty}\le
2\|\widetilde{\mathbf G}^{-1}\|_{\infty\to\infty}$.

The same block argument also controls the row supports.  Define
\begin{equation}
    s_{\rm row}(A)
    :=
    \max_i\left|\{j:A_{ij}\ne0\}\right|.
    \label{eq:matrix-row-sparsity-definition}
\end{equation}
The elementary support-counting inequalities
\begin{equation}
    s_{\rm row}(A+B)
    \le s_{\rm row}(A)+s_{\rm row}(B),
    \qquad
    s_{\rm row}(AB)
    \le s_{\rm row}(A)s_{\rm row}(B)
    \label{eq:matrix-row-sparsity-calculus}
\end{equation}
imply, with $b_D:=s_{\rm row}(U)$, that
\begin{equation}
    s_{\rm row}(\mathbf{G}_D^{-1})
    \le
    \sum_{\ell=0}^{S_D}b_D^\ell .
    \label{eq:GD-inverse-row-sparsity-proof}
\end{equation}
Put $b_{HD}:=s_{\rm row}(\mathbf{G}_{HD})$.  The block inverse in
Eq.~\eqref{eq:block-inverse-proof} then gives
\begin{equation}
    s_{\rm row}(\widetilde{\mathbf G}^{-1})
    \le
    \max\left\{
        \sum_{\ell=0}^{S_D}b_D^\ell,
        1+b_{HD}\sum_{\ell=0}^{S_D}b_D^\ell
    \right\}.
    \label{eq:Gtilde-inverse-row-sparsity-proof}
\end{equation}
The row-normalization matrix $\mathbf T$ is diagonal with nonzero diagonal entries, so
right multiplication by $\mathbf T$ does not change a row support.  Therefore the same
bound holds for $\mathbf H=\widetilde{\mathbf G}^{-1}\mathbf T$.

It remains to show that either locality assumption makes the preceding bounds
uniform.  Under bounded dual-interaction-graph degree, every dissipative common extension
contributing to a fixed
dissipative response row must overlap the support of that row.  Likewise, a
dissipative term can contaminate a Hamiltonian response row only when its
support overlaps the Hamiltonian term; if the Hamiltonian support
is larger than \(S_D\), such a dissipative contamination is impossible.  Thus
each relevant row contains at most \(d+1\) candidate dissipative columns.  Hence
\begin{equation}
    b_D,b_{HD}\le d+1,
    \qquad
    \Gamma_D,\Gamma_{HD}\le4(d+1).
    \label{eq:finite-degree-response-row-bounds}
\end{equation}
Since $d+1=\Lambda/2$, these bounds are at most
$\lceil\Lambda\rceil$ and $4\Lambda$, respectively.

In the local-strength setting, consider any set \(V\subseteq[N]\) with
\(|V|\le S_D\).  Lemma~\ref{lem:normalized-term-norms} bounds the number of
candidate terms intersecting \(V\) directly by the local strength:
\begin{align}
    \left|
    \left\{
        a:
        \operatorname{supp}(\mathcal L_a^\dagger)\cap V\ne\emptyset
    \right\}
    \right|
    &\le
    \sum_{\substack{
        a:\operatorname{supp}(\mathcal L_a^\dagger)\cap V\ne\emptyset
    }}
    \lVert \mathcal L_a^\dagger\rVert_{\infty\to\infty}
    \label{eq:local-strength-count-step-one} \\
    &\le
    \sum_{y\in V}
    \sum_{a:y\in\operatorname{supp}(\mathcal L_a^\dagger)}
    \lVert \mathcal L_a^\dagger\rVert_{\infty\to\infty}
    \label{eq:local-strength-count-step-two} \\
    &\le
    \kappa_0|V|
    \le
    \kappa_0S_D.
    \label{eq:local-strength-count-step-three}
\end{align}
Hence every row contributing to \(U\) or \(\mathbf{G}_{HD}\) contains at most
\(\kappa_0S_D\) relevant candidate dissipative columns.  Therefore
\begin{equation}
    b_D,b_{HD}
    \le
    \lceil \kappa_0S_D\rceil,
    \qquad
    \Gamma_D,\Gamma_{HD}
    \le
    4\kappa_0S_D.
    \label{eq:local-strength-response-row-bounds}
\end{equation}
Here $S_D\le s$, so these bounds are also at most
$\lceil\Lambda\rceil$ and $4\Lambda$, respectively.  Substituting these common
bounds into Eq.~\eqref{eq:Gtilde-inverse-row-bound-proof} gives
\begin{equation}
    L_C
    \le
    2
    \max\left\{
        \sum_{\ell=0}^{S_D} (4\Lambda)^{\ell},
        1+4\Lambda\sum_{\ell=0}^{S_D}(4\Lambda)^{\ell}
    \right\}.
    \label{eq:LC-unified-locality-bound}
\end{equation}
Likewise, Eq.~\eqref{eq:Gtilde-inverse-row-sparsity-proof} gives
\begin{equation}
    r_0
    \le
    \max\left\{
        \sum_{\ell=0}^{S_D}\lceil\Lambda\rceil^\ell,
        1+\lceil\Lambda\rceil
        \sum_{\ell=0}^{S_D}\lceil\Lambda\rceil^\ell
    \right\}.
    \label{eq:response-inverse-sparsity-unified}
\end{equation}
Both bounds are uniform in $N,M$ under
Eq.~\eqref{eq:uniform-response-overlap-assumption}.

Since an $M$-row matrix with at most $r_0$ nonzero entries per row has at most
$Mr_0$ nonzero entries, $\operatorname{nnz}(\mathbf H)=\mathcal{O}(M)$.
\end{proof}


\subsection{Taylor bounds for normalized responses}
\label{app:response-taylor}

Let $m\ge1$ denote the Taylor truncation degree.
Each raw-response functional has total scalar coefficient weight at most one.
In both locality settings, the scalar matrix elements obey
\begin{equation}
    \norm{(\cL_\theta^\dagger)^k(O)}_\infty
    \le
    \Lambda^k k!,
    \label{eq:app-common-taylor}
\end{equation}
with the unified scale $\Lambda$ from
Eq.~\eqref{eq:unified-locality-scale}.  Combining this bound with
$\|\mathbf H_{\alpha\cdot}\|_1\le L_C$ gives
\begin{equation}
    \abs{\partial_t^k g_\alpha(0,\theta)}
    \le
    L_C\Lambda^k k!.
    \label{eq:app-normalized-taylor}
\end{equation}
When $\Lambda t<1$, Eq.~\eqref{eq:app-normalized-taylor} gives
\begin{equation}
    \left|
    g_\alpha(t,\theta)-\sum_{k=1}^m
    \frac{t^k}{k!}\partial_t^k g_\alpha(0,\theta)
    \right|
    \le
    L_C\frac{(\Lambda t)^{m+1}}{1-\Lambda t}.
    \label{eq:app-response-tail}
\end{equation}

\subsection{Classical construction of the truncated response polynomials}
\label{subsec:classical-coefficient-construction}

The single-time method uses coefficient tables for the truncated response map.
We bound their construction separately from the experimental response cost;
the square response family contains $M$ coefficients and $M$ raw responses.

\paragraph{Computational model.}
All complexity bounds in this section count sparse-table arithmetic
operations.  Table insertion, lookup, and symbolic Pauli-key operations are
unit cost.  A bit-complexity analysis would additionally track coefficient
precision, key representation, and the sparse input length
$\mathcal S_{\rm dict}$; we do not pursue that analysis here.

The hybrid family of real raw responses \(\{\cF_j:j\in\cJ_\cA\}\) has the
following scalar Pauli-matrix-element representation for each of its elements:
\begin{equation}
    \cF_j(\Phi)
    =
    \sum_{\nu\in\mathcal I_j}w_{j\nu}
    \frac1D\Tr[Q_{j\nu}\Phi(P_{j\nu})],
    \qquad
    \sum_{\nu\in\mathcal I_j}|w_{j\nu}|\le1.
    \label{eq:generic-raw-matrix-elements}
\end{equation}
For each \(j\), let \(a(j)\) be the candidate index whose coefficient is
paired with the \(j\)th raw-response row, and set
\begin{equation}
    W_j:=\supp(\cL_{a(j)}^\dagger).
    \label{eq:raw-response-associated-region}
\end{equation}
By construction, \(W_j\) contains the supports of every \(P_{j\nu}\) and
\(Q_{j\nu}\) in this representation, and \(|W_j|\le s\).
The number of scalar matrix elements satisfies
\begin{equation}
    |\mathcal I_j|
    \le
    \begin{cases}
        4^{|W_j|}, & \text{diagonal response},\\
        2\,4^{|W_j|}, & \text{short Hamiltonian or off-diagonal response},\\
        1, & \text{direct response},
    \end{cases}
    \label{eq:raw-scalar-element-counts}
\end{equation}
where the direct-response weight is $w_{j\nu}=\chi_C/2$.  Every averaged row has
$|W_j|\le S_D$, whereas a long-support Hamiltonian row is direct and contains
only one scalar matrix element.  Consequently,
\begin{equation}
    \sum_{j=1}^M|\mathcal I_j|
    \le
    2M4^{S_D}.
    \label{eq:total-scalar-raw-elements}
\end{equation}
For overlap counting under bounded dual-interaction-graph degree, every
candidate term whose support intersects \(W_j\) is either \(a(j)\) itself
or a neighbor of \(a(j)\) in the dual interaction graph.  Hence
\begin{equation}
    \left|
    \left\{
        a:
        \operatorname{supp}(\mathcal L_a^\dagger)\cap W_j\ne\emptyset
    \right\}
    \right|
    \le d+1 .
    \label{eq:candidate-term-degree-bound}
\end{equation}

For any Pauli product $Q$, write it uniquely as
$Q=\omega P$ with $\omega\in\{\pm1,\pm i\}$ and
$P\in\{I,X,Y,Z\}^{\otimes N}$.  We define
$
    \operatorname{pf}(Q):=P,
    \label{eq:phase-free-representative-definition}
$
so $\operatorname{pf}$ removes the global Pauli phase and retains only the
phase-free Pauli-string label.

\begin{lemma}[Deterministic phase-free Pauli transfer]
\label{lem:deterministic-pauli-transfer}
For every elementary real Pauli--GKSL term $\mathcal L_a^\dagger$ there
is a phase-free Pauli multiplier $R_a$ such that, for every phase-free Pauli
$P$,
\begin{equation}
    \mathcal L_a^\dagger(P)
    =
    \gamma_a(P)\,\operatorname{pf}(R_aP)
    \label{eq:deterministic-pauli-transfer}
\end{equation}
for a scalar $\gamma_a(P)$, possibly zero.  One may take $R_a=C$ for a
Hamiltonian term generated by $C$, $R_a=I$ for a diagonal dissipative
term, and $R_a=\operatorname{pf}(BA)$ for either off-diagonal term
generated by $A,B$.  Hence, for a fixed input $P$ and exponent vector
$\mathbf n$, every nonzero ordering of the corresponding terms has the
same phase-free output label
$\operatorname{pf}(P\prod_aR_a^{n_a})$.
\end{lemma}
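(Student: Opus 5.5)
The plan is a direct computation for each of the three term types, using that Pauli strings multiply to Pauli strings up to a phase in $\{\pm1,\pm i\}$, and then a composition argument for the final claim. Throughout, fix a phase-free Pauli $P$ and write every product of Pauli strings as a phase times a phase-free string.

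For a Hamiltonian term, Eq.~\eqref{eq:app-H-adjoint} gives $\cL_C^{H,\dagger}(P)=i(CP-PC)$. Since $PC=\pm CP$, this equals either zero or $2iCP$. Because $CP$ is a phase times $\operatorname{pf}(CP)$, we may take $R_a=C$.

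For a diagonal term, Eq.~\eqref{eq:app-D-adjoint} with $B=A$ and $A^2=I$ gives $\cD_{A,A}^\dagger(P)=APA-P$. Since $APA=\pm P$, the result is $0$ or $-2P$, so $R_a=I$.

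For off-diagonal terms, consider $\cD_{A,B}^\dagger(P)=BPA-\tfrac12(BAP+PBA)$. Commuting $P$ past $A$ gives $BPA=\pm BAP$. Likewise $PBA=\pm BAP$, because $BA$ is a phase times a Pauli string and therefore either commutes or anticommutes with $P$. Hence every term is a scalar multiple of $BAP$, which is proportional to $\operatorname{pf}(\operatorname{pf}(BA)P)$. The same holds for $\cD_{B,A}^\dagger(P)$, since $AB=\pm BA$ gives $\operatorname{pf}(AB)=\operatorname{pf}(BA)$. By Eqs.~\eqref{eq:app-D-Re-adjoint}--\eqref{eq:app-D-Im-adjoint}, the real and imaginary terms are linear combinations of these two, so both share the multiplier $R_a=\operatorname{pf}(BA)$. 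The symmetry of $\operatorname{pf}(BA)$ under $A\leftrightarrow B$ is the only point requiring a moment's care.

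For the final claim, apply Eq.~\eqref{eq:deterministic-pauli-transfer} repeatedly along any ordering of a word containing term $a$ exactly $n_a$ times. After each step the output is a scalar times a phase-free string, and linearity of $\cL_a^\dagger$ lets us reapply the lemma to that string. The label after the full word is therefore $\operatorname{pf}$ of the ordered product of the $R_a$'s times $P$. Pauli strings commute up to sign, so the phase-free part of this product does not depend on the order. It equals $\operatorname{pf}(P\prod_a R_a^{n_a})$, where only the parities of the $n_a$ matter. If some intermediate scalar vanishes, that ordering contributes zero, which is why the statement is made only for nonzero orderings.

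None of these steps should be a real obstacle; the whole argument is bookkeeping of phases.
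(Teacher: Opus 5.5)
Your proof is correct and follows the same route as the paper: a case-by-case check that every term of $\mathcal L_a^\dagger(P)$ is phase-equivalent to $R_aP$ (including the observation that $\operatorname{pf}(AB)=\operatorname{pf}(BA)$, so $\mathcal D_{A,B}^\dagger$ and $\mathcal D_{B,A}^\dagger$ share the multiplier). This is followed by accumulation of multipliers under composition, with order-independence coming from Pauli strings commuting up to a phase; you simply make the scalars explicit, e.g.\ $0$ or $2iCP$ and $0$ or $-2P$, where the paper leaves them implicit.
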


\begin{proof}
The two terms of $i[C,P]$ are phase-equivalent to $CP$.  For a diagonal
term, $APA-P$ is a scalar multiple of $P$.  In
$\mathcal D_{A,B}^\dagger(P)$, the three products $BPA$, $BAP$, and $PBA$ are
all phase-equivalent to $BAP$.  Likewise, every term in
$\mathcal D_{B,A}^\dagger(P)$ is phase-equivalent to $ABP$, hence also to
$BAP$.  The same conclusion therefore holds for their real and imaginary
combinations, giving Eq.~\eqref{eq:deterministic-pauli-transfer}.
Under composition the multipliers accumulate, and Pauli strings commute up to
a phase.  Removing that phase leaves a label depending only on the
multiplicities $\mathbf n$, not on their ordering.
\end{proof}

For \(x\in\R^M\), define the \(k\)-th homogeneous raw response polynomial by
\begin{equation}
    f_{j,k}(x)
    :=
    \frac{1}{k!}
    \cF_j\!\left((\mathcal L_x^\dagger)^k\right).
    \label{eq:raw-homogeneous-polynomial}
\end{equation}
For real \(x\), the map \((\mathcal L_x^\dagger)^k\) is Hermiticity preserving,
so \(f_{j,k}(x)\in\R\).

\begin{lemma}[Compressed-monomial count under bounded dual-interaction-graph degree]
\label{lem:finite-degree-compressed-monomials}
Under the bounded dual-interaction-graph degree hypothesis, let $\cM_{j,k}$ be the set of
degree-$k$ commutative monomials with nonzero coefficient in the raw response
polynomial $f_{j,k}$.  Then, for every $j$ and $k\ge1$,
\begin{equation}
    |\cM_{j,k}|
    \le
    [5(d+1)]^k.
    \label{eq:finite-degree-compressed-monomial-count}
\end{equation}
Thus one may take $c_{\rm mon}^{\rm fd}=5$.
\end{lemma}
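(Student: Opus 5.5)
The plan is to expand $f_{j,k}$ into ordered words of dictionary terms, show that only multisets forming a \emph{connected} cluster rooted at $W_j$ can carry a nonzero coefficient, and then count such rooted multisets with a lattice-animal bound combined with a stars-and-bars count of multiplicities.

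\emph{Expansion and connectivity.} By multilinearity, Eq.~\eqref{eq:raw-homogeneous-polynomial} gives
\[
f_{j,k}(x)=\frac1{k!}\sum_{(a_1,\ldots,a_k)}x_{a_1}\cdots x_{a_k}\,\cF_j\bigl(\cL_{a_k}^\dagger\cdots\cL_{a_1}^\dagger\bigr).
\]
The coefficient of a commutative monomial $x^{\mathbf n}$ is the sum over all orderings of the multiset $\mathbf n$. If this coefficient is nonzero, at least one ordering gives a nonzero word. By Eq.~\eqref{eq:generic-raw-matrix-elements}, that word is nonzero on some input $P_{j\nu}$ with $\supp(P_{j\nu})\subseteq W_j$. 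Iterating the support properties Eq.~\eqref{eq:local-term-support-properties}, each $a_\ell$ must intersect $W_j\cup X_{a_1}\cup\cdots\cup X_{a_{\ell-1}}$. Eq.~\eqref{eq:candidate-term-degree-bound} says a term meets $W_j$ only if it lies in the closed neighborhood $N[a(j)]$ of the dual graph. Hence the distinct support set $U$ of $\mathbf n$, with $|U|=v\le k$, has the following property: $U\cup\{a(j)\}$ is connected in the dual interaction graph, and it contains the root $a(j)$.

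\emph{Counting.} I will count the pairs $(U,\mathbf n)$.

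1. For the set $U$, I use the standard bound on connected vertex sets containing a fixed root in a graph of maximum degree $d$. One proof encodes a spanning tree by its depth-first-search walk; alternatively, it is the Galton--Watson/Kesten estimate. The bound is $\le (e d)^{n-1}$ for sets of size $n$, and here $n\in\{v,v+1\}$ depending on whether $a(j)\in U$.

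2. I combine the two cases into a single bound of the form $c\,(e(d+1))^{v}$. I expect to absorb the factor of two by treating $v=1$ directly: there are at most $|N[a(j)]|\le d+1$ choices. For $v\ge 2$, I will use $2d^{v}\le \dots$ with a little slack, or else just accept a slightly larger base.

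3. Given $U$, a multiset of total size $k$ using every element of $U$ at least once is counted by $\binom{k-1}{v-1}$.

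Summing over $v$,
\[
|\cM_{j,k}|\le\sum_{v=1}^{k}\binom{k-1}{v-1}(e(d+1))^{v}=e(d+1)\bigl(1+e(d+1)\bigr)^{k-1}\le\bigl((1+e)(d+1)\bigr)^k ,
\]
and $1+e<5$ yields Eq.~\eqref{eq:finite-degree-compressed-monomial-count}.

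\emph{Main obstacle.} The delicate step is the constant bookkeeping in the rooted-animal count. There are two issues: the root $a(j)$ may or may not belong to $U$, and the first step has $d+1$ rather than $d$ choices. If the crude factor of two threatens the base $5$, I would redo the count directly on ordered words. In that approach, first occurrences are encoded by a parent pointer (at most $d+1$ choices per new vertex, as in the proof of Theorem~\ref{thm:app-finite-degree}), and repeated occurrences are encoded by a binary "new/repeat" flag together with the multiplicity composition. That route gives a base $2(d+1)\cdot 2\le 5(d+1)$ after handling the repeat choices through the composition count rather than through a per-step factor.
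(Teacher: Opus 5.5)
Your proposal is correct and is essentially the paper's argument: nonzero monomials come from clusters connected to the root region $W_j$, counted as rooted connected sets times the composition count $\binom{k-1}{v-1}$ and summed by the binomial theorem. The paper avoids your case split by adjoining a virtual root joined to the at most $d+1$ terms meeting $W_j$ (augmented degree $\Delta=d+1$), and it uses the cruder Catalan plane-tree count $(4\Delta)^\ell$, giving $4\Delta(1+4\Delta)^{k-1}\le(5\Delta)^k$; your step-2 worry is unnecessary, since $(ed)^{v-1}+(ed)^{v}=(ed)^{v-1}(1+ed)\le(e(d+1))^{v}$ directly, so your sharper $(e\Delta)^{n-1}$ bound yields base $(1+e)(d+1)<5(d+1)$ without any fallback.
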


\begin{proof}
Adjoin to the candidate overlap graph a root vertex representing $W_j$, joined
to every candidate term whose support intersects $W_j$.  By
Eq.~\eqref{eq:candidate-term-degree-bound}, the augmented graph has maximum
degree at most $\Delta_{\rm fd}:=d+1$.  If a degree-$k$ ordered transfer word is nonzero,
each newly appearing term overlaps $W_j$ or a previously appearing
term.  Hence its $\ell$ distinct terms and the root form a connected
set.

A connected set with $\ell$ nonroot vertices has a rooted spanning tree with
$\ell$ edges.  There are at most $4^\ell$ rooted plane-tree shapes and at most
$\Delta_{\rm fd}^\ell$ choices of neighbor labels, so there are at most
$(4\Delta_{\rm fd})^\ell$ such connected sets.  For a fixed set of $\ell$ terms,
the positive multiplicities summing to $k$ can be chosen in
$\binom{k-1}{\ell-1}$ ways.  Therefore
\begin{align}
    |\cM_{j,k}|
    &\le
    \sum_{\ell=1}^k
    (4\Delta_{\rm fd})^\ell
    \binom{k-1}{\ell-1} \\
    &=
    4\Delta_{\rm fd}(1+4\Delta_{\rm fd})^{k-1}
    \le
    (5\Delta_{\rm fd})^k,
\end{align}
where the last inequality uses $\Delta_{\rm fd}\ge1$.  Merging different orderings can
only reduce the number of stored commutative monomials, proving the claim.
\end{proof}

The normalized homogeneous polynomial used by the exact compressed evaluator is
\begin{equation}
    (p_k(x))_\alpha
    :=\sum_j(\mathbf{H})_{\alpha j}f_{j,k}(x)
    =\frac1{k!}\mathsf C_\alpha\!\left((\cL_x^\dagger)^k\right).
    \label{eq:app-real-pk-definition}
\end{equation}
Expanding the definition gives
\begin{align}
    f_{j,k}(x)
    &=
    \frac{1}{k!}
    \sum_{\nu\in\mathcal I_j}w_{j\nu}
    \sum_{a_1,\ldots,a_k}
    x_{a_1}\cdots x_{a_k}
    \frac{1}{D}
    \operatorname{Tr}\!\left[
        Q_{j\nu}
        \mathcal L_{a_k}^\dagger\cdots
        \mathcal L_{a_1}^\dagger(P_{j\nu})
    \right].
    \label{eq:raw-homogeneous-expanded}
\end{align}
There are at most \(2\,4^{|W_j|}\) Pauli matrix elements in this representation
(and only one for a direct response).  The fixed factor two is absorbed into
the bookkeeping constants below.  It does not multiply the number of
distinct monomials in the stored polynomial table, because all input Pauli
strings have support contained in the same region \(W_j\) and hence involve
the same local variable set generated by the subsequent Pauli transfers.

By the definition of $r_0$ in Eq.~\eqref{eq:response-inverse-row-sparsity},
$\operatorname{nnz}(\mathbf H)\le Mr_0$.  In particular,
Theorem~\ref{thm:uniform-response-inverse} gives
$\operatorname{nnz}(\mathbf H)=\mathcal{O}(M)$ under either set of its bounded-overlap assumptions.

One nonzero local Pauli-transfer step has constant cost.  Thus an
ordered transfer word of length \(k\) costs \(\mathcal{O}(k)\) arithmetic operations up to
fixed Pauli-phase bookkeeping constants.

\begin{proposition}[Ordered coefficient construction under bounded dual-interaction-graph degree]
\label{prop:coef-construction-finite-degree}
Assume bounded dual-interaction-graph degree with maximum degree $d$ and let
$m\ge1$.  Ordered Pauli-transfer enumeration constructs the raw coefficient
tables through degree $m$ in time
\begin{equation}
    T_{\rm coef}^{\rm fd}(m)
    =
    M\exp\!\left(
        \mathcal{O}\!\left(
            S_D+m\log\!\bigl(m(1+\Lambda)\bigr)
        \right)
    \right).
    \label{eq:Tcoef-fd}
\end{equation}
\end{proposition}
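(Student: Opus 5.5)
We bound the cost row by row and then sum over the $M$ raw responses. Fix $j\in\cJ_\cA$ and use the scalar matrix-element representation of Eq.~\eqref{eq:generic-raw-matrix-elements}. For each input Pauli $P_{j\nu}$ and each degree $k\le m$, we enumerate all ordered words $(a_1,\ldots,a_k)$ for which $\cL_{a_k}^\dagger\cdots\cL_{a_1}^\dagger(P_{j\nu})$ is nonzero. Along each word we propagate the single phase-free Pauli label and its scalar coefficient, using Lemma~\ref{lem:deterministic-pauli-transfer}. At the end of the word we read off the overlap with $Q_{j\nu}$ and accumulate $w_{j\nu}/k!$ times this scalar into the table entry of the commutative monomial $x_{a_1}\cdots x_{a_k}$.

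Each step of a word costs $\mathcal{O}(1)$, so a word of length $k$ costs $\mathcal{O}(k)$. The total cost is therefore
\[
\sum_{j}\sum_{\nu\in\mathcal I_j}\sum_{k=1}^m \mathcal{O}(k)\,\#\{\text{nonzero ordered words of length }k\}.
\]
To count the nonzero words, we reuse the branching argument from the proof of Theorem~\ref{thm:app-finite-degree}. By Eq.~\eqref{eq:local-term-support-properties}, the first letter must intersect $\supp(P_{j\nu})\subseteq W_j$. By Eq.~\eqref{eq:candidate-term-degree-bound}, this gives at most $d+1$ choices. After $\ell$ letters, the next letter must intersect $W_j$ or one of the supports already chosen, which leaves at most $(d+1)+\ell d$ choices. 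The number of nonzero words of length $k$ is therefore at most $(d+1)(2d+1)\cdots(kd+1)\le (d+1)^k k!\le \Lambda^k k!$.

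Combining the pieces, the work per $(j,\nu)$ is at most $\sum_{k\le m}\mathcal{O}(k)\,k!\,\Lambda^k\le \mathcal{O}(m^2)\,m!\,(1+\Lambda)^m$. This equals $\exp(\mathcal{O}(m\log(m(1+\Lambda))))$ by Stirling's bound $m!\le m^m$. Summing over $\nu$ and $j$ with Eq.~\eqref{eq:total-scalar-raw-elements} contributes a factor $2M4^{S_D}=M\exp(\mathcal{O}(S_D))$. Long-support direct rows have $|\mathcal I_j|=1$, so they need no separate treatment, since their first letter is still restricted to the terms meeting the single site $u$.

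Table insertion and lookup are unit cost in the symbolic-Pauli model, so merging orderings into commutative monomials adds at most a constant factor. The bound of Lemma~\ref{lem:finite-degree-compressed-monomials} shows that the tables themselves have size $M\exp(\mathcal{O}(S_D+m\log(1+\Lambda)))$, which is dominated by the total above. Together these give Eq.~\eqref{eq:Tcoef-fd}.

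The main obstacle is the branching count for overlaps with a previously chosen term rather than with $W_j$. The degree bound $d$ applies to neighbors of a dictionary term, so each previously chosen term contributes at most $d$ new candidates, and including repeats only helps. I expect this to go through exactly as in Theorem~\ref{thm:app-finite-degree}.

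A secondary point is that enumerating the admissible next letters must itself take $\mathcal{O}(d+1+\ell d)$ time. This holds if we precompute, once, adjacency lists of the dual graph and, for each $W_j$, the list of terms meeting it. That preprocessing costs $\mathcal{O}(M(d+1))$ and is absorbed in the bound.
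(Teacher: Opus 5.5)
Your proposal is correct and follows essentially the same route as the paper's proof. Both bound the number of nonzero ordered words of length $k$ by $(d+1)^k k!$ via the support-growth branching argument, charge $\mathcal{O}(k)$ per word, sum over $k\le m$ using $m!\le m^m$, and multiply by the $2M4^{S_D}$ scalar matrix elements of Eq.~\eqref{eq:total-scalar-raw-elements}. Your branching count $(d+1)+\ell d$ is slightly sharper than the paper's $(\ell+1)(d+1)$. Your remark about precomputed incidence and adjacency lists (with deduplication of admissible next letters) makes explicit an implementation detail that the paper leaves implicit.
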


\begin{proof}
For a fixed real raw response $\cF_j$, every input Pauli string $P_{j\nu}$ has
support contained in $W_j$.  By
Eq.~\eqref{eq:candidate-term-degree-bound}, the first nontrivial local
term has at most $d+1$ choices.  After $\ell$ terms have been
chosen, the next term has at most $(\ell+1)(d+1)$ choices.  Hence the
number of effective words of length $k$ is at most $(d+1)^k k!$.  Each word
costs $\mathcal{O}(k)$, and Eq.~\eqref{eq:total-scalar-raw-elements} bounds the total
number of scalar matrix elements by $2M4^{S_D}$.  Since
\begin{equation}
    \sum_{k=1}^m k(d+1)^k k!
    \le 2m(d+1)^m m!,
\end{equation}
using $m!\le m^m$, $\Lambda=2(d+1)$, and
$4^{S_D}=\exp(\mathcal{O}(S_D))$ proves Eq.~\eqref{eq:Tcoef-fd}.
\end{proof}

\begin{corollary}[Compressed storage under bounded dual-interaction-graph degree]
\label{cor:coef-storage-finite-degree}
After merging words that yield the same commutative monomial, the numbers of
stored raw- and normalized-response terms satisfy
\begin{align}
    N_{\rm poly,raw}^{\rm fd}(m)
    &=
    M\exp\!\left(\mathcal{O}\!\left(m\log(1+\Lambda)\right)\right),
    \label{eq:Npoly-raw-fd}\\
    N_{\rm poly,norm}^{\rm fd}(m)
    &=
    \operatorname{nnz}(\mathbf H)
    \exp\!\left(\mathcal{O}\!\left(m\log(1+\Lambda)\right)\right).
    \label{eq:Npoly-norm-fd}
\end{align}
\end{corollary}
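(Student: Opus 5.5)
We bound the stored table sizes directly, using Lemma~\ref{lem:finite-degree-compressed-monomials} for the monomial counts and the sparsity of $\mathbf H$ for the normalized tables.

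\emph{Raw tables.} For each raw row $j$ the stored table through degree $m$ consists of the commutative monomials of $f_{j,k}$ for $k=1,\dots,m$. For a fixed exponent vector, Lemma~\ref{lem:deterministic-pauli-transfer} shows that all orderings of the corresponding terms produce one common phase-free Pauli label. Merging words therefore gives one table entry per monomial and per output Pauli label. Every scalar matrix element in Eq.~\eqref{eq:generic-raw-matrix-elements} is paired with a fixed $Q_{j\nu}$. So each monomial carries one scalar coefficient, obtained after summing over the at most $2\,4^{S_D}$ input elements $\nu$, as argued after Eq.~\eqref{eq:raw-homogeneous-expanded}. Hence the number of stored terms for row $j$ is at most $\sum_{k=1}^m|\cM_{j,k}|$. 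By Lemma~\ref{lem:finite-degree-compressed-monomials},
\begin{equation}
\sum_{k=1}^m|\cM_{j,k}|\le\sum_{k=1}^m[5(d+1)]^k\le 2[5(d+1)]^m .
\end{equation}
Since $d+1=\Lambda/2$, this equals $\exp(\mathcal O(m\log(1+\Lambda)))$. Summing over the $M$ rows gives Eq.~\eqref{eq:Npoly-raw-fd}.

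\emph{Normalized tables.} By Eq.~\eqref{eq:app-real-pk-definition}, the normalized polynomial $(p_k)_\alpha$ is the $\mathbf H$-weighted sum of the raw polynomials over the rows $j$ with $(\mathbf H)_{\alpha j}\ne0$. Its monomial support is contained in the union of the supports of those rows. The number of stored terms is therefore at most
\begin{equation}
\sum_\alpha\sum_{j:(\mathbf H)_{\alpha j}\ne0}2[5(d+1)]^m=\operatorname{nnz}(\mathbf H)\exp(\mathcal O(m\log(1+\Lambda))),
\end{equation}
which is Eq.~\eqref{eq:Npoly-norm-fd}.

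\emph{Main obstacle.} The delicate step is justifying that the $\nu$-sum does not multiply the count by $4^{S_D}$. The $\nu$ index only changes the scalar coefficient attached to a monomial, not the monomial set. This holds because $\cM_{j,k}$ is defined as the monomial support of the already-combined polynomial $f_{j,k}$, and Lemma~\ref{lem:finite-degree-compressed-monomials} bounds exactly that support. In particular, the connected-set argument there uses only the root $W_j$ and not the individual $P_{j\nu}$. The factor $4^{S_D}$ thus enters only the construction time of Proposition~\ref{prop:coef-construction-finite-degree}, not the storage bound.
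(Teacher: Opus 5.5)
Your proposal is correct and follows the paper's proof. You apply Lemma~\ref{lem:finite-degree-compressed-monomials} degree by degree, sum the geometric series over $k\le m$ and over the $M$ rows, and then combine the raw tables through the nonzero entries of $\mathbf H$. Your remark that the $\nu$-sum changes only coefficients, not the monomial support of $f_{j,k}$, matches the paper's comment after Eq.~\eqref{eq:raw-homogeneous-expanded} and spells out a point the paper's two-line proof leaves implicit.
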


\begin{proof}
Lemma~\ref{lem:finite-degree-compressed-monomials} gives at most
$[5(d+1)]^k$ degree-$k$ monomials per raw response.  Summing this geometric
bound over $k\le m$ proves Eq.~\eqref{eq:Npoly-raw-fd}; combining raw tables
through the nonzero entries of $\mathbf H$ proves
Eq.~\eqref{eq:Npoly-norm-fd}.
\end{proof}

\begin{proposition}[Ordered coefficient construction under unweighted local strength]
\label{prop:coef-construction-local-strength}
Assume unweighted local strength $\kappa_0$, maximum support size $s$, and
$m\ge1$.  Ordered Pauli-transfer enumeration constructs the raw coefficient
tables through degree $m$ in time
\begin{equation}
    T_{\rm coef}^{\rm ls}(m)
    =
    M\exp\!\left(
        \mathcal{O}\!\left(
            S_D+m\log\!\bigl(m(1+\Lambda)\bigr)
        \right)
    \right).
    \label{eq:Tcoef-ls}
\end{equation}
\end{proposition}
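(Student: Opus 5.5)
The plan is to mirror the proof of Proposition~\ref{prop:coef-construction-finite-degree}, with the neighbor count $d+1$ replaced by a support-based count derived from $\kappa_0$. As there, I will fix a raw response $\cF_j$ and a scalar matrix element $\nu\in\mathcal I_j$. I will then enumerate ordered transfer words $\cL_{a_k}^\dagger\cdots\cL_{a_1}^\dagger(P_{j\nu})$ for $k\le m$, pruning to words that can be nonzero. Each word is evaluated by Lemma~\ref{lem:deterministic-pauli-transfer}, one constant-cost step per letter, and its contribution is accumulated into the commutative monomial key of the coefficient table. The total cost is the number of effective words times $\mathcal{O}(k)$, multiplied by the total number of scalar elements, which is at most $2M4^{S_D}$ by Eq.~\eqref{eq:total-scalar-raw-elements}.

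The key counting step follows the proof of Theorem~\ref{thm:app-local-strength}. After a prefix of $\ell$ letters, let $V_\ell$ be the union of $W_j$ with the supports of the chosen terms. By Eq.~\eqref{eq:local-term-support-properties}, a nonzero next letter must intersect $V_\ell$, and $|V_\ell|\le s+\ell s=(\ell+1)s$. The number of admissible next letters is bounded exactly as in Eqs.~\eqref{eq:local-strength-count-step-one}--\eqref{eq:local-strength-count-step-three}, using the lower bound $\|\cL_a^\dagger\|_{\infty\to\infty}\ge1$ from Lemma~\ref{lem:normalized-term-norms}: there are at most $\kappa_0|V_\ell|\le(\ell+1)s\kappa_0=(\ell+1)\Lambda$ of them. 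The number of effective words of length $k$ is therefore at most $\Lambda^k k!$. Enumerating them by depth-first search costs time proportional to the number of prefixes, since admissible successors can be listed from a precomputed site-to-term incidence table, which is a linear-size preprocessing step under the unit-cost symbolic model.

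The final arithmetic is then
\begin{equation}
    \sum_{k=1}^m k\Lambda^k k!\le m\,m!\,\sum_{k=1}^m\Lambda^k\le m^{m+1}\,m\,(1+\Lambda)^m=\exp\!\bigl(\mathcal{O}(m\log(m(1+\Lambda)))\bigr).
\end{equation}
Combined with $2M4^{S_D}=M\exp(\mathcal{O}(S_D))$, this gives Eq.~\eqref{eq:Tcoef-ls}.

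The one point needing care is the direct responses, whose input $P_C$ sits in a region $W_j$ that may exceed $S_D$ but still has size at most $s$. The bound $|V_0|\le s$ covers this case, so $(\ell+1)s\kappa_0$ remains valid uniformly. I expect the main obstacle to be justifying that successor listing costs $\mathcal{O}(1)$ per admissible successor rather than $\mathcal{O}(M)$. I would handle it by iterating over sites $y\in V_\ell$ and the incidence lists of $y$, which costs at most $\kappa_0|V_\ell|$ per step and is thus absorbed into the same bound.
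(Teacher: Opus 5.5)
Your proposal is correct and follows essentially the same argument as the paper. Both use the norm lower bound of Lemma~\ref{lem:normalized-term-norms} to turn local strength into the count $\kappa_0|V_\ell|\le(\ell+1)\Lambda$ of admissible next terms, hence at most $\Lambda^k k!$ effective words of length $k$ at cost $\mathcal{O}(k)$ each over $\mathcal{O}(M4^{S_D})$ scalar matrix elements. The differences are only cosmetic: you make the incidence-list successor enumeration explicit and bound $\sum_{k\le m}k\Lambda^k k!$ by $m^{m+2}(1+\Lambda)^m$ without needing $\Lambda\ge1$, whereas the paper uses $\Lambda\ge1$ to obtain $2m\Lambda^m m!$.
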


\begin{proof}
For a generated support $V\subseteq[N]$, Lemma~\ref{lem:normalized-term-norms}
and the definition of $\kappa_0$ give
\begin{equation}
    \left|\left\{a:
    \supp(\mathcal L_a^\dagger)\cap V\ne\emptyset\right\}\right|
    \le
    \sum_{y\in V}
    \sum_{a:y\in\supp(\mathcal L_a^\dagger)}
    \|\mathcal L_a^\dagger\|_{\infty\to\infty}
    \le \kappa_0|V|.
    \label{eq:ls-count-coef-construction}
\end{equation}
After $\ell$ transfers, $|V|\le(\ell+1)s$, so the number of effective
length-$k$ words is at most $\Lambda^k k!$.  Multiplying by the $\mathcal{O}(k)$ word
cost and the $\mathcal{O}(M4^{S_D})$ scalar matrix elements gives the degree sum.  Since
$\Lambda\ge1$ for a nonempty normalized dictionary,
\begin{equation}
    \sum_{k=1}^m k\Lambda^k k!\le2m\Lambda^m m!,
\end{equation}
which proves Eq.~\eqref{eq:Tcoef-ls}.
\end{proof}

\begin{corollary}[Compressed storage under unweighted local strength]
\label{cor:coef-storage-local-strength}
After merging words that yield the same commutative monomial,
\begin{align}
    N_{\rm poly,raw}^{\rm ls}(m)
    &=
    M\exp\!\left(\mathcal{O}\!\left(m\log(1+\Lambda)\right)\right),
    \label{eq:Npoly-raw-ls}\\
    N_{\rm poly,norm}^{\rm ls}(m)
    &=
    \operatorname{nnz}(\mathbf H)
    \exp\!\left(\mathcal{O}\!\left(m\log(1+\Lambda)\right)\right).
    \label{eq:Npoly-norm-ls}
\end{align}
\end{corollary}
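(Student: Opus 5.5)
The plan is to reuse the proof of Corollary~\ref{cor:coef-storage-finite-degree}, with Lemma~\ref{lem:finite-degree-compressed-monomials} replaced by a compressed-monomial count that holds under unweighted local strength. I will first fix a raw response $\cF_j$ with region $W_j$, $|W_j|\le s$. Let $\cM_{j,k}$ be the set of degree-$k$ commutative monomials with nonzero coefficient in $f_{j,k}$. A nonzero ordered transfer word has the following structure, by Eq.~\eqref{eq:local-term-support-properties}: every newly appearing term intersects $W_j$ or the support of a previously appearing term. So the distinct terms $a_1,\dots,a_\ell$ of the word, together with a root vertex for $W_j$, form a connected set in the dual interaction graph augmented by that root.

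The main step, and the main obstacle, is bounding the number of such rooted connected sets without a degree bound. The dual graph may have unbounded degree here; only $\kappa_0$ is bounded. The useful fact is local: by the counting in Eq.~\eqref{eq:ls-count-coef-construction}, which uses Lemma~\ref{lem:normalized-term-norms} ($\|\cL_a^\dagger\|\ge1$), any fixed support $V$ meets at most $\kappa_0|V|$ candidate terms. I will therefore count rooted spanning trees as follows.
\begin{itemize}
\item The root has at most $\kappa_0 s$ neighbours.
\item Every candidate term, having support at most $s$, has at most $\kappa_0 s=\Lambda$ neighbours.
\item The augmented graph therefore has maximum degree at most $\Lambda$.
\item The plane-tree argument (at most $4^\ell$ shapes, at most $\Lambda^\ell$ label choices) then bounds the number of connected sets with $\ell$ nonroot vertices by $(4\Lambda)^\ell$.
\end{itemize}
Multiplicities are distributed as in the finite-degree case, in $\binom{k-1}{\ell-1}$ ways. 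This gives
\[
|\cM_{j,k}|\le\sum_{\ell=1}^k(4\Lambda)^\ell\binom{k-1}{\ell-1}=4\Lambda(1+4\Lambda)^{k-1}\le[5(1+\Lambda)]^k .
\]
Here $\Lambda$ may be fractional, so I use $1+\Lambda$ and take ceilings where needed. If the degree step fails because of the fractional $\kappa_0$, I will instead count labelled trees step by step. Each new vertex is chosen among the neighbours of at most $\ell+1$ existing supports, which still gives an $\exp(\mathcal O(k\log(1+\Lambda)))$ bound after the tree-shape count.

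The raw storage bound follows by summing over $k\le m$ and $j\le M$:
\[
N^{\rm ls}_{\rm poly,raw}(m)\le M\sum_{k=1}^m[5(1+\Lambda)]^k=M\exp\!\left(\mathcal O(m\log(1+\Lambda))\right).
\]
The remark after Eq.~\eqref{eq:raw-homogeneous-expanded} ensures that the $\mathcal O(4^{S_D})$ input Pauli strings share one variable set, so they do not multiply the monomial count. For the normalized tables, each $(p_k)_\alpha=\sum_j\mathbf H_{\alpha j}f_{j,k}$ is the merge of at most $r_0$ raw tables, one for each nonzero entry of the row of $\mathbf H$. Summing over $\alpha$ gives at most $\operatorname{nnz}(\mathbf H)$ times the per-table bound, which is Eq.~\eqref{eq:Npoly-norm-ls}.
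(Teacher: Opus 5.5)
Your proposal is correct and follows essentially the same route as the paper. Both arguments convert $\kappa_0$ into an overlap count via $\|\mathcal L_a^\dagger\|_{\infty\to\infty}\ge1$, bound the degree of the root-augmented dual graph by roughly $\Lambda$, reuse the rooted plane-tree and multiplicity count of Lemma~\ref{lem:finite-degree-compressed-monomials}, sum over $k\le m$, and merge through the nonzero entries of $\mathbf H$. The one bookkeeping difference is that the paper removes the self-count before adding the root edge (the $\kappa_0|X_a|$ bound includes $a$ itself) and uses $\Lambda\ge1$ to reach $(10\Lambda)^k$, whereas your $[5(1+\Lambda)]^k$ slack absorbs that detail equally well.
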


\begin{proof}
For a candidate term \(a\), write
\begin{equation}
    X_a:=\supp(\mathcal L_a^\dagger).
\end{equation}
Lemma~\ref{lem:normalized-term-norms} implies that every candidate term
has induced norm at least one.  Hence
\begin{align}
    \left|
        \left\{
            b:X_b\cap X_a\ne\emptyset
        \right\}
    \right|
    &\le
    \sum_{\substack{b:\\X_b\cap X_a\ne\emptyset}}
    \|\mathcal L_b^\dagger\|_{\infty\to\infty}
    \notag\\
    &\le
    \sum_{y\in X_a}
    \sum_{b:y\in X_b}
    \|\mathcal L_b^\dagger\|_{\infty\to\infty}
    \notag\\
    &\le
    \kappa_0|X_a|
    \le
    s\kappa_0
    =
    \Lambda.
    \label{eq:local-strength-candidate-overlap-degree}
\end{align}

Attach a virtual root representing \(W_j\).  Since
\(|W_j|\le s\), the same argument gives
\begin{equation}
    \left|
        \left\{
            a:X_a\cap W_j\ne\emptyset
        \right\}
    \right|
    \le
    \kappa_0|W_j|
    \le
    \Lambda.
    \label{eq:local-strength-root-degree}
\end{equation}
For a candidate vertex, the set counted in
Eq.~\eqref{eq:local-strength-candidate-overlap-degree} includes the
vertex itself.  Removing the self-count and, when applicable, adding
the edge to the virtual root shows that the augmented graph has
maximum degree at most
$\Delta_{\rm ls}
    :=
    \max\{1,\lceil\Lambda\rceil\}.$
The rooted connected-set argument of
Lemma~\ref{lem:finite-degree-compressed-monomials} therefore gives at
most$(5\Delta_{\rm ls})^k$ degree-\(k\) monomials per raw response.  For a nonempty normalized
dictionary, \(\Lambda\ge1\), so $\Delta_{\rm ls}\le2\Delta$. Consequently,
\begin{equation}
    |\mathcal M_{j,k}|
    \le
    (10\Lambda)^k.
    \label{eq:local-strength-compressed-monomial-count}
\end{equation}
Summing this geometric bound over \(k\le m\) proves
Eq.~\eqref{eq:Npoly-raw-ls}.  Combining the raw tables through the
nonzero entries of \(\mathbf H\) proves
Eq.~\eqref{eq:Npoly-norm-ls}.
\end{proof}

For the compressed construction, set
$
    \Delta_\star
    :=
    \max\left\{1,\left\lceil\Lambda\right\rceil\right\}.
    \label{eq:compressed-dp-overlap-degree}
$
We use the sparse-table arithmetic model specified above, with expected
constant-time insertion and lookup and unit-cost symbolic Pauli-key operations.

\begin{proposition}[Compressed dynamic programming for response polynomials]
\label{prop:compressed-response-dp}
Assume either locality hypothesis in Eq.~\eqref{eq:unified-locality-scale} and
let $m\ge1$.  Sparse tables keyed by an output phase-free Pauli string and a
commutative coefficient monomial construct the raw response polynomials and
their normalized combinations in
\begin{equation}
    T_{\rm coef}^{\rm DP}(m)
    =
    \mathcal{O}\!\left(
        M4^{S_D}\Delta_\star
        \sum_{k=0}^{m-1}(k+1)(5\Delta_\star)^k
        +
        \operatorname{nnz}(\mathbf H)
        \sum_{k=1}^{m}(5\Delta_\star)^k
    \right)
    \label{eq:compressed-dp-complexity}
\end{equation}
arithmetic operations.
\end{proposition}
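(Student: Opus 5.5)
The plan is to build the raw response polynomials degree by degree with a dynamic program, rather than enumerating ordered words as in Propositions~\ref{prop:coef-construction-finite-degree} and~\ref{prop:coef-construction-local-strength}. For each raw response $j$ and each scalar matrix element $\nu\in\mathcal I_j$, keep a sparse table $\mathcal T_{j,\nu,k}$. Its keys are pairs $(\operatorname{pf}(O),\mathbf n)$, where $\mathbf n$ is a commutative exponent vector of total degree $k$. Its values are complex coefficients chosen so that
\[
\frac{1}{k!}\sum_{\text{ordered words of multiset }\mathbf n}\mathcal L_{a_k}^\dagger\cdots\mathcal L_{a_1}^\dagger(P_{j\nu})
\]
equals the sum of the entries with key $\mathbf n$, each multiplied by its Pauli label. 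Lemma~\ref{lem:deterministic-pauli-transfer} is what makes this well defined: for fixed $P_{j\nu}$ and $\mathbf n$, every nonzero ordering ends on the same phase-free label $\operatorname{pf}(P_{j\nu}\prod_a R_a^{n_a})$. Each key $\mathbf n$ therefore carries exactly one Pauli label, and the table $\mathcal T_{j,\nu,k}$ has at most $|\mathcal M_{j,k}|$ entries.

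The table for degree $k+1$ is obtained from the table for degree $k$. For each stored entry $(O,\mathbf n)$, loop over the candidate terms $a$ whose support meets $\supp(O)$; only these can act nontrivially, by Eq.~\eqref{eq:local-term-support-properties}. For each such $a$, compute $\mathcal L_a^\dagger(O)=\gamma_a(O)\operatorname{pf}(R_aO)$ in constant time and insert the result into key $\mathbf n+e_a$. The factor $1/(k+1)$ is applied afterwards, so the $1/k!$ normalization is maintained. Two counts are needed here.
\begin{enumerate}
\item The number of stored keys at degree $k$ is at most $(5\Delta_\star)^k$. Under bounded degree this is Lemma~\ref{lem:finite-degree-compressed-monomials}; under local strength it is the rooted connected-set argument in Corollary~\ref{cor:coef-storage-local-strength}. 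In both cases the augmented graph has maximum degree at most $\Delta_\star$.
\item The number of relevant terms $a$ per entry is at most $(k+1)\Delta_\star$. The support of $O$ lies in the union of $W_j$ and the supports of the terms present in $\mathbf n$, which is a connected set of at most $k$ nonroot vertices plus the root. Each vertex of this set has at most $\Delta_\star$ neighbours.
\end{enumerate}
The degree-$(k+1)$ step for one matrix element therefore costs $\mathcal O((k+1)\Delta_\star(5\Delta_\star)^k)$. Summing over $k=0,\dots,m-1$ and over the at most $2M4^{S_D}$ matrix elements from Eq.~\eqref{eq:total-scalar-raw-elements} gives the first term of Eq.~\eqref{eq:compressed-dp-complexity}; the factor $2$ is absorbed into the $\mathcal O$.

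The scalar values $f_{j,k}$ are read off after each degree. For each stored entry, take $w_{j\nu}\frac1D\Tr[Q_{j\nu}\cdot]$, which simply keeps the entries whose label equals $\operatorname{pf}(Q_{j\nu})$ together with the correct phase. These contributions are accumulated into a table keyed by $\mathbf n$ alone. This read-off costs no more than the construction, and it merges the $\nu$ tables into at most $|\mathcal M_{j,k}|$ monomials per response $j$.

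The normalized polynomials come from Eq.~\eqref{eq:app-real-pk-definition}, $(p_k)_\alpha=\sum_j\mathbf H_{\alpha j}f_{j,k}$. For each nonzero entry $\mathbf H_{\alpha j}$ and each $k\le m$, add the scaled table of $f_{j,k}$ into the table of $(p_k)_\alpha$. This costs $|\mathcal M_{j,k}|\le(5\Delta_\star)^k$ operations per entry and degree, which gives the second term $\operatorname{nnz}(\mathbf H)\sum_{k=1}^m(5\Delta_\star)^k$.

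I expect the main obstacle to be bounding the relevant terms per step by $(k+1)\Delta_\star$ uniformly in the local-strength case, where $\Delta_\star$ comes from the overlap counts Eq.~\eqref{eq:local-strength-candidate-overlap-degree} and Eq.~\eqref{eq:local-strength-root-degree}. The count must range over terms meeting the support of the current Pauli, which may be a proper subset of the generated region. Since that subset can only reduce the count, the stated bound still holds. A second point to check is that terms acting as zero on $O$, for example a commuting Hamiltonian term, may still be enumerated. They cost only constant work each and are already counted within the $(k+1)\Delta_\star$ bound.
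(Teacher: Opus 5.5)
Your proposal is correct and follows essentially the same route as the paper's proof. The shared ingredients are the degree-by-degree sparse-table recursion $V_{\nu,k+1}=\frac{1}{k+1}\mathcal L_x^\dagger(V_{\nu,k})$ keyed by (phase-free label, commutative monomial), with Lemma~\ref{lem:deterministic-pauli-transfer} giving one label per monomial, the rooted plane-tree count $(5\Delta_\star)^k$ of states, $(k+1)\Delta_\star$ transitions per state summed over at most $2M4^{S_D}$ scalar inputs, and an $\operatorname{nnz}(\mathbf H)\sum_{k\le m}(5\Delta_\star)^k$ combination step. One wording fix: the intermediate tables $V_{\nu,k}$ are not bounded by $|\mathcal M_{j,k}|$, because $\mathcal M_{j,k}$ counts only monomials that survive the final trace against $Q_{j\nu}$; they are bounded by the rooted connected-multiset count $(5\Delta_\star)^k$, which is the bound you actually use in your item~1.
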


\begin{proof}
\emph{State count.}
For each scalar Pauli matrix element in
Eq.~\eqref{eq:generic-raw-matrix-elements}, write $P=P_{j\nu}$ and define
\begin{equation}
    V_{\nu,k}(x)
    :=
    \frac1{k!}(\mathcal L_x^\dagger)^k(P).
    \label{eq:compressed-dp-Vk}
\end{equation}
Expand $V_{\nu,k}$ in a sparse table
\begin{equation}
    V_{\nu,k}(x)
    =
    \sum_{|\mathbf n|=k}
    \sum_{R\in\mathcal P_{[N]}}
    A_{\nu,k}(R,\mathbf n)x^{\mathbf n}R,
    \label{eq:compressed-dp-table}
\end{equation}
where $R$ is phase free and all phases are stored in
$A_{\nu,k}(R,\mathbf n)$.  Initialize with $(P,\mathbf0)$ and use
\begin{equation}
    V_{\nu,k+1}(x)
    =
    \frac1{k+1}\mathcal L_x^\dagger(V_{\nu,k}(x)).
    \label{eq:compressed-dp-recurrence}
\end{equation}
Every transfer increments one exponent, and equal pairs $(R,\mathbf n)$ are
merged immediately.  Lemma~\ref{lem:deterministic-pauli-transfer} gives at
most one phase-free output label per transition.  Pauli orthogonality at the
final step selects $R=Q_{j\nu}$ and produces $f_{j,k}$.

Adjoining the raw region as a root gives augmented degree at most
$\Delta_\star$ under either locality condition.  The rooted plane-tree
argument of Lemma~\ref{lem:finite-degree-compressed-monomials} bounds the
number of degree-$k$ monomials by $(5\Delta_\star)^k$.  For a fixed scalar
input and monomial there is at most one phase-free output key.  Since an
averaged row has at most $2\,4^{S_D}$ scalar inputs, its degree-$k$ table has
at most
$
    2\,4^{S_D}(5\Delta_\star)^k
    \label{eq:compressed-dp-state-count}
$
states.  Each state has at most $(k+1)\Delta_\star$ outgoing transitions.
Summing transitions over $k=0,\ldots,m-1$ and all $M$ rows gives the first
term of Eq.~\eqref{eq:compressed-dp-complexity}; applying the nonzero entries
of $\mathbf H$ gives the second.
\end{proof}

\begin{corollary}[Polylogarithmic compressed construction]
\label{cor:compressed-response-dp-polylog}
If $S_D,\Delta_\star=\mathcal{O}(1)$ and
$\operatorname{nnz}(\mathbf H)=\mathcal{O}(M)$, then $m=\mathcal{O}(\log\log M)$ is a sufficient condition for $T_{\rm coef}^{\rm DP}(m)
    =M\operatorname{polylog}(M)$.
\end{corollary}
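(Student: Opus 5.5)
The corollary follows by plugging the hypotheses into the complexity bound of Proposition~\ref{prop:compressed-response-dp}. Set $c:=5\Delta_\star$. This is a constant at least $5$ because $\Delta_\star\ge1$ and $\Delta_\star=\mathcal O(1)$. Since $S_D=\mathcal O(1)$, the prefactor $M4^{S_D}\Delta_\star$ is $\mathcal O(M)$. Since $\operatorname{nnz}(\mathbf H)=\mathcal O(M)$, the prefactor of the second sum is also $\mathcal O(M)$. The whole bound therefore reduces to
\begin{equation*}
    T_{\rm coef}^{\rm DP}(m)
    =
    \mathcal O\!\left(
        M\sum_{k=0}^{m-1}(k+1)c^k
        +
        M\sum_{k=1}^{m}c^k
    \right).
\end{equation*}

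Next I bound the two geometric-type sums. Because $c\ge5>1$, the second sum is at most $\frac{c}{c-1}\,c^m\le 2c^m$. For the first sum, I bound each factor $k+1$ by $m$ and use the same geometric estimate, which gives at most $m\,\frac{c^m-1}{c-1}\le m c^m$. Together these give $T_{\rm coef}^{\rm DP}(m)=\mathcal O(M\,m\,c^m)$.

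Finally I insert $m\le C\log\log M$ for some constant $C$, taking $M\ge3$ so the logarithms are positive; small $M$ is handled by the constants. Then
\begin{equation*}
    c^m=\exp(m\log c)\le \exp(C\log c\,\log\log M)=(\log M)^{C\log c},
\end{equation*}
which is polylogarithmic because $C\log c=\mathcal O(1)$. The extra factor $m=\mathcal O(\log\log M)$ is also at most polylogarithmic. Hence $T_{\rm coef}^{\rm DP}(m)=M\operatorname{polylog}(M)$.

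The only step that needs care is the exponent bookkeeping in the last paragraph. The growth rate $c$ must be independent of $M$, and this is exactly where the hypothesis $\Delta_\star=\mathcal O(1)$ is used. The degree $m$ may grow only doubly logarithmically, since $m=\Theta(\log M)$ would give $c^m=M^{\Theta(1)}$ and break the near-linear bound. No other step is more than routine.
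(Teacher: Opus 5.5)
Your proposal is correct and follows essentially the same route as the paper: substitute $S_D,\Delta_\star=\mathcal O(1)$ and $\operatorname{nnz}(\mathbf H)=\mathcal O(M)$ into Eq.~\eqref{eq:compressed-dp-complexity}, then observe that both geometric sums, together with their polynomial-in-$m$ prefactors, are $\operatorname{polylog}(M)$ when $m=\mathcal O(\log\log M)$. The paper states this in one line, while you spell out the bound $\mathcal O(M\,m\,c^m)$ with $c=5\Delta_\star$ and the step $c^m\le(\log M)^{C\log c}$ that the paper leaves implicit.
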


\begin{proof}
For fixed $S_D$ and $\Delta_\star$, both geometric sums in
Eq.~\eqref{eq:compressed-dp-complexity}, including their polynomial-in-$m$
prefactors, are $\operatorname{polylog}(M)$ when
$m=\mathcal{O}(\log\log M)$.
\end{proof}

\section{Chebyshev--Lobatto response interpolation}
\label{app:chebyshev}

We first derive the interpolation bound for a scalar response $g$;
Algorithm~\ref{alg:cheb} applies it coordinatewise.

Let $r\ge2$ and choose the shifted Chebyshev--Lobatto nodes
\begin{equation}
    t_j=\frac{\tau}{2}\left(1-\cos\frac{j\pi}{r}\right),
    \qquad
    j=0,1,\ldots,r.
    \label{eq:app-lobatto-nodes}
\end{equation}
Then $t_0=0$ and $t_r=\tau$.  Let $\ell_j$ be the Lagrange basis polynomial
satisfying $\ell_j(t_k)=\delta_{jk}$, and define the endpoint derivative
weights
\begin{equation}
    w_j=\ell_j'(0).
    \label{eq:app-derivative-weights}
\end{equation}
For nodal data $y=(y_0,\ldots,y_r)$, define the endpoint derivative
functional
\begin{equation}
    \mathsf D_r[y]:=\sum_{j=0}^r w_jy_j.
    \label{eq:app-endpoint-derivative-functional}
\end{equation}
This equals the derivative at the origin of the degree-$r$ interpolant of the
data.

\begin{lemma}[Endpoint derivative-weight bound]
\label{lem:app-weight-bound}
For the nodes in Eq.~\eqref{eq:app-lobatto-nodes}, the derivative weights obey
\begin{equation}
    W_r:=\sum_{j=0}^r\abs{w_j}
    \le
    \frac{3r^2}{\tau}.
    \label{eq:app-Wm-bound}
\end{equation}
\end{lemma}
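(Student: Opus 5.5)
The plan is to reduce the bound to the standard Chebyshev--Lobatto differentiation matrix on $[-1,1]$, and then evaluate the resulting absolute row sum exactly using a closed-form cosecant sum. Rather than going through the Lagrange basis, I would use the explicit entries of that matrix.

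\emph{Step 1 (change of variables).} Set $x=1-2t/\tau$. This maps $[0,\tau]$ onto $[-1,1]$ and sends $t_j$ to $x_j=\cos(j\pi/r)$, with $t_0=0$ corresponding to $x_0=1$. Let $\widetilde\ell_j$ be the Lagrange basis on the nodes $x_j$. Then $\ell_j(t)=\widetilde\ell_j(1-2t/\tau)$, so by the chain rule $w_j=\ell_j'(0)=-(2/\tau)\,\widetilde\ell_j'(1)=-(2/\tau)\,D_{0j}$. Here $D$ is the classical Chebyshev differentiation matrix, and therefore
\begin{equation*}
W_r=\frac{2}{\tau}\sum_{j=0}^r|D_{0j}|.
\end{equation*}

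\emph{Step 2 (explicit first row).} I will use the standard formulas, which can be rederived from the barycentric weights $\lambda_j\propto(-1)^j/c_j$ with $c_0=c_r=2$ and $c_j=1$ otherwise:
\begin{equation*}
D_{00}=\frac{2r^2+1}{6},\qquad D_{0j}=\frac{c_0}{c_j}\frac{(-1)^j}{1-x_j}\quad(j\ge1).
\end{equation*}
This gives $|D_{0r}|=1/2$. For $1\le j\le r-1$, using $1-\cos\phi=2\sin^2(\phi/2)$,
\begin{equation*}
|D_{0j}|=\frac{2}{1-x_j}=\csc^2\!\left(\frac{j\pi}{2r}\right).
\end{equation*}

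\emph{Step 3 (exact cosecant sum; main step).} The key input is the identity $\sum_{k=1}^{n-1}\csc^2(k\pi/n)=(n^2-1)/3$, applied with $n=2r$. The summand is symmetric under $k\mapsto 2r-k$, and the middle term $k=r$ equals $1$. Hence
\begin{equation*}
S:=\sum_{j=1}^{r-1}\csc^2\!\left(\frac{j\pi}{2r}\right)=\frac{1}{2}\left(\frac{4r^2-1}{3}-1\right)=\frac{2r^2-2}{3}.
\end{equation*}
Adding the three pieces gives
\begin{equation*}
\sum_j|D_{0j}|=\frac{2r^2+1}{6}+\frac{2r^2-2}{3}+\frac12=r^2,
\end{equation*}
and therefore $W_r=2r^2/\tau\le 3r^2/\tau$.

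\emph{Main obstacle and fallback.} The delicate point is Step 3. A crude bound $\sin\theta\ge 2\theta/\pi$ yields roughly $\pi^2r^2/6$ for the cosecant sum, and this overshoots the constant $3$. The exact cosecant identity is therefore the route I would commit to, citing it as the classical sum obtained from the partial fractions of $1/\sin^2$, or equivalently from $\sum_k\csc^2(k\pi/n)=(n^2-1)/3$ via the roots of $\sin(nz)$. As a fallback, one can instead bound only the tail $j\ge2$ crudely and treat the $j=1$ term exactly. This still leaves room below $3r^2/\tau$ once $r\ge2$, because the factor $2/\tau$ already absorbs most of the slack.
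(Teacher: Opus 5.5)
Your proposal is correct and follows essentially the same route as the paper: the affine map $x=1-2t/\tau$, the explicit first row of the standard Chebyshev--Lobatto differentiation matrix, and the identity $\sum_{j=1}^{r-1}\cscop^2\bigl(j\pi/(2r)\bigr)=2(r^2-1)/3$. You also derive that identity from the full-period sum with $n=2r$ and note that the absolute row sum is exactly $r^2$, so in fact $W_r=2r^2/\tau$; both points make explicit what the paper's proof leaves implicit.
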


\begin{proof}
We map the interval $[0,\tau]$ to $[-1,1]$ by $x=1-2t/\tau$.  The node $t=0$
corresponds to $x=1$.  For Chebyshev--Lobatto nodes
$x_j=\cos(j\pi/r)$, the first row of the standard differentiation matrix
\(\mathsf D\) is
\begin{equation}
    \mathsf D_{00}=\frac{2r^2+1}{6},
    \qquad
    \mathsf D_{0j}=\frac{2(-1)^j}{c_j(1-x_j)}
    \quad (1\le j\le r),
    \label{eq:app-cheb-diff-row}
\end{equation}
where $c_j=1$ for $1\le j\le r-1$ and $c_r=2$.  Differentiation with respect to
$t$ multiplies the derivative with respect to $x$ by $-2/\tau$, so
\begin{equation}
    W_r
    \le
    \frac2\tau\left[
        \frac{2r^2+1}{6}
        +
        2\sum_{j=1}^{r-1}\frac1{1-\cos(j\pi/r)}
        +
        \frac12
    \right].
\end{equation}
Using $1-
\cos u=2\sin^2(u/2)$ and the identity
\begin{equation}
    \sum_{j=1}^{r-1}\cscop^2\left(\frac{j\pi}{2r}\right)
    =
    \frac{2(r^2-1)}{3},
\end{equation}
we obtain $W_r\le3r^2/\tau$ for all $r\ge2$.
\end{proof}

\begin{theorem}[Derivative estimation from noisy Lobatto data]
\label{thm:app-cheb-error}
Assume
\begin{equation}
    g(t)=\sum_{k\ge1}a_kt^k,
    \qquad
    |a_k|\le L_C\Lambda^k,
    \label{eq:app-g-series-bound}
\end{equation}
for $0\le t\le\tau$, with $\Lambda\tau<1$.  If the nodal data satisfy
\begin{equation}
    \max_{0\le j\le r}|\widehat g(t_j)-g(t_j)|
    \le\xi_{\rm stat},
\end{equation}
let $p_r=I_rg$ be the exact-data interpolant, then
\begin{equation}
    \left| \widehat p_r'(0)-g'(0)\right|
    \le
    \frac{3r^2}{\tau}\xi_{\rm stat}
    +
    \frac{3L_Cr^2}{\tau}
    \frac{(\Lambda\tau)^{r+1}}{1-\Lambda\tau}.
    \label{eq:app-cheb-error-bound}
\end{equation}
\end{theorem}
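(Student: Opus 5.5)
The estimator is linear in the nodal data, so the plan is to split the error into a statistical part and an interpolation part. Write $\widehat p_r$ for the interpolant of the noisy data $\widehat g(t_j)$ and $p_r=I_rg$ for the exact-data interpolant. Since $\widehat p_r'(0)=\mathsf D_r[\widehat g]$ and $p_r'(0)=\mathsf D_r[g]$, we have
\begin{equation*}
    \widehat p_r'(0)-g'(0)
    =
    \mathsf D_r[\widehat g-g]
    +\bigl(p_r'(0)-g'(0)\bigr).
\end{equation*}
For the first term, H\"older's inequality and Lemma~\ref{lem:app-weight-bound} give
\begin{equation*}
    \bigl|\mathsf D_r[\widehat g-g]\bigr|\le W_r\,\xi_{\rm stat}\le \frac{3r^2}{\tau}\,\xi_{\rm stat},
\end{equation*}
which is the first term in Eq.~\eqref{eq:app-cheb-error-bound}.

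For the interpolation part we use polynomial exactness. Let $q_r(t)=\sum_{k=1}^r a_kt^k$ be the degree-$r$ Taylor truncation of $g$. Because $q_r$ has degree at most $r$, interpolation at the $r+1$ nodes reproduces it exactly, so $\mathsf D_r[q_r]=q_r'(0)=a_1=g'(0)$. Hence
\begin{equation*}
    p_r'(0)-g'(0)=\mathsf D_r[g-q_r],
\end{equation*}
and therefore $|p_r'(0)-g'(0)|\le W_r\max_j|g(t_j)-q_r(t_j)|$.

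It remains to bound the tail on the nodes. Each node satisfies $0\le t_j\le\tau$, and $\Lambda\tau<1$, so the coefficient hypothesis in Eq.~\eqref{eq:app-g-series-bound} gives
\begin{equation*}
    |g(t_j)-q_r(t_j)|
    \le\sum_{k\ge r+1}L_C(\Lambda t_j)^k
    \le L_C\frac{(\Lambda\tau)^{r+1}}{1-\Lambda\tau}.
\end{equation*}
Multiplying by $W_r\le 3r^2/\tau$ produces the second term, and the triangle inequality combines the two parts into Eq.~\eqref{eq:app-cheb-error-bound}.

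All the real work is in Lemma~\ref{lem:app-weight-bound}. Beyond that, the only step that needs care is exactness at the node $t_0=0$. That node contributes $g(0)=q_r(0)=0$, so the Lagrange formula still reproduces $q_r$ and its derivative at the origin, and no boundary correction is needed there. The series converges absolutely for $t\le\tau<1/\Lambda$, which justifies the termwise tail estimate.
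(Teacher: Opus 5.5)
Your proposal is correct and follows essentially the same route as the paper. Both arguments bound the statistical term by $W_r\,\xi_{\rm stat}\le 3r^2\xi_{\rm stat}/\tau$, use exactness of interpolation on the degree-$r$ Taylor truncation so that the bias is the endpoint derivative of the interpolated tail, and bound that tail on $[0,\tau]$ by the geometric series. Your extra remark about the node $t_0=0$ is harmless but unnecessary, since interpolation reproduces any polynomial of degree at most $r$ whatever its nodal values.
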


\begin{proof}
Let
\begin{equation}
    T_r(t)=\sum_{k=1}^r a_k t^k
\end{equation}
be the degree-$r$ Taylor polynomial of $g$, and set $h(t)=g(t)-T_r(t)$.  The
interpolation operator $I_r$ is exact on polynomials of degree at most $r$. Since $I_r$ is exact on $T_r$,
\begin{equation}
    p_r=I_r g=I_rT_r+I_rh=T_r+I_rh.
\end{equation}
Since $h$ has no constant or linear term, $h'(0)=0$.  Hence the deterministic
derivative error is the derivative of the interpolated Taylor tail:
\begin{equation}
    p_r'(0)-g'(0)=(I_rh)'(0).
\end{equation}
By the definition of the derivative weights,
\begin{equation}
    \abs{(I_rh)'(0)}
    \le
    \sum_{j=0}^r\abs{w_j}\abs{h(t_j)}.
\end{equation}
For every node $t_j\in[0,\tau]$, the coefficient bound gives
\begin{equation}
    \abs{h(t_j)}
    \le
    \sum_{k=r+1}^\infty L_C(\Lambda t_j)^k
    \le
    L_C\frac{(\Lambda\tau)^{r+1}}{1-\Lambda\tau}.
\end{equation}
Lemma~\ref{lem:app-weight-bound} then gives the deterministic term in
Eq.~\eqref{eq:app-cheb-error-bound}, namely the interpolation-bias term.

The polynomial $\widehat p_r-p_r$ has nodal values
$\widehat g(t_j)-g(t_j)$.  Its derivative at zero is
$\sum_j w_j(\widehat g(t_j)-g(t_j))$, so the endpoint weight bound gives
\begin{equation}
    \abs{\widehat p_r'(0)-p_r'(0)}
    \le
    W_r\xi_{\rm stat}
    \le
    \frac{3r^2}{\tau}\xi_{\rm stat}.
\end{equation}
Adding the deterministic and noisy bounds yields
Eq.~\eqref{eq:app-cheb-error-bound}.
\end{proof}

For $\Lambda>0$ and target accuracy $\epsilon>0$, define the canonical
parameters
\begin{align}
    \tau_\star
    &:=
    \frac{1}{2\Lambda},
    \label{eq:canonical-tau-choice}\\
    r_\star
    &:=
    \min\left\{
    n\ge2:
    2^n\ge
    \frac{12L_C\Lambda}{\epsilon}n^2
    \right\},
    \label{eq:canonical-degree-choice}\\
    \xi_\star
    &:=
    \frac{\epsilon}{12\Lambda r_\star^2}.
    \label{eq:canonical-stat-choice}
\end{align}

\begin{corollary}[Canonical parameters for derivative accuracy]
\label{thm:sufficient-derivative-accuracy}
Under the assumptions of Theorem~\ref{thm:app-cheb-error}, take
$\tau=\tau_\star$, $r=r_\star$, and
$\xi_{\rm stat}\le\xi_\star$.  Then
\begin{equation}
    \left|\widehat p_r'(0)-g'(0)\right|
    \le\epsilon.
\end{equation}
\end{corollary}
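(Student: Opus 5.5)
We substitute the canonical choices into the error bound of Theorem~\ref{thm:app-cheb-error} and check that each of its two terms is at most $\epsilon/2$. With $\tau=\tau_\star=1/(2\Lambda)$ we have $\Lambda\tau=1/2<1$, so the hypothesis of Theorem~\ref{thm:app-cheb-error} holds. We also have $1/\tau=2\Lambda$ and $1/(1-\Lambda\tau)=2$. The right-hand side of Eq.~\eqref{eq:app-cheb-error-bound} then becomes
\begin{equation*}
    6\Lambda r^2\,\xi_{\rm stat}
    +
    6L_C\Lambda r^2\cdot 2\cdot 2^{-(r+1)}
    =
    6\Lambda r^2\,\xi_{\rm stat}
    +
    \frac{6L_C\Lambda r^2}{2^{r}} .
\end{equation*}

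For the statistical term, the choice $\xi_{\rm stat}\le\xi_\star=\epsilon/(12\Lambda r_\star^2)$ gives $6\Lambda r_\star^2\xi_{\rm stat}\le\epsilon/2$ directly. For the bias term, the defining property of $r_\star$ in Eq.~\eqref{eq:canonical-degree-choice} is $2^{r_\star}\ge 12L_C\Lambda r_\star^2/\epsilon$. Rearranging gives $6L_C\Lambda r_\star^2/2^{r_\star}\le\epsilon/2$. Summing the two bounds yields $|\widehat p_r'(0)-g'(0)|\le\epsilon$.

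The only point needing care is that $r_\star$ is well defined. Since $2^n/n^2\to\infty$, the set in Eq.~\eqref{eq:canonical-degree-choice} is nonempty, so the minimum exists and is at least $2$, as Lemma~\ref{lem:app-weight-bound} requires. I would also note in passing that $r_\star=\mathcal O(\log(L_C\Lambda/\epsilon))$. This is not needed for the corollary, but it is used in Theorem~\ref{thm:cheb-main}.

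I do not expect a genuine obstacle here. The step most prone to slips is tracking the constants: the factor $2$ coming from $1/(1-\Lambda\tau)$, and the factor $1/2$ absorbed in $(\Lambda\tau)^{r+1}=2^{-r-1}$. These must combine to give exactly $6L_C\Lambda r^2/2^r$, so that the threshold $12$ in $r_\star$ matches the $\epsilon/2$ split.
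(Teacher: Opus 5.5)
Your proposal is correct and matches the paper's proof: substituting $\tau_\star=1/(2\Lambda)$ into Theorem~\ref{thm:app-cheb-error} gives the bias term $6L_C\Lambda r^2/2^r$ and the noise term $6\Lambda r^2\xi_{\rm stat}$, and the choices of $r_\star$ and $\xi_\star$ bound each by $\epsilon/2$. Your added checks that $\Lambda\tau_\star=1/2<1$ and that $r_\star\ge2$ is well defined are valid details the paper leaves implicit.
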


\begin{proof}
With $\tau=1/(2\Lambda)$, Theorem~\ref{thm:app-cheb-error} bounds the bias by
$6L_C\Lambda r^2/2^r$ and the noise by
$6\Lambda r^2\xi_{\rm stat}$.  The degree and accuracy choices make these
terms at most $\epsilon/2$ each.
\end{proof}
\section{Single-time projected response contraction}
\label{app:single-time}

At a fixed evolution time $t$, the method estimates
$g(t,\theta)\in\R^M$ and inverts its truncated response map.  Set
$\Omega=[-1,1]^M$.  We write
\begin{equation}
    g_m(t,\cdot):\Omega\longrightarrow\R^M,
    \qquad
    g_m(t,x)
    =
    \sum_{k=1}^m t^k p_k(x),
    \qquad
    p_k:\Omega\longrightarrow\R^M,
    \qquad p_1(x)=x,
    \label{eq:app-gm}
\end{equation}
where $p_k$ is defined in Eq.~\eqref{eq:app-real-pk-definition}.
The statistical estimate of the finite-time real response vector is
$\widehat g\in\R^M$.

Let $\Pi_\Omega$ denote coordinatewise clipping.
For $t>0$ and a response vector $y\in\R^M$, define
\begin{equation}
    \mathcal T_y(x)
    :=
    \Pi_\Omega\!\left[
        x-t^{-1}\bigl(g_m(t,x)-y\bigr)
    \right].
    \label{eq:projected-response-map}
\end{equation}
Given any $x^{(0)}\in\Omega$, write
\begin{equation}
    x^{(n+1)}=\mathcal T_y(x^{(n)}).
    \label{eq:projected-response-iteration}
\end{equation}

\begin{theorem}[Global convergence of projected response contraction]
\label{thm:app-projected-contraction}
Assume
\begin{equation}
    \sup_{x\in\Omega}
    \left\|I_M-t^{-1}D_xg_m(t,x)\right\|_{\infty\to\infty}
    \le q<1.
    \label{eq:projected-contraction-jacobian}
\end{equation}
Then, for every $y\in\R^M$, $\mathcal T_y$ is a contraction on $\Omega$ with
factor $q$.  It has a unique fixed point $\bar x_y\in\Omega$, and the
iteration in Eq.~\eqref{eq:projected-response-iteration} converges to
$\bar x_y$ from every initialization in $\Omega$.
\end{theorem}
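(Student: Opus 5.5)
The approach is to show that the unprojected map
\(
\Phi_y(x):=x-t^{-1}\bigl(g_m(t,x)-y\bigr)
\)
is \(q\)-Lipschitz in \(\ell_\infty\) on \(\Omega\). We then use that coordinatewise clipping \(\Pi_\Omega\) is nonexpansive in \(\ell_\infty\), and finish with the Banach fixed-point theorem on the complete metric space \((\Omega,\|\cdot\|_\infty)\). Note that \(y\) enters \(\Phi_y\) only through an additive constant, so it plays no role in the Lipschitz estimate. This gives uniformity over all \(y\in\R^M\).

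\textbf{Lipschitz bound for \(\Phi_y\).} First observe that \(g_m(t,\cdot)\) is a polynomial map, since each \(p_k\) is a homogeneous polynomial by Eq.~\eqref{eq:app-real-pk-definition}. It is therefore \(C^1\) on a neighborhood of \(\Omega\). Its Jacobian is \(D_x\Phi_y=I_M-t^{-1}D_xg_m(t,x)\), which is independent of \(y\).

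Since \(\Omega\) is convex, for \(x,z\in\Omega\) the segment \(z+s(x-z)\), \(s\in[0,1]\), stays in \(\Omega\). The integral form of the mean-value theorem then gives
\[
\Phi_y(x)-\Phi_y(z)=\int_0^1 D\Phi_y\bigl(z+s(x-z)\bigr)\,(x-z)\,ds .
\]
Taking \(\ell_\infty\) norms and applying hypothesis~\eqref{eq:projected-contraction-jacobian} pointwise along the segment yields \(\|\Phi_y(x)-\Phi_y(z)\|_\infty\le q\|x-z\|_\infty\). The only care needed is to use the vector-valued integral form rather than a scalar mean-value theorem, so that the induced \(\infty\to\infty\) norm bound applies directly.

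\textbf{Projection and conclusion.} Clipping acts coordinatewise by the \(1\)-Lipschitz scalar map \(u\mapsto\max\{-1,\min\{1,u\}\}\). Hence \(\|\Pi_\Omega a-\Pi_\Omega b\|_\infty\le\|a-b\|_\infty\), and so \(\mathcal T_y=\Pi_\Omega\circ\Phi_y\) maps \(\Omega\) into itself with contraction factor \(q<1\).

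The set \(\Omega\) is closed in \(\R^M\), so \((\Omega,\|\cdot\|_\infty)\) is complete. Banach's theorem therefore gives a unique fixed point \(\bar x_y\in\Omega\). It also gives \(\|x^{(n)}-\bar x_y\|_\infty\le q^n\|x^{(0)}-\bar x_y\|_\infty\le 2q^n\) from any \(x^{(0)}\in\Omega\), which proves convergence.

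None of the steps is a real obstacle. The one point requiring attention is the mean-value step, where the Jacobian hypothesis is only assumed on \(\Omega\); convexity of \(\Omega\) handles this.
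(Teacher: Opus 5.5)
Your proof is correct and follows essentially the same route as the paper. In both, the integral mean-value identity along segments in the convex box $\Omega$ gives the $q$-Lipschitz bound for the unprojected map, the $\ell_\infty$-nonexpansiveness of coordinatewise clipping transfers that bound to $\mathcal T_y$, and the Banach fixed-point theorem gives the fixed point and convergence. Your additional details are correct and only make explicit what the paper leaves implicit: smoothness of the polynomial map $g_m$, completeness of the closed set $\Omega$, and the explicit rate bound $2q^n$.
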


\begin{proof}
Coordinatewise clipping is nonexpansive in the $\ell_\infty$ norm:
\begin{equation}
    \|\Pi_\Omega(u)-\Pi_\Omega(v)\|_\infty
    \le
    \|u-v\|_\infty.
    \label{eq:box-projection-nonexpansive}
\end{equation}
Define the unprojected map
\begin{equation}
    \mathcal S_y(x)
    :=
    x-t^{-1}\bigl(g_m(t,x)-y\bigr).
\end{equation}
For $x,z\in\Omega$, convexity of $\Omega$ and the fundamental theorem of
calculus give
\begin{equation}
    \mathcal S_y(x)-\mathcal S_y(z)
    =
    \int_0^1
    \left[I_M-t^{-1}D_xg_m(t,z+u(x-z))\right](x-z)\,du.
\end{equation}
Equation~\eqref{eq:projected-contraction-jacobian} and projection
nonexpansiveness imply
\begin{equation}
    \|\mathcal T_y(x)-\mathcal T_y(z)\|_\infty
    \le q\|x-z\|_\infty.
\end{equation}
The Banach fixed-point theorem gives the claimed fixed point and convergence.
\end{proof}

\begin{corollary}[Perturbation bound for projected response contraction]
\label{cor:app-projected-perturbation}
Assume the hypotheses of Theorem~\ref{thm:app-projected-contraction} for some
$t>0$ and $0\le q<1$.  Let $y=\widehat g$, let $\theta\in\Omega$, and suppose
\begin{equation}
    \|g(t,\theta)-g_m(t,\theta)\|_\infty\le\eta_{\rm tail},
    \qquad
    \|\widehat g-g(t,\theta)\|_\infty\le\eta_{\rm stat}.
    \label{eq:projected-response-errors}
\end{equation}
With $\Delta:=\eta_{\rm tail}+\eta_{\rm stat}$, the iterates satisfy
\begin{equation}
    \|x^{(n)}-\theta\|_\infty
    \le
    q^n\|x^{(0)}-\theta\|_\infty
    +
    \frac{1-q^n}{t(1-q)}\Delta.
    \label{eq:projected-iteration-error}
\end{equation}
For $x^{(0)}=0$,
\begin{equation}
    \|x^{(n)}-\theta\|_\infty
    \le
    q^n+\frac{\Delta}{t(1-q)}.
    \label{eq:projected-zero-init-error}
\end{equation}
The fixed point $\bar x_{\widehat g}$ obeys
\begin{equation}
    \|\bar x_{\widehat g}-\theta\|_\infty
    \le
    \frac{\eta_{\rm tail}+\eta_{\rm stat}}{t(1-q)}.
    \label{eq:projected-fixed-point-error}
\end{equation}
\end{corollary}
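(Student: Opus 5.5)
The plan is to compare the projected iteration driven by $y=\widehat g$ with the exact fixed-point identity satisfied by the true parameter. Write $\mathcal S_y(x)=x-t^{-1}(g_m(t,x)-y)$ as in the proof of Theorem~\ref{thm:app-projected-contraction}. Since $\theta\in\Omega$, clipping leaves $\theta$ unchanged, so $\theta=\Pi_\Omega(\theta)$. The key observation is that $\theta$ is an exact fixed point of $\mathcal T_{y_\theta}$ for the ideal data $y_\theta:=g_m(t,\theta)$. Indeed, $\mathcal S_{y_\theta}(\theta)=\theta$, and therefore $\mathcal T_{y_\theta}(\theta)=\theta$.

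First, I will split the data mismatch. Because $\mathcal S_y(x)-\mathcal S_{y'}(x)=t^{-1}(y-y')$, we get
\[
\|\widehat g-g_m(t,\theta)\|_\infty\le\|\widehat g-g(t,\theta)\|_\infty+\|g(t,\theta)-g_m(t,\theta)\|_\infty\le\Delta
\]
by the triangle inequality and Eq.~\eqref{eq:projected-response-errors}.

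Next comes the one-step recursion. Using nonexpansiveness of clipping, Eq.~\eqref{eq:box-projection-nonexpansive}, I will write
\[
\|x^{(n+1)}-\theta\|_\infty=\|\mathcal T_{\widehat g}(x^{(n)})-\mathcal T_{y_\theta}(\theta)\|_\infty\le\|\mathcal S_{\widehat g}(x^{(n)})-\mathcal S_{\widehat g}(\theta)\|_\infty+\|\mathcal S_{\widehat g}(\theta)-\mathcal S_{y_\theta}(\theta)\|_\infty .
\]
The first term is at most $q\|x^{(n)}-\theta\|_\infty$ by the integral (mean-value) representation and the Jacobian hypothesis Eq.~\eqref{eq:projected-contraction-jacobian}, exactly as in the proof of Theorem~\ref{thm:app-projected-contraction}. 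This uses convexity of $\Omega$, which holds since $x^{(n)},\theta\in\Omega$. The second term equals $t^{-1}\|\widehat g-y_\theta\|_\infty\le\Delta/t$. Hence $e_{n+1}\le qe_n+\Delta/t$ with $e_n:=\|x^{(n)}-\theta\|_\infty$.

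Unrolling this recursion by induction gives $e_n\le q^ne_0+\frac{\Delta}{t}\sum_{i<n}q^i$, which is Eq.~\eqref{eq:projected-iteration-error}. For $x^{(0)}=0$ we have $e_0=\|\theta\|_\infty\le1$ because $\theta\in\Omega$. Bounding $1-q^n\le1$ then gives Eq.~\eqref{eq:projected-zero-init-error}.

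For the fixed point, Theorem~\ref{thm:app-projected-contraction} gives $x^{(n)}\to\bar x_{\widehat g}$. Letting $n\to\infty$ in Eq.~\eqref{eq:projected-iteration-error}, where $q^n\to0$, yields Eq.~\eqref{eq:projected-fixed-point-error}. Alternatively, one can apply the recursion with $x^{(n)}=\bar x_{\widehat g}$ directly, which gives $e\le qe+\Delta/t$.

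The only delicate step is recognizing that the comparison point must be $\mathcal T_{y_\theta}$ with the truncated ideal data $g_m(t,\theta)$, not the exact response $g(t,\theta)$. This choice makes $\theta$ an exact fixed point, so the tail error enters only additively through the data shift $\Delta$. Everything else is routine.
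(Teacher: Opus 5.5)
Your proposal is correct and follows essentially the same route as the paper. Both use $\Pi_\Omega(\theta)=\theta$, nonexpansiveness of clipping, and the Jacobian contraction of Theorem~\ref{thm:app-projected-contraction} to obtain $e_{n+1}\le q e_n+t^{-1}\|\widehat g-g_m(t,\theta)\|_\infty\le q e_n+\Delta/t$, then unroll the recursion, use $e_0\le 1$ for zero initialization, and let $n\to\infty$; your explicit comparison with $\mathcal T_{g_m(t,\theta)}$ simply spells out the step the paper compresses into a single line.
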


\begin{proof}
Because $\theta\in\Omega$, $\Pi_\Omega(\theta)=\theta$.  Applying projection
nonexpansiveness and Theorem~\ref{thm:app-projected-contraction} gives
\begin{align}
    \|x^{(n+1)}-\theta\|_\infty
    &\le
    q\|x^{(n)}-\theta\|_\infty
    +\frac1t\|\widehat g-g_m(t,\theta)\|_\infty\\
    &\le
    q\|x^{(n)}-\theta\|_\infty+\frac\Delta t.
    \label{eq:projected-one-step-recurrence}
\end{align}
Iterating the recurrence proves Eq.~\eqref{eq:projected-iteration-error}.
For $x^{(0)}=0$, $\|x^{(0)}-\theta\|_\infty\le1$, which gives
Eq.~\eqref{eq:projected-zero-init-error}.  Taking $n\to\infty$ gives
Eq.~\eqref{eq:projected-fixed-point-error}.
\end{proof}

\begin{remark}[Exact-data case]
If $\eta_{\rm tail}=\eta_{\rm stat}=0$, then $\bar x_{\widehat g}=\theta$, and the
zero-initialized iteration converges geometrically to the true parameter.
\end{remark}

\subsection{Sufficient Jacobian conditions for the finite-time response map}

Recall $g_m$ from Eq.~\eqref{eq:app-gm}.  The required Jacobian condition is
\begin{equation}
    \sup_{x\in\Omega}
    \left\|I_M-t^{-1}D_x g_m(t,x)\right\|_{\infty\to\infty}
    \le q<1 .
    \label{eq:main-jacobian-stability}
\end{equation}
The next theorem gives a direct sufficient bound for its left-hand side.

\begin{theorem}[Unified Jacobian remainder bound]
\label{thm:unified-jacobian}
Assume either locality hypothesis in Eq.~\eqref{eq:unified-locality-scale} and
let $\Omega\subseteq[-1,1]^M$ be convex.  Then, for every $t>0$,
\begin{equation}
    \sup_{x\in\Omega}
    \left\|I_M-t^{-1}D_xg_m(t,x)\right\|_{\infty\to\infty}
    \le
    L_C\Lambda
    \sum_{k=2}^m k(\Lambda t)^{k-1}.
    \label{eq:unified-jacobian-finite-sum}
\end{equation}
\end{theorem}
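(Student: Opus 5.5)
Since $p_1(x)=x$, we have $D_xg_m(t,x)=tI_M+\sum_{k=2}^m t^kD_xp_k(x)$. Hence
\[
I_M-t^{-1}D_xg_m(t,x)=-\sum_{k=2}^m t^{k-1}D_xp_k(x),
\]
and by the triangle inequality it suffices to show, for each $k\ge2$ and every $x\in\Omega$,
\[
\|D_xp_k(x)\|_{\infty\to\infty}\le L_C\,k\,\Lambda^k .
\]
The $\infty\to\infty$ norm is the maximal row $\ell_1$ norm. So I fix a row $\alpha$ and bound $\sum_\beta|\partial_{x_\beta}(p_k(x))_\alpha|$. Equivalently, I bound the directional derivative $|D_xp_k(x)[z]_\alpha|$ uniformly over $\|z\|_\infty\le1$ and then take the supremum over sign vectors $z$.

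First I would compute the directional derivative from Eq.~\eqref{eq:app-real-pk-definition}. Since $(\cL_x^\dagger)^k$ is multilinear in its $k$ factors and $x\mapsto\cL_x^\dagger$ is linear,
\[
D_xp_k(x)[z]_\alpha=\frac1{k!}\sum_{i=1}^{k}\mathsf C_\alpha\!\left((\cL_x^\dagger)^{k-i}\cL_z^\dagger(\cL_x^\dagger)^{i-1}\right).
\]
Each summand is a mixed product $\cL_{z^{(k)}}^\dagger\cdots\cL_{z^{(1)}}^\dagger$ in which every $z^{(\ell)}\in\{x,z\}$ has $\ell_\infty$ norm at most $1$.

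Next I apply $\mathsf C_\alpha=\sum_j\mathbf H_{\alpha j}\cF_j$ together with the scalar representation Eq.~\eqref{eq:generic-raw-matrix-elements}. Each $\cF_j$ is a combination of terms $\frac1D\Tr[Q\,\Phi(P)]$ with total weight at most $1$, and each such term is bounded in absolute value by $\|\Phi(P)\|_\infty$. The whole expression is therefore at most
\[
\frac{L_C}{k!}\,k\,\max_{P}\bigl\|\cL_{z^{(k)}}^\dagger\cdots\cL_{z^{(1)}}^\dagger(P)\bigr\|_\infty .
\]

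The main step is the uniform mixed-product bound $\|\cL_{z^{(k)}}^\dagger\cdots\cL_{z^{(1)}}^\dagger(P)\|_\infty\le\Lambda^kk!$ for the input Paulis $P$ of the raw responses. In the bounded-degree case, Theorem~\ref{thm:app-finite-degree} gives this directly, provided $N(P)\le d+1$; this holds by Eq.~\eqref{eq:candidate-term-degree-bound}, since $\supp P\subseteq W_j$. The direct response has one-site input, which intersects at most as many terms as $\supp C$ does, so it is covered too.

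In the local-strength case, Theorem~\ref{thm:app-local-strength} is stated only for a repeated vector $z$. Its proof, however, only uses $|z^{(\ell)}_{a}|\le1$ at each step $\ell$ in Eq.~\eqref{eq:local-strength-power-one-step}. I would therefore rerun that proof verbatim with step-dependent vectors to get $k!\,(s\kappa_0)^k$, using $s_0\le s$. This extension to mixed products is the one point needing care; the rest is bookkeeping.

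Combining the pieces, $|D_xp_k(x)[z]_\alpha|\le L_C\,k\,\Lambda^k$, so
\[
\bigl\|I_M-t^{-1}D_xg_m\bigr\|_{\infty\to\infty}\le\sum_{k=2}^m t^{k-1}L_CK\Lambda^k=L_C\Lambda\sum_{k=2}^mk(\Lambda t)^{k-1},
\]
which is the claimed bound. Convexity of $\Omega$ is not needed for this pointwise estimate, only $\Omega\subseteq[-1,1]^M$.
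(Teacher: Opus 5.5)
Your proposal is correct and follows the paper's proof. Both reduce to $-\sum_{k=2}^m t^{k-1}D_xp_k(x)$, expand $D_xp_k$ into $k$ one-insertion products, bound each by $\Lambda^k k!$ (via Theorem~\ref{thm:app-finite-degree}, or a step-dependent rerun of the local-strength argument), and absorb $\mathbf H$ through $\|(\mathbf H)_{\alpha\cdot}\|_1\le L_C$. You also explicitly check that every raw-response input Pauli, including the one-site direct-response input, satisfies $N(P)\le d+1$, which the paper leaves implicit; note that the stray $K$ in your final display should read $k$.
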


\begin{proof}
Since $g_m(t,x)=\sum_{k=1}^m t^kp_k(x)$ and $p_1(x)=x$,
\begin{equation}
    I_M-t^{-1}D_xg_m(t,x)
    =
    -\sum_{k=2}^m t^{k-1}Dp_k(x).
\end{equation}
Fix $x\in\Omega$ and $\|v\|_\infty\le1$.  From
Eq.~\eqref{eq:app-real-pk-definition},
\begin{equation}
    (Dp_k(x)v)_\alpha
    =
    \frac1{k!}
    \sum_{\ell=0}^{k-1}
    \mathsf C_\alpha\!\left(
        (\cL_x^\dagger)^\ell
        \cL_v^\dagger
        (\cL_x^\dagger)^{k-1-\ell}
    \right).
\end{equation}
In the bounded-degree setting,
Theorem~\ref{thm:app-finite-degree} applies directly to the sequence of
coefficient vectors consisting of \(k-1\) copies of \(x\) and one copy
of \(v\).  It therefore bounds every one-insertion product by
\(\Lambda^k k!\).

For the local-strength setting, fix an insertion position and write the
corresponding product as
\begin{equation}
    \mathcal L_{z^{(k)}}^\dagger
    \cdots
    \mathcal L_{z^{(1)}}^\dagger,
    \qquad
    z^{(r)}\in\{x,v\}.
\end{equation}
Every coefficient vector in this sequence satisfies
\(\|z^{(r)}\|_\infty\le1\).  For a raw-response input Pauli \(O\), we
have \(|\supp(O)|\le s\).  After \(r-1\) elementary transfers, the union
of the initial support and the supports of the selected terms has size
at most \(rs\).  As in
Eq.~\eqref{eq:local-strength-power-one-step}, the total induced-norm
weight of all terms that can contribute at the \(r\)-th step is
therefore at most $rs\kappa_0=r\Lambda.$ 
Multiplying these bounds for \(r=1,\ldots,k\) gives
\begin{equation}
    \left\|
        \mathcal L_{z^{(k)}}^\dagger
        \cdots
        \mathcal L_{z^{(1)}}^\dagger(O)
    \right\|_\infty
    \le
    \prod_{r=1}^k r\Lambda
    =
    k!\Lambda^k.
    \label{eq:local-strength-one-insertion-bound}
\end{equation}
Thus the same \(\Lambda^k k!\) bound holds for every insertion position
under either locality hypothesis.  Since
$\|\mathbf H_{\alpha\cdot}\|_1\le L_C$, summing over the $k$ insertion
positions gives
\begin{equation}
    \sup_{x\in\Omega}
    \|Dp_k(x)\|_{\infty\to\infty}
    \le
    L_C\Lambda\,k\Lambda^{k-1}.
    \label{eq:unified-Dpk-bound}
\end{equation}
Substitution proves Eq.~\eqref{eq:unified-jacobian-finite-sum}.
\end{proof}

\begin{corollary}[A simple sufficient small-time condition]
\label{cor:unified-jacobian-small-time}
Under the hypotheses of Theorem~\ref{thm:unified-jacobian}, let $0<q<1$ and
suppose $L_C\Lambda>0$.  If
\begin{equation}
    0<t
    <
    \frac1\Lambda
    \min\left\{\frac12,\frac{q}{6L_C\Lambda}\right\},
    \label{eq:unified-jacobian-simple-time}
\end{equation}
then
\begin{equation}
    \sup_{x\in\Omega}
    \left\|I_M-t^{-1}D_xg_m(t,x)\right\|_{\infty\to\infty}
    <q.
    \label{eq:unified-jacobian-conclusion}
\end{equation}
\end{corollary}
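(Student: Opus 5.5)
The bound of Theorem~\ref{thm:unified-jacobian} does most of the work; what remains is to control the finite sum uniformly in the truncation degree $m$. Set $u:=\Lambda t$. The first constraint in Eq.~\eqref{eq:unified-jacobian-simple-time} gives $u<1/2$, and the second gives $u<q/(6L_C\Lambda)$. Since every term of the finite sum is nonnegative, I will bound it by the full series:
\begin{equation}
    L_C\Lambda\sum_{k=2}^m k\,u^{k-1}
    \le
    L_C\Lambda\sum_{k=2}^\infty k\,u^{k-1}
    =
    L_C\Lambda\left(\frac{1}{(1-u)^2}-1\right)
    =
    L_C\Lambda\,\frac{u(2-u)}{(1-u)^2}.
\end{equation}
This removes any dependence on $m$, which is needed because the single-time method later takes $m$ growing.

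Next I will bound the rational factor on $0<u<1/2$. The function $(2-u)/(1-u)^2$ is increasing on this interval, because the numerator decreases slowly while the denominator decreases faster. At $u=1/2$ it equals $(3/2)/(1/4)=6$. Hence $u(2-u)/(1-u)^2<6u$ for $0<u<1/2$. To justify the monotonicity, I will either compute the derivative, which is $(3-u)/(1-u)^3>0$, or note directly that $(2-u)\le 6(1-u)^2$ on $[0,1/2]$. The latter is $6u^2-11u+4\ge0$, which holds on $[0,1/2]$ because the smaller root of this quadratic is $1/2$.

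Combining the two steps gives the left-hand side of Eq.~\eqref{eq:unified-jacobian-conclusion} strictly less than $6L_C\Lambda u$. The second constraint makes this quantity smaller than $q$. Strictness holds because $u>0$ and the inequality $u<q/(6L_C\Lambda)$ is strict.

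I expect no real obstacle here. The only care needed is the constant: the factor $6$ must match the constant in Eq.~\eqref{eq:unified-jacobian-simple-time}, and the bound must be checked at the endpoint $u=1/2$ that the first constraint allows in the limit. The quadratic check above handles both. The assumption $L_C\Lambda>0$ is used only so that the minimum in Eq.~\eqref{eq:unified-jacobian-simple-time} is well defined.
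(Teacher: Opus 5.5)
Your proof is correct and follows the paper's argument: you bound the finite sum from Theorem~\ref{thm:unified-jacobian} by the full series $\Lambda t(2-\Lambda t)/(1-\Lambda t)^2$, bound that by $6\Lambda t$ when $\Lambda t\le 1/2$, and conclude from the second time constraint. Your check that $6u^2-11u+4\ge0$ on $[0,1/2]$ supplies the elementary step the paper leaves implicit; note also that $L_C\Lambda>0$ is what lets you multiply $u<q/(6L_C\Lambda)$ through without reversing the inequality, not only what makes the minimum well defined.
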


\begin{proof}
For $\Lambda t<1$,
\begin{equation}
    \sum_{k=2}^{\infty}k(\Lambda t)^{k-1}
    =
    \frac{\Lambda t(2-\Lambda t)}{(1-\Lambda t)^2}.
\end{equation}
When $\Lambda t\le1/2$, the right-hand side is at most
$6\Lambda t$.  Theorem~\ref{thm:unified-jacobian} and
Eq.~\eqref{eq:unified-jacobian-simple-time} give the claim.
\end{proof}

Define the canonical contraction parameters by
\begin{equation}
    q_\star:=\frac12,
    \qquad
    t_\star:=\frac{1}{4\Lambda(1+6L_C\Lambda)}.
    \label{eq:canonical-projected-parameters}
\end{equation}

\begin{corollary}[Canonical iteration and truncation scales]
\label{cor:app-projected-iteration-count}
Assume either locality hypothesis in Eq.~\eqref{eq:unified-locality-scale},
let $\epsilon\in(0,1)$, and suppose $L_C,\Lambda=\mathcal{O}(1)$ uniformly in $N,M$
for a nonempty normalized dictionary.  Initialize the projected iteration at
$x^{(0)}=0$, use $q=q_\star$ and $t=t_\star$, and allocate
\begin{equation}
    \eta_{\rm tail},\eta_{\rm stat}
    \le
    \frac14t_\star(1-q_\star)\epsilon.
\end{equation}
Then the Taylor degree and iteration count can be chosen so that
\begin{equation}
    t_\star^{-1}=\mathcal{O}(\Lambda^2),
    \qquad
    m,n_{\rm it}=\mathcal{O}(\log(1/\epsilon)),
    \qquad
    \|x^{(n_{\rm it})}-\theta\|_\infty\le\epsilon.
    \label{eq:projected-corollary-result}
\end{equation}
\end{corollary}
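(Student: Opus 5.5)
Three checks are needed, in order: the canonical time satisfies the Jacobian condition of Corollary~\ref{cor:unified-jacobian-small-time} with $q=q_\star$; the tail allocation forces $m=\mathcal{O}(\log(1/\epsilon))$; and Corollary~\ref{cor:app-projected-perturbation} then yields the iteration count.

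\emph{Step 1: contraction at $t_\star$.} From Eq.~\eqref{eq:canonical-projected-parameters},
\[
\Lambda t_\star=\frac{1}{4(1+6L_C\Lambda)}<\frac12,
\qquad
\Lambda t_\star<\frac{1}{24L_C\Lambda}<\frac{q_\star}{6L_C\Lambda},
\]
so Eq.~\eqref{eq:unified-jacobian-simple-time} holds. Corollary~\ref{cor:unified-jacobian-small-time} then gives the Jacobian bound with $q_\star=1/2$. This bound holds for every truncation degree $m$, because Theorem~\ref{thm:unified-jacobian} bounds a finite sum by the full series. Hence Theorem~\ref{thm:app-projected-contraction} applies. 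The scaling $t_\star^{-1}=4\Lambda(1+6L_C\Lambda)=\mathcal{O}(\Lambda^2)$ holds under $L_C=\mathcal{O}(1)$, which Theorem~\ref{thm:uniform-response-inverse} supplies, and $\Lambda\ge1$ for a nonempty normalized dictionary.

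\emph{Step 2: truncation degree.} The tail bound Eq.~\eqref{eq:app-response-tail}, applied at $t=t_\star$, gives $\eta_{\rm tail}\le L_C(\Lambda t_\star)^{m+1}/(1-\Lambda t_\star)\le 2L_C 4^{-(m+1)}$. The required budget is $\frac14 t_\star(1-q_\star)\epsilon=t_\star\epsilon/8$. It therefore suffices that $4^{m+1}\ge 16L_C/(t_\star\epsilon)$, i.e.
\[
m=\left\lceil \log_4\frac{16L_C}{t_\star\epsilon}\right\rceil=\mathcal{O}(\log(1/\epsilon)),
\]
since $L_C$ and $t_\star^{-1}$ are constants. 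The statistical budget $\eta_{\rm stat}$ is a hypothesis of the corollary; it is met by Proposition~\ref{prop:app-response-samples} with $\varepsilon_g=t_\star\epsilon/8$, which is not needed here.

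\emph{Step 3: iteration count.} With $x^{(0)}=0$, Eq.~\eqref{eq:projected-zero-init-error} gives
\[
\|x^{(n)}-\theta\|_\infty\le 2^{-n}+\frac{\Delta}{t_\star/2},
\qquad
\Delta\le \frac{t_\star\epsilon}{8}+\frac{t_\star\epsilon}{8}=\frac{t_\star\epsilon}{4}.
\]
The second term is therefore at most $\epsilon/2$. Choosing $n_{\rm it}=\lceil\log_2(2/\epsilon)\rceil$ makes the first term at most $\epsilon/2$, so $\|x^{(n_{\rm it})}-\theta\|_\infty\le\epsilon$.

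The only delicate point is Step~2. The tail estimate must be uniform in the unknown $\theta$, and it must apply to the normalized responses. Eq.~\eqref{eq:app-response-tail} provides this, since it relies only on $\|\theta\|_\infty\le1$ together with $L_C$. No other step should be an obstacle; the rest is constant bookkeeping.
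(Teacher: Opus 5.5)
Your proposal is correct and follows the paper's proof essentially step for step. You make the same checks $\Lambda t_\star<\tfrac12$ and $12L_C\Lambda^2t_\star<1$ to invoke Corollary~\ref{cor:unified-jacobian-small-time} for every $m$, you choose $m$ by the same tail allocation via Eq.~\eqref{eq:app-response-tail}, and you get the same $n_{\rm it}=\lceil\log_2(2/\epsilon)\rceil$ from Eq.~\eqref{eq:projected-zero-init-error}. The only difference is cosmetic: you bound $a_\star=\Lambda t_\star\le\tfrac14$ to get a base-$4$ logarithm, which suffices for $m=\mathcal{O}(\log(1/\epsilon))$, although the paper's minimal degree with base $a_\star^{-1}$ is what later yields $(5\Delta_\star)^m=\mathcal{O}(1/\epsilon)$ in Corollary~\ref{cor:alg2-M-dependence}.
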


\begin{proof}
For $q=1/2$, take
\begin{equation}
    n_{\rm it}
    \ge
    \left\lceil
    \frac{\log(2/\epsilon)}{-\log q}
    \right\rceil.
    \label{eq:projected-exact-iteration-count}
\end{equation}
The error allocation gives
\begin{equation}
    \frac{\eta_{\rm tail}+\eta_{\rm stat}}{t(1-q)}
    \le\frac{\epsilon}{2},
    \qquad
    q^{n_{\rm it}}\le\frac{\epsilon}{2}.
\end{equation}
Equation~\eqref{eq:projected-zero-init-error} therefore proves the accuracy
claim.  Set $a=\Lambda t<1$.  Equation~\eqref{eq:app-response-tail} meets the
tail allocation when
\begin{equation}
    m
    \ge
    \max\left\{
        1,\,
        \left\lceil
        \frac{
            \log\!\left(
                \dfrac{4L_C}
                {t(1-q)\epsilon(1-a)}
            \right)
        }{-\log a}
        \right\rceil-1
    \right\}.
    \label{eq:projected-truncation-degree}
\end{equation}
For the choice in the statement,
$\Lambda t_\star=1/[4(1+6L_C\Lambda)]<1/2$ and
$12L_C\Lambda^2t_\star=3L_C\Lambda/(1+6L_C\Lambda)<1$.
Hence Corollary~\ref{cor:unified-jacobian-small-time} gives $q=1/2$ for every finite $m$.
Moreover, Eq.~\eqref{eq:unified-locality-scale} and
Lemma~\ref{lem:normalized-term-norms} imply $\Lambda\ge1$, and
\begin{equation}
    t_\star^{-1}
    =4\Lambda(1+6L_C\Lambda)
    =\mathcal{O}(\Lambda^2),
    \qquad
    a_\star:=\Lambda t_\star
    =\frac{1}{4(1+6L_C\Lambda)}.
    \label{eq:canonical-projected-time-scaling}
\end{equation}
The quantity $a_\star$ is bounded away from zero and one uniformly in $N,M$.
Together with $L_C,\Lambda=\mathcal{O}(1)$,
Eqs.~\eqref{eq:projected-truncation-degree} and
\eqref{eq:projected-exact-iteration-count} prove
Eq.~\eqref{eq:projected-corollary-result}.
\end{proof}

\section{Algorithms and resource accounting}
\label{app:algorithms}

The algorithms below use $n_t$ for the number of measured time points.  The
time $t=0$ in
Chebyshev--Lobatto response interpolation is known exactly because
$g_\alpha(0,\theta)=0$, so it does not require state preparations.

For resource accounting, we separate shot aggregation from post-mean
processing.  Forming the empirical raw-response means requires $\mathcal{O}(N)$
arithmetic operations for $N$ measurement shots.  The detailed bounds below
begin after these means have been formed; one-time dictionary-representation
and preprocessing costs are reported separately.  Unless stated otherwise,
support-incidence lists are supplied with the dictionary, and the cost of
explicitly storing long Pauli strings is tracked through
$\mathcal S_{\rm long}$.

\begin{algorithm*}[H]
\caption{Chebyshev--Lobatto response interpolation}
\label{alg:cheb}
\KwIn{Accuracy $\epsilon$, failure probability $\delta$, Taylor constants
$L_C,\Lambda$, inverse $\mathbf{H}$ of the real response matrix, and the
family of real raw responses $\{\cF_j\}$}
\KwOut{Coefficient estimate $\widehat\theta\in\R^M$}

Set $\tau=\tau_\star$ and $r=r_\star$ from
Eqs.~\eqref{eq:canonical-tau-choice} and~\eqref{eq:canonical-degree-choice}.\\

Construct nodes
$t_\ell=\frac{\tau}{2}(1-\cos(\ell\pi/r))$ for
$\ell=0,1,\ldots,r$.\\

Compute the derivative weights $w_\ell$ associated with the Lagrange basis on
these nodes.\\

Set $W_r=\sum_{\ell=0}^r\abs{w_\ell}$ and
$\xi_{\rm stat}=\epsilon/(2W_r)$.\\

For each raw response $\cF_j$ and each nonzero node $t_\ell$, estimate
$\cF_j(e^{t_\ell\cL_\theta^\dagger})$ to accuracy
$\xi_{\rm stat}/L_C$ using the randomized Pauli measurement protocol.\\

For each coordinate $\alpha$ and each time $t_\ell$, form
$\widehat g_\alpha(t_\ell)=\sum_j(\mathbf{H})_{\alpha j}\widehat{\cF}_j(t_\ell)
-\mathsf C_\alpha(\Id)$.
Set $\widehat g_\alpha(0)=0$.\\

Set
$\widehat\theta_\alpha=\sum_{\ell=0}^{r}w_\ell\widehat g_\alpha(t_\ell)$ for
every $\alpha$.\\
\end{algorithm*}

\begin{proposition}[Sample complexity for Algorithm~\ref{alg:cheb}]
\label{prop:app-cheb-samples}
Algorithm~\ref{alg:cheb} uses
\begin{equation}
    N_{\rm Cheb}
    =
    \mathcal{O}\!\left(
        \frac{L_C^2M r^5}{\tau^2\epsilon^2}
        \log\frac{M r}{\delta}
    \right)
\end{equation}
state preparations and measurements and satisfies
$\norm{\widehat\theta-\theta}_\infty\le\epsilon$ with probability at least
$1-\delta$.
\end{proposition}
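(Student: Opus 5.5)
The plan is to combine the deterministic guarantee of Corollary~\ref{thm:sufficient-derivative-accuracy} with the simultaneous response-estimation bound of Proposition~\ref{prop:app-response-samples}.

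First, we check that each normalized response coordinate satisfies the hypotheses of Theorem~\ref{thm:app-cheb-error}. The definition $g_\alpha(t,\theta)=\mathsf C_\alpha(e^{t\cL_\theta^\dagger}-\Id)$ has no constant term. By Eq.~\eqref{eq:app-normalized-taylor}, its Taylor coefficients obey $|a_k|=|\partial_t^k g_\alpha(0,\theta)|/k!\le L_C\Lambda^k$. With $\tau=\tau_\star=1/(2\Lambda)$ we have $\Lambda\tau=1/2<1$, so the power series converges on $[0,\tau]$ and Eq.~\eqref{eq:app-g-series-bound} holds. Hence, if every nodal estimate satisfies $|\widehat g_\alpha(t_\ell)-g_\alpha(t_\ell)|\le\xi_{\rm stat}$, the bias term is at most $6L_C\Lambda r^2/2^r\le\epsilon/2$ for $r=r_\star$. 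This is exactly the computation in the proof of Corollary~\ref{thm:sufficient-derivative-accuracy}. The noise term is $W_r\xi_{\rm stat}$, and the algorithm's choice $\xi_{\rm stat}=\epsilon/(2W_r)$ makes it at most $\epsilon/2$ exactly. The node $t_0=0$ carries no error, since $\widehat g_\alpha(0)=g_\alpha(0)=0$. The offset $\mathsf C_\alpha(\Id)$ is a known constant; raw responses of the identity map are exactly computable, so subtracting it introduces no error. Therefore $|\widehat\theta_\alpha-\theta_\alpha|\le\epsilon$ holds for all $\alpha$ on the event that all nodal estimates are accurate.

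Second, we bound the probability of that event. We apply Proposition~\ref{prop:app-response-samples} with $n_t=r$ nonzero nodes and target accuracy $\varepsilon_g=\xi_{\rm stat}$. Each raw response is estimated to accuracy $\xi_{\rm stat}/L_C$ by Hoeffding's inequality with bounded estimators $|Z|\le1$, followed by a union bound over the $Mr$ raw-response/time pairs. This succeeds with probability at least $1-\delta$ using $\mathcal O\bigl(Mr\,L_C^2\xi_{\rm stat}^{-2}\log(4Mr/\delta)\bigr)$ shots. By Lemma~\ref{lem:app-weight-bound}, $W_r\le 3r^2/\tau$, so $\xi_{\rm stat}\ge\epsilon\tau/(6r^2)$ and therefore $\xi_{\rm stat}^{-2}=\mathcal O(r^4/(\tau^2\epsilon^2))$. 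Substituting gives $N_{\rm Cheb}=\mathcal O\bigl(L_C^2Mr^5\tau^{-2}\epsilon^{-2}\log(Mr/\delta)\bigr)$, where constants inside the logarithm have been absorbed.

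The only delicate point is the bookkeeping between the algorithm's $\xi_{\rm stat}=\epsilon/(2W_r)$ and the canonical $\xi_\star$. Both are of order $\epsilon\tau/r^2$, so either choice yields the stated $r^5$ and $\tau^{-2}$ scaling. We use the lower bound on $\xi_{\rm stat}$ obtained from $W_r\le3r^2/\tau$, rather than any upper bound, so that the shot count is controlled correctly.
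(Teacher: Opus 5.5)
Your proposal is correct and follows essentially the same route as the paper. You bound the bias by $\epsilon/2$ through the choice $r=r_\star$ in Theorem~\ref{thm:app-cheb-error}, set the noise term to $W_r\xi_{\rm stat}=\epsilon/2$, apply Proposition~\ref{prop:app-response-samples} with $n_t=r$ and $\varepsilon_g=\xi_{\rm stat}$, and then use $W_r\le 3r^2/\tau$ to obtain the $r^5/\tau^2$ scaling; your extra checks (the bound $|a_k|\le L_C\Lambda^k$ from Eq.~\eqref{eq:app-normalized-taylor}, the error-free node $t_0=0$, and the known offset $\mathsf C_\alpha(\Id)$) only make explicit what the paper's proof leaves implicit.
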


\begin{proof}
Theorem~\ref{thm:app-cheb-error} shows that the deterministic interpolation
bias is at most $\epsilon/2$ by the choice of $r$.  The same theorem
shows that the statistical contribution is at most $\epsilon/2$ whenever every
normalized response value at every measured node is estimated to accuracy
$\xi_{\rm stat}=\epsilon/(2W_r)$.  Proposition~\ref{prop:app-response-samples}
with $n_t=r$ and $\varepsilon_g=\xi_{\rm stat}$ gives
\begin{equation}
    N_{\rm Cheb}
    =
    \mathcal{O}\!\left(
        Mr
        \frac{L_C^2W_r^2}{\epsilon^2}
        \log\frac{M r}{\delta}
    \right).
\end{equation}
Lemma~\ref{lem:app-weight-bound} gives $W_r\le3r^2/\tau$, and the
stated bound follows.  The probability statement follows from the union bound
already included in Proposition~\ref{prop:app-response-samples}.
\end{proof}

After the empirical means have been formed, the classical cost is
\begin{equation}
    \mathcal{O}\!\left(r^2+Mr+r\operatorname{nnz}(\mathbf H)\right).
    \label{eq:app-cheb-postmean-cost}
\end{equation}
The $r^2$ term computes the interpolation weights.  The remaining terms
combine the raw responses and apply the derivative weights.  If
$\operatorname{nnz}(\mathbf H)=\mathcal{O}(M)$, the cost is linear in $M$ up to the
interpolation factor.
Under the uniform bounded-overlap condition,
$r=\operatorname{polylog}(1/\epsilon)$ and
$\operatorname{nnz}(\mathbf H)=\mathcal{O}(M)$, so the post-mean cost is
$M\operatorname{polylog}(1/\epsilon)$.  Including the
$\mathcal{O}(N_{\rm Cheb})$ shot-aggregation operations, the total classical
response-processing cost is $\widetilde{\mathcal{O}}(M/\epsilon^2)$.

\begin{algorithm}[H]
\caption{Single-time projected response contraction}
\label{alg:projected-contraction}
\KwIn{Accuracy $\epsilon$, failure probability $\delta$, time $t>0$ with
$\Lambda t<1$, Taylor
constants $L_C,\Lambda$, Jacobian contraction factor $0\le q<1$, inverse
$\mathbf{H}$ of the real response matrix, and the family of real raw responses
$\{\cF_j\}$}
\KwOut{Coefficient estimate $\widehat\theta\in[-1,1]^M$}

Set $\Omega=[-1,1]^M$ and
$\eta_{\rm stat}=t(1-q)\epsilon/4$.\\

Choose the smallest Taylor degree $m\ge1$ satisfying
\[
    L_C\frac{(\Lambda t)^{m+1}}{1-\Lambda t}
    \le
    \frac14t(1-q)\epsilon .
\]

Estimate all normalized responses at the single time $t$ so that
$\norm{\widehat g-g(t,\theta)}_\infty\le\eta_{\rm stat}$ with probability at
least $1-\delta$.\\

Precompute the coefficient tables for the truncated real response map
$g_m(t,x)$ on $\Omega$.\\

Set $x^{(0)}=0$ and
\[
    n_{\rm it}
    =
    \left\lceil
        \frac{\log(2/\epsilon)}{-\log q}
    \right\rceil
\]
for $0<q<1$; set $n_{\rm it}=1$ when $q=0$.\\

For $n=0,\ldots,n_{\rm it}-1$, update
\[
    x^{(n+1)}
    =
    \Pi_\Omega\!\left[
        x^{(n)}
        -
        t^{-1}\bigl(g_m(t,x^{(n)})-\widehat g\bigr)
    \right].
\]

Set $\widehat\theta=x^{(n_{\rm it})}$.\\
\end{algorithm}

\begin{proposition}[Sample count for Algorithm~\ref{alg:projected-contraction}]
\label{prop:app-projected-samples}
Let $0<\epsilon,\delta<1$.  Assume the hypotheses of
Theorem~\ref{thm:app-projected-contraction} with $t>0$ and $0\le q<1$, and
assume the normalized-response Taylor tail
Eq.~\eqref{eq:app-response-tail}.  Choose $m$, $\eta_{\rm stat}$, and
$n_{\rm it}$ as in Algorithm~\ref{alg:projected-contraction}.  Then its
response-estimation stage uses
\begin{equation}
    N_{\rm proj}
    =
    \mathcal{O}\!\left(
        \frac{L_C^2M}{t^2(1-q)^2\epsilon^2}
        \log\frac{M}{\delta}
    \right)
    \label{eq:app-projected-sample-count}
\end{equation}
state preparations and measurements and returns an estimate satisfying
$\|\widehat\theta-\theta\|_\infty\le\epsilon$ with probability at least
$1-\delta$.
\end{proposition}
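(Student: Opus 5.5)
The argument will combine three earlier results: Proposition~\ref{prop:app-response-samples} at a single time point ($n_t=1$), the perturbation bound of Corollary~\ref{cor:app-projected-perturbation}, and the normalized Taylor tail Eq.~\eqref{eq:app-response-tail}. The algorithm fixes $\eta_{\rm stat}=t(1-q)\epsilon/4$. We apply Proposition~\ref{prop:app-response-samples} with $n_t=1$ and $\varepsilon_g=\eta_{\rm stat}$. With probability at least $1-\delta$, the estimate then satisfies $\|\widehat g-g(t,\theta)\|_\infty\le\eta_{\rm stat}$ using
\[
\mathcal{O}\!\left(M\frac{L_C^2}{\eta_{\rm stat}^2}\log\frac{4M}{\delta}\right)
=\mathcal{O}\!\left(\frac{16L_C^2M}{t^2(1-q)^2\epsilon^2}\log\frac{M}{\delta}\right)
\]
shots, which is Eq.~\eqref{eq:app-projected-sample-count}. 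One small point to check: the normalized response is $g_\alpha(t,\theta)=\mathsf C_\alpha(e^{t\cL_\theta^\dagger})-\mathsf C_\alpha(\Id)$, and the constant $\mathsf C_\alpha(\Id)$ is known exactly from the dictionary. Subtracting it therefore adds no statistical error, just as in Algorithm~\ref{alg:cheb}.

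All remaining steps are deterministic and are conditioned on this good event. First, the chosen Taylor degree $m$ satisfies $L_C(\Lambda t)^{m+1}/(1-\Lambda t)\le t(1-q)\epsilon/4$. Since $\Lambda t<1$, Eq.~\eqref{eq:app-response-tail} gives, coordinatewise in $\alpha$, $\|g(t,\theta)-g_m(t,\theta)\|_\infty\le\eta_{\rm tail}:=t(1-q)\epsilon/4$. Here I use that $g_m(t,\theta)=\sum_{k\le m}t^kp_k(\theta)$ is exactly the degree-$m$ Taylor polynomial of $g(t,\theta)$ in $t$, because $t^kp_k(\theta)=\frac{t^k}{k!}\partial_t^kg(0,\theta)$ by Eq.~\eqref{eq:app-real-pk-definition}. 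The true parameter satisfies $\theta\in\Theta_{\rm GKSL}\subseteq\Omega$, so Corollary~\ref{cor:app-projected-perturbation} applies with $\Delta=\eta_{\rm tail}+\eta_{\rm stat}=t(1-q)\epsilon/2$. Starting from $x^{(0)}=0$, Eq.~\eqref{eq:projected-zero-init-error} then gives $\|x^{(n_{\rm it})}-\theta\|_\infty\le q^{n_{\rm it}}+\epsilon/2$.

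Finally, the choice $n_{\rm it}=\lceil\log(2/\epsilon)/(-\log q)\rceil$ gives $q^{n_{\rm it}}\le\epsilon/2$, so the total error is at most $\epsilon$. When $q=0$, one iteration already gives error at most $\Delta/t\le\epsilon/2$.

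There is no real obstacle here. The only care needed is bookkeeping: the $\ell_\infty$ tail bound must be applied to the normalized coordinates, with $L_C$ already built in, rather than to the raw ones. Also, failure is confined to the single Hoeffding--union-bound event, so the success probability is $1-\delta$.
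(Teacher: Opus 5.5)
Your proposal is correct and follows essentially the same route as the paper's proof. Both apply Proposition~\ref{prop:app-response-samples} with $n_t=1$ and $\varepsilon_g=\eta_{\rm stat}=t(1-q)\epsilon/4$ for the shot count, use the choice of $m$ to get $\eta_{\rm tail}\le t(1-q)\epsilon/4$, choose $n_{\rm it}$ so that $q^{n_{\rm it}}\le\epsilon/2$, and conclude with Corollary~\ref{cor:app-projected-perturbation}. Your extra remarks (the exactly known offset $\mathsf C_\alpha(\Id)$, the identification of $g_m$ as the degree-$m$ Taylor polynomial of $g$, and the $q=0$ case) fill in details the paper leaves implicit.
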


\begin{proof}
This is Proposition~\ref{prop:app-response-samples} with $n_t=1$ and
$\varepsilon_g=\eta_{\rm stat}=t(1-q)\epsilon/4$.  The factor $M$ is present
because the method estimates all $M$ response coordinates at the chosen time.
The choice of $m$ makes
$\eta_{\rm tail}\le t(1-q)\epsilon/4$, while the chosen iteration count gives
$q^{n_{\rm it}}\le\epsilon/2$ (with $q^{n_{\rm it}}=0$ when $q=0$).
Corollary~\ref{cor:app-projected-perturbation} therefore gives the stated
accuracy.
\end{proof}

Let $T_{\rm coef}(m)$ denote the one-time coefficient-table construction
cost, and let $N_{\rm mon}(m)$ bound the stored monomials per raw response
through degree $m$.

\begin{proposition}[Master classical-cost bound for projected response contraction]
\label{prop:algorithm2-classical-cost}
After the empirical raw-response means have been formed, the total coefficient
construction and online iteration cost is
\begin{equation}
    T_{\rm proj}^{\rm cl}(m)
    =
    T_{\rm coef}(m)
    +
    \mathcal{O}\!\left(
        \operatorname{nnz}(\mathbf H)
        n_{\rm it}N_{\rm mon}(m)
        \operatorname{poly}(m)
    \right).
    \label{eq:alg2-classical-master-bound}
\end{equation}
\end{proposition}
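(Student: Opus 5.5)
The bound splits into the one-time table construction and the online iterations; the cost of each iteration is dominated by evaluating the truncated forward map $g_m(t,x^{(n)})$. The first summand $T_{\rm coef}(m)$ is by definition the cost of building the raw coefficient tables, i.e.\ the sparse polynomials $f_{j,k}$ for $k\le m$. Depending on the locality hypothesis, this cost is controlled by Proposition~\ref{prop:coef-construction-finite-degree}, Proposition~\ref{prop:coef-construction-local-strength} or Proposition~\ref{prop:compressed-response-dp}. We simply keep it as a separate term. Everything else is charged to the iteration loop of Algorithm~\ref{alg:projected-contraction}, which runs $n_{\rm it}$ times.

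For a single iteration, I would count the work in three stages.
\begin{enumerate}
\item Evaluate every raw truncated polynomial $\sum_{k=1}^m t^k f_{j,k}(x)$ at the current iterate $x=x^{(n)}$. A stored monomial $x^{\mathbf n}$ of degree $k\le m$ is computed by multiplying at most $k$ coordinates, and the power $t^k$ can be precomputed. Each monomial therefore costs $\mathcal{O}(m)=\operatorname{poly}(m)$ operations. By the definition of $N_{\rm mon}(m)$, raw response $j$ has at most $N_{\rm mon}(m)$ monomials, so its evaluation costs $\mathcal{O}(N_{\rm mon}(m)\,m)$.
\item Combine the raw values through $\mathbf H$, as in Eq.~\eqref{eq:app-real-pk-definition}, to obtain $g_m(t,x)$. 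This costs $\mathcal{O}(\operatorname{nnz}(\mathbf H))$.
\item Form the residual $g_m(t,x)-\widehat g$, rescale by $t^{-1}$, subtract from $x$, and clip coordinatewise. This costs $\mathcal{O}(M)$.
\end{enumerate}

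The one point needing care is to charge stage 1 against $\operatorname{nnz}(\mathbf H)$ rather than $M$. Every raw row is used by at least one nonzero entry of $\mathbf H$, since $\mathbf H$ is invertible and so has no zero column. Hence $M\le\operatorname{nnz}(\mathbf H)$. The same inequality lets $\operatorname{nnz}(\mathbf H)N_{\rm mon}(m)$ absorb the $\mathcal{O}(M)$ and $\mathcal{O}(\operatorname{nnz}(\mathbf H))$ costs of stages 2 and 3, using $N_{\rm mon}(m)\ge1$.

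An alternative is to evaluate the normalized tables directly. Corollaries~\ref{cor:coef-storage-finite-degree} and~\ref{cor:coef-storage-local-strength} give their size as $\operatorname{nnz}(\mathbf H)$ times the per-row monomial count, so this route yields the same $\operatorname{nnz}(\mathbf H)N_{\rm mon}(m)\operatorname{poly}(m)$ per iteration. Either way, multiplying by $n_{\rm it}$ and adding $T_{\rm coef}(m)$ gives Eq.~\eqref{eq:alg2-classical-master-bound}.

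I do not expect a real obstacle, since the argument is bookkeeping under the unit-cost sparse-table model. The step that needs the most care is the justification that $M\le\operatorname{nnz}(\mathbf H)$ and $N_{\rm mon}(m)\ge1$ allow the linear-in-$M$ overheads to be absorbed. A second small point is that the $\operatorname{poly}(m)$ factor must cover both the monomial degree and the accumulation over the $m$ homogeneous degrees.
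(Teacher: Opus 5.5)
Your proposal is correct and matches the paper's own argument. $T_{\rm coef}(m)$ is paid once. Each of the $n_{\rm it}$ iterations evaluates the compressed tables, combined through the nonzero entries of $\mathbf H$, at cost $\mathcal{O}(\operatorname{nnz}(\mathbf H)N_{\rm mon}(m)\operatorname{poly}(m))$, and the $\mathcal{O}(M)$ residual and clipping work is absorbed using $\operatorname{nnz}(\mathbf H)\ge M$. You justify that inequality from the invertibility of $\mathbf H$ (no zero row or column), which the paper asserts without comment. Your raw-then-combine ordering also gives a slightly sharper per-iteration cost, $\mathcal{O}(M N_{\rm mon}(m)\,m+\operatorname{nnz}(\mathbf H))$, which the stated bound dominates.
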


\begin{proof}
The coefficient tables are constructed once.  Combining them through the
nonzero entries of $\mathbf H$ and evaluating the compressed tables costs
$\mathcal{O}(\operatorname{nnz}(\mathbf H)N_{\rm mon}(m)\operatorname{poly}(m))$ per
iteration.  The $\mathcal{O}(M)$ clipping cost is absorbed because
$\operatorname{nnz}(\mathbf H)\ge M$.  Multiplication by $n_{\rm it}$ proves
Eq.~\eqref{eq:alg2-classical-master-bound}.
\end{proof}

The ordered constructions are supplied by
Propositions~\ref{prop:coef-construction-finite-degree}
and~\ref{prop:coef-construction-local-strength}; their compressed storage
bounds are Corollaries~\ref{cor:coef-storage-finite-degree}
and~\ref{cor:coef-storage-local-strength}.  Proposition~\ref{prop:compressed-response-dp}
provides the compressed dynamic-programming alternative.  Explicit
long-support representation cost is tracked separately by
$\mathcal S_{\rm long}$.

The evaluator bounds used above are summarized here:
\begin{center}
\begin{tabular}{lll}
\hline
Construction & One-time construction & Stored-term bound \\
\hline
Ordered, bounded degree
& Eq.~\eqref{eq:Tcoef-fd} & Eq.~\eqref{eq:Npoly-norm-fd} \\
Ordered, local strength
& Eq.~\eqref{eq:Tcoef-ls} & Eq.~\eqref{eq:Npoly-norm-ls} \\
Compressed dynamic programming
& Eq.~\eqref{eq:compressed-dp-complexity}
& Eq.~\eqref{eq:compressed-dp-state-count} \\
\hline
\end{tabular}
\end{center}

\paragraph{Response-matrix and static-list preprocessing costs.}
The following one-time costs are not included in
Eq.~\eqref{eq:alg2-classical-master-bound}.  Under the uniform bounded-overlap
condition~\eqref{eq:uniform-response-overlap-assumption}, define the dictionary input length
$\mathcal S_{\rm dict}:=\sum_{\alpha\in\mathcal A}
|\operatorname{supp}(\mathcal L_\alpha^\dagger)|$ and the long-Hamiltonian part
$\mathcal S_{\rm long}:=\sum_{C:\,|\operatorname{supp}(C)|>S_D}
|\operatorname{supp}(C)|$.  All dissipative candidates and all short-support
Hamiltonian candidates have constant support, so
$\mathcal S_{\rm dict}=\mathcal{O}(M+\mathcal S_{\rm long})$.  In the unweighted
local-strength case, $\Lambda=s\kappa_0=\mathcal{O}(1)$ and
Lemma~\ref{lem:normalized-term-norms} imply $s=\mathcal{O}(1)$ for a nonempty
dictionary, and hence $\mathcal S_{\rm dict}=\mathcal{O}(M)$.  Bounded dual-interaction-graph degree alone
does not bound the length of a direct-response Hamiltonian string, so the input
cost $\mathcal S_{\rm long}$ must be retained in that case.

The response-row bounds in the proof of
Theorem~\ref{thm:uniform-response-inverse} give
$\operatorname{nnz}(\mathbf G)=\mathcal{O}(M)$ and
$\operatorname{nnz}(\mathbf H)=\mathcal{O}(M)$.  If support-to-candidate incidence lists
or the dual-interaction-graph adjacency lists are available, all nonzero entries of
$\mathbf G$ can be generated in $\mathcal{O}(M)$ arithmetic operations.  If these lists
must first be built from the explicit supports, the expected hashing cost is
$\mathcal{O}(\mathcal S_{\rm dict}+M)=\mathcal{O}(M+\mathcal S_{\rm long})$; naive all-pairs
support testing would instead cost $\mathcal{O}(M^2)$.  Ordering dissipative terms by support size and traversing the
constant-depth, constant-branching extension paths in
$\mathbf G_D^{-1}=\sum_{\ell=0}^{S_D}(-U)^\ell$ constructs $\mathbf H$ in
$\mathcal{O}(M)$ arithmetic operations and $\mathcal{O}(M)$ storage.

There are $M$ static raw-response templates.  Explicitly listing every Pauli
$P$ in an averaged template costs
$\mathcal{O}(MS_D4^{S_D}+\mathcal S_{\rm long})=\mathcal{O}(M+\mathcal S_{\rm long})$; sampling
$P$ on demand avoids storing these $4^{S_D}$ choices.  Replicating only the
row/time labels over $n_t$ measured times costs $\mathcal{O}(Mn_t)$, with $n_t=1$ for
Algorithm~\ref{alg:projected-contraction} and $n_t=r$ for
Algorithm~\ref{alg:cheb}.  Once the empirical raw-response means are supplied
as input, applying the sparse $\mathbf H$ costs
$\mathcal{O}(n_t\operatorname{nnz}(\mathbf H))=\mathcal{O}(Mn_t)$.  Including the one-time sparse
response-matrix preprocessing, the single-time method costs
$\mathcal{O}(M+\mathcal S_{\rm long})+T_{\rm proj}^{\rm cl}(m)$, or
$\mathcal{O}(M)+T_{\rm proj}^{\rm cl}(m)$ when the incidence lists are already available
and all explicitly stored supports have constant length.

\begin{corollary}[Accuracy-dependent classical cost]
\label{cor:alg2-M-dependence}
Assume the uniform bounded-overlap condition
Eq.~\eqref{eq:uniform-response-overlap-assumption} and the unit-cost
symbolic-Pauli model of Proposition~\ref{prop:compressed-response-dp}.  With
the canonical contraction parameters, write
$T_{\rm postmean}^{\rm proj}:=T_{\rm proj}^{\rm cl}(m)$ for the cost after
the empirical means have been formed.  Then
\begin{equation}
    T_{\rm postmean}^{\rm proj}
    =
    \widetilde{\mathcal{O}}\!\left(\frac{M}{\epsilon}\right).
    \label{eq:alg2-postmean-accuracy-cost}
\end{equation}
\end{corollary}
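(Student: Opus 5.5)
The plan is to substitute the canonical parameters into the master bound of Proposition~\ref{prop:algorithm2-classical-cost},
\[
T_{\rm proj}^{\rm cl}(m)=T_{\rm coef}(m)+\mathcal{O}\bigl(\operatorname{nnz}(\mathbf H)\,n_{\rm it}\,N_{\rm mon}(m)\operatorname{poly}(m)\bigr),
\]
and track how each factor depends on $\epsilon$. For $T_{\rm coef}$ we use the compressed dynamic program of Proposition~\ref{prop:compressed-response-dp}, and for $N_{\rm mon}$ the storage bound $\sum_{k\le m}(5\Delta_\star)^k$. Theorem~\ref{thm:uniform-response-inverse} gives $\operatorname{nnz}(\mathbf H)=\mathcal{O}(M)$, together with $S_D,\Lambda,L_C=\mathcal{O}(1)$ and hence $\Delta_\star=\max\{1,\lceil\Lambda\rceil\}=\mathcal{O}(1)$. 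Corollary~\ref{cor:app-projected-iteration-count} gives $n_{\rm it}=\mathcal{O}(\log(1/\epsilon))$ at $t=t_\star$, $q=1/2$. Every factor except the Taylor-degree term is therefore $\mathcal{O}(M)\cdot\operatorname{polylog}(1/\epsilon)$.

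The key step is bounding the exponential-in-$m$ factor $(5\Delta_\star)^m$ as a power of $1/\epsilon$. Algorithm~\ref{alg:projected-contraction} picks the smallest $m$ satisfying
\[
L_C\frac{a_\star^{m+1}}{1-a_\star}\le\frac14 t_\star\epsilon(1-q),
\qquad a_\star=\Lambda t_\star=\frac{1}{4(1+6L_C\Lambda)} .
\]
This gives
\[
m\le \frac{\log\bigl(c/\epsilon\bigr)}{\log(1/a_\star)}+1,
\]
with $c$ depending only on $L_C,\Lambda$. Consequently
\[
(5\Delta_\star)^m=\mathcal{O}\bigl((1/\epsilon)^{\gamma}\bigr),
\qquad
\gamma=\frac{\log(5\Delta_\star)}{\log(1/a_\star)} .
\]
We then need $\gamma\le1$, i.e.\ $a_\star\le 1/(5\Delta_\star)$. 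Since $\Delta_\star\le\lceil\Lambda\rceil\le2\Lambda$ (because $\Lambda\ge1$) and $L_C\ge1$ (because $\mathbf H\mathbf G=I$ and $\mathbf G$ has unit diagonal after $\mathbf T$), we get
\[
\frac{1}{a_\star}=4+24L_C\Lambda\ge 4+24\Lambda>10\Lambda\ge5\Delta_\star .
\]
Hence $\gamma<1$ and $(5\Delta_\star)^m=\mathcal{O}(1/\epsilon)$, with the constant absorbed by the $+1$ in $m$.

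It remains to assemble the pieces. With $m=\mathcal{O}(\log(1/\epsilon))$, the first term of Eq.~\eqref{eq:compressed-dp-complexity} is $\mathcal{O}\bigl(M\,m\,(5\Delta_\star)^m\bigr)$ and the second is $\mathcal{O}\bigl(\operatorname{nnz}(\mathbf H)(5\Delta_\star)^m\bigr)$. The online iteration cost is $\mathcal{O}\bigl(M n_{\rm it}(5\Delta_\star)^m\operatorname{poly}(m)\bigr)$. Each of these is $M\cdot\epsilon^{-1}\cdot\operatorname{polylog}(1/\epsilon)=\widetilde{\mathcal{O}}(M/\epsilon)$.

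The main obstacle is the exponent comparison $\gamma\le1$. It relies on the specific constant in $t_\star$ and on a lower bound $L_C\ge1$, so I would state that bound explicitly. If it proved delicate, I would fall back on the cruder bound $\gamma\le\log(10\Lambda)/\log(4+24\Lambda)<1$, which holds for all $\Lambda\ge1$ using only $L_C\ge1$.
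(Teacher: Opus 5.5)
Your proposal is correct and follows the paper's proof. You make the same key comparison $5\Delta_\star\le a_\star^{-1}=4(1+6L_C\Lambda)$ using $\Lambda\ge1$, conclude $(5\Delta_\star)^m=\mathcal{O}(1/\epsilon)$ at the minimal truncation degree, and then assemble the compressed-DP bound, the master bound, $\operatorname{nnz}(\mathbf H)=\mathcal{O}(M)$ and $n_{\rm it}=\mathcal{O}(\log(1/\epsilon))$.

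The only difference is the lower bound on $L_C$. The paper derives $L_C\ge1/4$ from $1=(\mathbf H\mathbf G)_{\alpha\alpha}$ together with the entry bound $|\mathbf G_{j\alpha}|\le4$, and this already suffices because $5(\Lambda+1)\le4+6\Lambda$. Your stronger claim $L_C\ge1$ is true, but your stated reason is incomplete: a unit diagonal plus $\mathbf H\mathbf G=I$ does not force $\|\mathbf H_{\alpha\cdot}\|_1\ge1$ in general. What you need is the unit block-triangular structure of $\widetilde{\mathbf G}$, which gives $\mathbf H_{\alpha\alpha}=\mathbf T_{\alpha\alpha}\ge1$.
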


\begin{proof}
Equation~\eqref{eq:explicit-response-entry-bound}, the fact that the diagonal
entries of $\mathbf T$ are at least one, and
$\mathbf H\mathbf G=I_M$ imply $L_C\ge1/4$.  Indeed, for every $\alpha$,
\[
    1
    =
    \left|(\mathbf H\mathbf G)_{\alpha\alpha}\right|
    \le
    \|(\mathbf H)_{\alpha\cdot}\|_1
    \max_j|\mathbf G_{j\alpha}|
    \le
    4\|(\mathbf H)_{\alpha\cdot}\|_1.
\]
For a nonempty normalized dictionary, Eq.~\eqref{eq:unified-locality-scale}
and Lemma~\ref{lem:normalized-term-norms} imply $\Lambda\ge1$.  Hence,
with $\Delta_\star=\max\{1,\lceil\Lambda\rceil\}$ and
$a_\star=\Lambda t_\star$,
\begin{equation}
    5\Delta_\star
    \le
    5(\Lambda+1)
    \le
    4(1+6L_C\Lambda)
    =
    a_\star^{-1}.
    \label{eq:accuracy-dependent-base-comparison}
\end{equation}
The minimal degree prescribed by
Eq.~\eqref{eq:projected-truncation-degree} therefore satisfies
\begin{equation}
    (5\Delta_\star)^m=\mathcal{O}(1/\epsilon).
    \label{eq:accuracy-dependent-degree-growth}
\end{equation}
Equations~\eqref{eq:compressed-dp-complexity}
and~\eqref{eq:alg2-classical-master-bound}, together with
$\operatorname{nnz}(\mathbf H)=\mathcal{O}(M)$ from
Theorem~\ref{thm:uniform-response-inverse} and
$n_{\rm it}=\mathcal{O}(\log(1/\epsilon))$ from
Corollary~\ref{cor:app-projected-iteration-count}, give
\[
    T_{\rm postmean}^{\rm proj}
    =
    \widetilde{\mathcal{O}}(M/\epsilon).
\]
Finally, Eq.~\eqref{eq:app-projected-sample-count} gives
$N_{\rm proj}=\widetilde{\mathcal{O}}(M/\epsilon^2)$, and empirical-mean aggregation
requires $\mathcal{O}(N_{\rm proj})$ classical operations.
\end{proof}

\begin{corollary}[Polylogarithmic-degree regime]
\label{cor:alg2-polylog-M-dependence}
Under the same assumptions, if $m=\mathcal{O}(\log\log M)$, then
\begin{equation}
    T_{\rm proj}^{\rm cl}
    =M\operatorname{polylog}(M),
    \label{eq:alg2-full-M-polylog}
\end{equation}
up to the online iteration factor $\mathcal{O}(\log(1/\epsilon))$ and fixed local
constants.
\end{corollary}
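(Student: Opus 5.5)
We plug $m=\mathcal{O}(\log\log M)$ into the master bound of Proposition~\ref{prop:algorithm2-classical-cost}, taking $T_{\rm coef}(m)=T_{\rm coef}^{\rm DP}(m)$ from Proposition~\ref{prop:compressed-response-dp}. We use the stored-monomial bound $N_{\rm mon}(m)\le\sum_{k\le m}(5\Delta_\star)^k$ that comes with that construction.

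\textbf{Construction term.} The uniform bounded-overlap condition~\eqref{eq:uniform-response-overlap-assumption} gives $S_D=\mathcal{O}(1)$ and $\Lambda=\mathcal{O}(1)$. Hence $\Delta_\star=\max\{1,\lceil\Lambda\rceil\}=\mathcal{O}(1)$. Theorem~\ref{thm:uniform-response-inverse} gives $\operatorname{nnz}(\mathbf H)\le Mr_0=\mathcal{O}(M)$. These are exactly the hypotheses of Corollary~\ref{cor:compressed-response-dp-polylog}, so $T_{\rm coef}^{\rm DP}(m)=M\operatorname{polylog}(M)$. Concretely, write $c=5\Delta_\star$ and $m\le c_1\log\log M$. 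Then $c^m\le(\log M)^{c_1\log c}$, and the prefactors $(k+1)\le m+1$ only add further $\log\log M$ factors.

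\textbf{Online term.} The remaining term in Eq.~\eqref{eq:alg2-classical-master-bound} is $\operatorname{nnz}(\mathbf H)\,n_{\rm it}\,N_{\rm mon}(m)\operatorname{poly}(m)$. Its factors are bounded as follows:
\begin{itemize}
\item $\operatorname{nnz}(\mathbf H)=\mathcal{O}(M)$;
\item $N_{\rm mon}(m)\le m\,c^m=\operatorname{polylog}(M)$, by the same geometric estimate;
\item $\operatorname{poly}(m)=\operatorname{poly}(\log\log M)$.
\end{itemize}
This leaves $n_{\rm it}$. By Corollary~\ref{cor:app-projected-iteration-count} with $q=q_\star=1/2$, $n_{\rm it}=\mathcal{O}(\log(1/\epsilon))$. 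The statement explicitly excludes this as the online iteration factor, so the online term is $M\operatorname{polylog}(M)$ times $\mathcal{O}(\log(1/\epsilon))$. Adding the construction term gives Eq.~\eqref{eq:alg2-full-M-polylog}. The "fixed local constants" absorb $4^{S_D}$, $\Delta_\star$, and the base $c$. We can also add the one-time sparse preprocessing cost $\mathcal{O}(M+\mathcal S_{\rm long})$, which is $\mathcal{O}(M)$ in the local-strength case.

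\textbf{Main obstacle.} The only delicate point is interpreting the hypothesis $m=\mathcal{O}(\log\log M)$. The statement imposes it as an assumption rather than deriving it from the accuracy-driven prescription~\eqref{eq:projected-truncation-degree}. That prescription would give $(5\Delta_\star)^m=\mathcal{O}(1/\epsilon)$, as in Corollary~\ref{cor:alg2-M-dependence}. So we should state clearly that the corollary holds in the regime where the required truncation degree is at most $\mathcal{O}(\log\log M)$, for example when $\epsilon\ge 1/\operatorname{polylog}(M)$. In that regime the exponential-in-$m$ table sizes are polylogarithmic. Any residual $\epsilon$-dependence beyond $n_{\rm it}$ is then controlled by this assumption rather than by a separate estimate.
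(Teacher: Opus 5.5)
Your proposal is correct and follows the paper's proof. The paper's proof is a one-line combination of Corollary~\ref{cor:compressed-response-dp-polylog} with the master bound of Proposition~\ref{prop:algorithm2-classical-cost} and $n_{\rm it}=\mathcal{O}(\log(1/\epsilon))$. Your explicit estimate $N_{\rm mon}(m)\le m(5\Delta_\star)^m=\operatorname{polylog}(M)$ for the online term spells out a step the paper leaves implicit. Your remark that the hypothesis $m=\mathcal{O}(\log\log M)$ corresponds, under the accuracy-driven truncation prescription, to the regime $\epsilon\ge1/\operatorname{polylog}(M)$ is also accurate.
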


\begin{proof}
Combine Corollary~\ref{cor:compressed-response-dp-polylog} with
Proposition~\ref{prop:algorithm2-classical-cost} and
$n_{\rm it}=\mathcal{O}(\log(1/\epsilon))$.
\end{proof}

For any estimate $\widehat\theta\in\R^M$, choose an
$\ell_\infty$-metric projection onto the physical parameter set,
\begin{equation}
    \widehat\theta^{\rm phys}
    \in
    \operatorname*{arg\,min}_{z\in\Theta_{\rm GKSL}}
    \|z-\widehat\theta\|_\infty.
    \label{eq:optional-physical-projection}
\end{equation}

\begin{corollary}[Stability of the physical projection]
\label{cor:optional-physical-projection}
Let $\theta\in\Theta_{\rm GKSL}$ and suppose
$\|\widehat\theta-\theta\|_\infty\le\epsilon$.  Then every choice in
Eq.~\eqref{eq:optional-physical-projection} is physical and satisfies
\begin{equation}
    \|\widehat\theta^{\rm phys}-\theta\|_\infty
    \le2\epsilon.
    \label{eq:optional-physical-projection-error}
\end{equation}
\end{corollary}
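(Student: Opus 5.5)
The statement to prove is Corollary~\ref{cor:optional-physical-projection}. The plan is a direct triangle-inequality argument that uses only the minimizing property of the projection and the fact that the true parameter is itself feasible.

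First, I would record that every choice $\widehat\theta^{\rm phys}$ in Eq.~\eqref{eq:optional-physical-projection} lies in $\Theta_{\rm GKSL}$ by definition of the argmin, so it is physical. Nonemptiness of the argmin also deserves a remark: $\Theta_{\rm GKSL}$ is the intersection of the compact box $[-1,1]^M$ with the closed set $\{\theta: K(\theta)\succeq0\}$. This set is closed because $K(\theta)$ depends linearly on $\theta$ through Eq.~\eqref{eq:app-real-gksl-coords} and the PSD cone is closed. It is nonempty because it contains $\theta$. Since $z\mapsto\|z-\widehat\theta\|_\infty$ is continuous, a minimizer exists. This justifies that the corollary is not vacuous, though the claim ranges over every minimizer regardless.

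Second, since $\theta\in\Theta_{\rm GKSL}$ is feasible, minimality gives
\[
\|\widehat\theta^{\rm phys}-\widehat\theta\|_\infty\le\|\theta-\widehat\theta\|_\infty\le\epsilon.
\]
The triangle inequality then yields
\[
\|\widehat\theta^{\rm phys}-\theta\|_\infty\le\|\widehat\theta^{\rm phys}-\widehat\theta\|_\infty+\|\widehat\theta-\theta\|_\infty\le2\epsilon,
\]
which is Eq.~\eqref{eq:optional-physical-projection-error}.

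There is no real obstacle here. The only point needing care is the existence of a minimizer (compactness and closedness, as above). Convexity of $\Theta_{\rm GKSL}$ is not needed, and uniqueness is not claimed.
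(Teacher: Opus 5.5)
Your proof is correct and follows essentially the same route as the paper: a minimizer exists by compactness, feasibility of $\theta$ gives $\|\widehat\theta^{\rm phys}-\widehat\theta\|_\infty\le\epsilon$, the triangle inequality gives $2\epsilon$, and physicality follows from membership in $\Theta_{\rm GKSL}$. Your explicit closedness argument, via linearity of $K(\cdot)$ and closedness of the PSD cone, fills in a detail that the paper only asserts when it calls $\Theta_{\rm GKSL}$ compact.
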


\begin{proof}
The set $\Theta_{\rm GKSL}$ is nonempty and compact, so a minimizer exists.
Since $\theta$ is feasible,
\begin{equation}
    \|\widehat\theta^{\rm phys}-\widehat\theta\|_\infty
    \le
    \|\theta-\widehat\theta\|_\infty
    \le\epsilon.
\end{equation}
The triangle inequality gives
Eq.~\eqref{eq:optional-physical-projection-error}.  Membership in
$\Theta_{\rm GKSL}$ implies $K(\widehat\theta^{\rm phys})\succeq0$, so the
projected generator is a valid Lindbladian.
\end{proof}

Thus a physical estimate with target error $\epsilon$ is obtained by running
the learning procedure with target coordinate error $\epsilon/2$ and then
applying Eq.~\eqref{eq:optional-physical-projection}.

Because $K(z)$ is affine in the real coordinates, the optional projection is a
convex semidefinite program and preserves the candidate zero pattern encoded by
$K(z)$.  Its cost is excluded from the post-processing bounds above.

\end{document}